\renewcommand{\epsilon}{\varepsilon}
\newcommand{\nfrac}{\nicefrac}
\newtheorem{theorem}{Theorem}[section]
\newtheorem{lemma}[theorem]{Lemma}
\newtheorem{corollary}[theorem]{Corollary}
\newtheorem*{theorem*}{Theorem}
\newtheorem{remark}[theorem]{Remark}
\newtheorem{problem}{Problem}
\def \be{\begin{equs}}
\def \ee{\end{equs}}
\title{\bf Faster polytope rounding, sampling,    and volume computation via a sublinear  ``Ball Walk''}
   \date{}
\author[1]{ Oren Mangoubi}
\author[2]{Nisheeth K. Vishnoi}
\affil[1]{\small Worcester Polytechnic Institute}
\affil[2]{\small Yale University}
\begin{document}
\maketitle

\begin{abstract}
This paper studies the problem of ``isotropically rounding" a polytope $K \subseteq \mathbb{R}^n$, that is, computing a linear transformation which makes the uniform distribution on the polytope have roughly identity covariance matrix.  
It is assumed that $K \subseteq \mathbb{R}^n$ is defined by $m$ linear inequalities, with guarantee that $rB \subseteq K \subseteq RB$, where $B$ is the unit ball.  
We introduce a new variant of the ball walk Markov chain and show that, roughly, the expected number of arithmetic operations per-step of this Markov chain is $O(m)$ that is sublinear in the input size $mn$ -- the per-step time of all prior Markov chains.
Subsequently, we apply our new variant of the ball walk to obtain a rounding algorithm that succeeds with probability $1-\epsilon$ in $\tilde{O}(m n^{4.5} \mathrm{polylog}(\frac{1}{\epsilon}, \frac{R}{r}))$ arithmetic operations.
This gives a factor of $\sqrt{n}$ improvement on the previous bound of  $\tilde{O}(m n^{5} \mathrm{polylog}(\frac{1}{\epsilon}, \frac{R}{r}))$, which uses the hit-and-run algorithm. 
  Since the cost of the rounding preprocessing step is in many cases the bottleneck in improving sampling or volume computation running time bounds, our results imply that these tasks can also be achieved in roughly $\tilde{O}(m n^{4.5} \mathrm{polylog}(\frac{1}{\epsilon}, \frac{R}{r}) +mn^4 \delta^{-2})$ arithmetic operations for computing a polytope's volume up to a factor $1+\delta$ and  $\tilde{O}(m n^{4.5} \mathrm{polylog}(\frac{1}{\epsilon}, \frac{R}{r})))$ for generating a sample with TV error $\epsilon$ from the uniform distribution on $K$.  
  This improves on the previous bound of $\tilde{O}(m n^{5} \mathrm{polylog}(\frac{1}{\epsilon}, \frac{R}{r})  +mn^4 \delta^{-2})$ for volume computation in the regime where, roughly, $m\geq n^{2.5}$, and improves on the previous bound of $\tilde{O}(m n^{5} \mathrm{polylog}(\frac{1}{\epsilon}, \frac{R}{r}))$ for generating a sample when, roughly, $m \geq n^{1.5}$.  
  Our algorithm achieves this improvement by a novel method of computing polytope membership, where one avoids checking inequalities which are estimated to have a very low probability of being violated.  We believe that this method is likely to be of independent interest for constrained sampling and optimization problems.
\end{abstract}

\newpage
\tableofcontents

\newpage
\section{Introduction}
The task of bringing a polytope into near-isotropic position is an important problem in mathematics and theoretical computer science (TCS).  
In TCS, this problem is closely linked with the widely studied problem of computing a polytope's volume \cite{dyer1991random, lovasz1993random, kannan1997random, lovasz2006simulated, cousins2017efficient}, and often serves as an important preprocessing step for these algorithms.
 For $a>0,$  we say that a convex body $K$ is in $a$-isotropic position if the uniform distribution on $K$ has covariance matrix $\Sigma_K$ and mean $\mu_K$ satisfying $\frac{1}{a^2} I_n \preccurlyeq \Sigma_{{K}} \preccurlyeq a^2 I_n$ and $\|\mu_{{K}}\|_2 \leq \frac{1}{10}a$. 
Formally,  consider the following problem, where $B$ denotes the unit ball centered at the origin:
\begin{problem}[\bf Bringing a polytope into isotropic position] \label{problem:1}
Given a polytope $K := \{x \in \mathbb{R}^n : A x \leq b\}$, with $A \in \mathbb{R}^{m \times n}$ and $b \in \mathbb{R}^m$ such that $rB \subseteq K \subseteq RB$ for some $r,R>0$, generate a matrix $\tilde{\Omega} \in \mathbb{R}^{n\times n}$ and vector $\tilde{\mu}\in \mathbb{R}^n$ such that $\tilde{K} :=\tilde{\Omega}^{-\frac{1}{2}} (K -\tilde{\mu})$ is in 2-isotropic position.
\end{problem}
\noindent
The problems of sampling from the uniform distribution on a polytope and of bringing a polytope into isotropic position are closely related. 
 On the one hand, bringing a polytope into isotropic position can improve the running time of Markov chain-based sampling algorithms \cite{lee2018kannan}.  
 On the other hand, it is known that $n\log(n)$ independent samples from the uniform distribution on a polytope suffices to bring a polytope into $O(1)$-isotropic position \cite{rudelson1999random}.  
 All current Markov chains which are used to sample from the uniform distribution on a polytope defined by $m$ inequalities use at least $mn$ arithmetic operations per Markov chain step to implement, and it is currently an open problem how to improve the number of arithmetic operations to fewer than $mn$ \cite{lee2018kannan}.  
$mn$ is the size of the input and is also the time required to check whether a given point is in $K$ or not.

The main focus of this paper is to develop  Markov chains for sampling that allow us to bypass this $mn$ barrier, and obtain faster algorithms for rounding, sampling, and volume computation.
In particular, we introduce an implementation of the ball walk Markov chain \cite{applegate1991sampling, kannan1997random}, which improves the expected number of arithmetic operations to roughly $O(m)$ operations per ball walk step.  
Our improvement in the per-step complexity applies in the special case when the polytope is in near-isotropic position, and we are given an $O(1)$-warm start in the $n^{-3}$ interior of the polytope.

Key to our results is a new variant of the ball walk that, given an O(1)-warm start $X_0 \in K$, requires only  roughly $O(m)$  expected number of operations per step (after the first step).  
 We then apply a recent result of \cite{lee2018kannan} which says that, starting at any point on the $n^{-3}$ interior of $K$, the ball walk, together with a rejection sampling post-processing step from \cite{kannan1997random}, can generate a sample from the uniform distribution on $K$ in $O(n^{2.5} \log(\frac{1}{\epsilon}))$ ``proper'' ball walk steps (that is, only counting the steps where the ball walk changes position).  
 If $X_0$ is also O(1)-warm, the expected number of steps (both proper and improper) is also $O(n^{2.5} \log(\frac{1}{\epsilon}))$ \cite{kannan1997random}.  
 Multiplying the two, we get that the expected number of operations for our implementation of the ball walk to generate a sample with TV error $\epsilon$, is, roughly speaking\footnote{Since the two random variables may be correlated, we cannot simply multiply their expectations.  Instead, we treat these two expectations separately until the very end of our proof, and then use Markov's inequality to bound each with probability $9/10$.  We then multiply our bounds for the two random variables that we obtained from Markov's inequality to get a bound for the product of the random variables which holds with probability 8/10.}, $O(m \times n^{2.5} \log(\frac{1}{\epsilon}))$.
Therefore, if we re-start the ball walk at the same $O(1)$-warm initial point $X_0$ after we generate each sample, we can use our algorithm to generate $p$ samples that are (conditionally on $X_0$) jointly independent and uniformly distributed on $K$ with TV error $\epsilon$, after roughly $O(p mn^{2.5} \log(\frac{1}{\epsilon}))$ operations. 

Using our implementation of the ball walk to generate $n\log(n)$ samples, we can use results from \cite{rudelson1999random} to compute a sample mean and sample covariance matrix for the uniform distribution on $K$ which allows us to bring any polytope $K$ that is in $15$-isotropic position into $2$-isotropic position with probability $1-\epsilon$, in roughly $\tilde{O}(mn^{3.5} \log(\frac{1}{\epsilon}))$ operations, if we are given an $O(1)$-warm start $X_0$, with $X_0$ in the $n^{-3}$ interior of $K$.
 We use this idea in an iterative manner to obtain a rounding algorithm which can bring any polytope $rB \subseteq K \subseteq RB$ into $2$-isotropic position with probability $1-\epsilon$.
  Specifically, we show the following:
 \begin{theorem}[\bf Main theorem: Bringing a polytope into isotropic position] \label{thm:main_intro}
There exists an algorithm which, given a polytope $K := \{x \in \mathbb{R}^n : A x \leq b\}$, with $A \in \mathbb{R}^{m \times n}$ and $b \in \mathbb{R}^m$ such that $rB \subseteq K \subseteq RB$ for some $r,R>0$, and $\epsilon>0$, generates a matrix $\tilde{\Omega} \in \mathbb{R}^{n\times n}$ and vector $\tilde{\mu}\in \mathbb{R}^n$ such that the polytope $\tilde{K} :=\tilde{\Omega}^{-\frac{1}{2}} (K -\tilde{\mu})$ is in 2-isotropic position with probability at least $1-\epsilon$, in $\tilde{O}(mn^{4.5} \mathrm{polylog}(\frac{1}{\epsilon}, \frac{R}{r}))$ arithmetic operations.
\end{theorem}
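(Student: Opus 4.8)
The plan is to derive Theorem~\ref{thm:main_intro} from a single ``rounding-step'' subroutine and then iterate that subroutine $\tO(n\cdot\polylog(R/r))$ times. The subroutine is the one sketched in the introduction: given a polytope $L$ that is already in $15$-isotropic position, together with an $O(1)$-warm start in its $n^{-3}$-interior, it draws $N=O(n\log n)$ approximately-uniform, approximately-independent samples $Y_1,\dots,Y_N$ from $L$, forms the empirical mean $\hat\mu=\frac1N\sum_i Y_i$ and covariance $\hat\Sigma=\frac1N\sum_i(Y_i-\hat\mu)(Y_i-\hat\mu)^\top$, and outputs the affine map $x\mapsto\hat\Sigma^{-1/2}(x-\hat\mu)$. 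Each sample is produced by running our new ball walk for $\tO(n^{2.5}\log(1/\epsilon))$ steps, since from an $O(1)$-warm start the total number of steps is, up to constants, the number of proper steps, and that is $\tO(n^{2.5}\log(1/\epsilon))$ as recalled in the introduction; each step costs $O(m)$ expected operations after the first, so one sample costs $\tO(mn^{2.5}\polylog(1/\epsilon))$ and all $N$ samples cost $\tO(mn^{3.5}\polylog(1/\epsilon))$. By the empirical-covariance concentration estimate for (near-)isotropic log-concave laws recalled in the introduction, $O(n\log n)$ samples give $(1-\eta)\Sigma_L\preccurlyeq\hat\Sigma\preccurlyeq(1+\eta)\Sigma_L$ and $\|\hat\mu-\mu_L\|$ small for a small constant $\eta$, except with probability $\epsilon'$; this makes $\hat\Sigma^{-1/2}(L-\hat\mu)$ be in $2$-isotropic position. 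To turn the TV-error and per-step guarantees into the stated $1-\epsilon$ success probability I will amplify by $O(\log(1/\epsilon))$ repetitions and, as flagged in the footnote, apply Markov's inequality \emph{separately} to the number of proper steps and to the total number of steps (which are dependent), losing only constants.

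Given this subroutine, the rounding algorithm is the standard iterative scheme. I start from $T_0=\tfrac1r I$, so $B\subseteq T_0K\subseteq\tfrac{R}{r}B$, and maintain an affine map $T_i$ together with a point $x_i$ that is an $O(1)$-warm start in the $n^{-3}$-interior of $L_i:=T_i(K-c_i)$. In phase $i$ I run (a version of) the subroutine on $L_i$ to obtain a new linear map, compose it with $T_i$, and push $x_i$ forward to get $x_{i+1}$. The analysis must establish two invariants: (i) the isotropy constant of $L_i$ decreases — down towards $15$ and then to $2$ — so that after $\tO(n\cdot\polylog(R/r))$ phases the body is in $2$-isotropic position; and (ii) the warm-start and interior conditions survive, which forces each phase to move the transformation only by a spectrally near-identity amount, so that the image of the current warm point stays $O(1)$-warm and off the boundary. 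A union bound over the $\tO(n\cdot\polylog(R/r))$ phases, each run at failure probability $\epsilon/\mathrm{poly}(n,\log\tfrac{R}{r})$, gives overall success probability $1-\epsilon$; multiplying the number of phases by the $\tO(mn^{3.5}\polylog(1/\epsilon))$ per-phase cost yields the claimed $\tO(mn^{4.5}\polylog(1/\epsilon,\tfrac{R}{r}))$ bound.

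The crux, and where I expect the real work to be, is \emph{bootstrapping}: the $O(m)$-per-step guarantee for our ball walk — and the $\tO(n^{2.5})$ mixing bound it is paired with — both presuppose that the body is already near-isotropic and that the walk starts from a deep warm point, which is precisely the state the algorithm is trying to reach. The bulk of the proof should therefore go into (a) the first ``crude'' phases, where only $rB\subseteq K\subseteq RB$ is known: here one must either run the walk on a near-isotropic surrogate (for instance after a cheap preliminary linear map that makes the condition number $\mathrm{poly}(n)$ rather than $R/r$, so the $R/r$-dependence enters only through its logarithm) or argue that the per-step cost and the mixing bound degrade only mildly for $\mathrm{poly}(n)$-conditioned bodies; and (b) verifying carefully that the ``$O(1)$-warm start in the $n^{-3}$-interior'' invariant really is propagated from one phase to the next, i.e.\ that pushing a warm point forward under a near-identity linear map keeps it both warm and interior. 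A secondary technicality is the correlation issue above: per-step and per-sample step counts are not independent over the run, so expectations cannot simply be multiplied, and I would keep the two quantities separate until the final Markov-inequality step.
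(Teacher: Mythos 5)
Your skeleton (a subroutine that takes a $15$-isotropic body with a warm interior start to a $2$-isotropic one using $\tilde O(n)$ samples at $\tilde O(mn^{2.5})$ cost each, iterated $\tilde O(n\log\frac{R}{r})$ times) matches the paper, and your arithmetic, your treatment of the warm-start propagation, and your plan to apply Markov's inequality separately to the correlated step counts are all consistent with what the paper does. But the step you yourself call ``the crux'' is left unresolved, and the two escape routes you propose for it do not work. You iterate affine maps $T_i$ on the \emph{whole} polytope $K$ and assert that ``the isotropy constant of $L_i$ decreases.'' There is no mechanism for this: the sampling subroutine's guarantees (the $\tilde O(n^{2.5})$ speedy-walk mixing bound, and especially the anti-concentration lemma that gives the $O(m)$ per-step cost) are only valid once the body is \emph{already} $O(1)$-isotropic, so you can never afford to run it on $T_0K$, whose condition number is $R/r$. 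Your alternative (a) --- a ``cheap preliminary linear map'' achieving $\mathrm{poly}(n)$ conditioning --- is essentially the rounding problem itself; and (b) --- arguing the bounds ``degrade only mildly'' for $\mathrm{poly}(n)$-conditioned bodies --- fails because the ball-walk mixing time scales with the squared diameter and the per-step cost analysis genuinely needs near-isotropy, so the degradation is polynomial, not mild.

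The paper's resolution, which is the one missing idea, is to deform the \emph{body} rather than only the map: define the nested sequence $K_i := K \cap (1+\nicefrac{1}{n})^i rB$, starting from $K_0 = rB$, which is trivially isotropic. Since $\mathrm{Vol}(K_i) \leq e\,\mathrm{Vol}(K_{i-1})$ and $K_i \subseteq (1+\nicefrac{1}{n})K_{i-1}$, the covariance matrices satisfy $\frac{1}{e}\Sigma_{K_{i-1}} \preccurlyeq \Sigma_{K_i} \preccurlyeq 4e\,\Sigma_{K_{i-1}}$, so the transformation that makes $K_{i-1}$ $2$-isotropic automatically makes $K_i$ $15$-isotropic (Lemma \ref{lemma:isotropic}); one therefore never samples from a body outside the regime where the fast subroutine applies, and the same volume comparison is what makes a uniform point on $\tilde K_{i-1}$ an $O(1)$-warm start for $\hat K_i$ (Lemma \ref{lemma:warmness}). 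This also corrects your invariant (i): the isotropy constant does not monotonically decrease toward $2$; it oscillates between $15$ (after enlarging the body) and $2$ (after resampling), reaching $K$ itself only at the final index $i^\star = n\log_2(\frac{R}{r})$. Without this construction the proof does not close.
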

\noindent
  To prove Theorem \ref{thm:main_intro} (see Theorem \ref{thm:main_AlgorithmSpecific} for the version of Theorem \ref{thm:main_intro} specific to our algorithm), we consider a sequence of convex bodies $K_i := K \cap (1+\nicefrac{1}{n})^i rB$, and bring these convex bodies into isotropic position sequentially, starting from $K_0 = rB$.  
 We are able to do this since one can show that the same linear transformation which brings $K_i$ into 2-isotropic position also brings $K_{i+1}$ into $15$-isotropic position.  
 Since there are $n\log(\frac{R}{r})$ convex bodies $K_i$ in the sequence, our algorithm brings $K$ into $2$-isotropic position in $\tilde{O}(mn^{4.5} \mathrm{polylog}(\frac{1}{\epsilon}, \frac{R}{r}))$ operations.  
 Our rounding algorithm improves the best previous $\tilde{O}(mn^5 \mathrm{polylog}(\frac{1}{\epsilon}, \frac{R}{r}))$ bound of \cite{lovasz2006simulated} for bringing a polytope into isotropic (or just ``well-rounded'') position by a factor of $\sqrt{n}$, making progress on an open problem
 (see \#4 in Section 8 of \cite{cousins2017efficient}).  
We get an improvement of $\sqrt{n}$ rather than $n$ since our bound for the number of operations per ball walk step needs the convex bodies $K_i$ to be kept in isotropic position at each $i$ (requiring us to generate $\tilde{\Theta}(n^2)$ independent samples), while  \cite{lovasz2006simulated} only need to keep their sequence of convex bodies in well-rounded position at each iteration (which they can do using only $\tilde{\Theta}(n)$ independent samples).
  On the other hand, each of our samples requires $n^{1.5}$ fewer operations per sample in expectation: we get a factor of $n$ fewer operations from our improved bound on the expected number of operations per Markov chain step, and an additional factor of $\sqrt{n}$ fewer operations because the bound on the number of ball walk steps on isotropic convex bodies is smaller by a factor of $\sqrt{n}$ than the bound for the hit-and-run Markov chain used in \cite{lovasz2006simulated}. 
  Our bound on the number of operations to put $K$ in isotropic position is therefore smaller by a factor  $\sqrt{n}$ compared to  \cite{lovasz2006simulated}.

\paragraph{Application to volume computation.}
Bringing K into isotropic position allows us to then use the volume computation algorithm of \cite{cousins2017efficient} to compute the volume of $K$ with error $\delta$, in $\tilde{O}(\frac{mn^4}{\delta^2})$ operations after we pre-process $K$ into isotropic position.  
Hence, starting with $K$ far from isotropic position, we can compute the volume of $K$ in roughly $\tilde{O}(mn^{4.5} \log(\frac{R}{r}) + \frac{mn^4}{\delta^2})$ operations:
 \begin{corollary}[\bf Computing the volume of a polytope] \label{thm:intro_volume}
There exists an algorithm which, given a polytope $K := \{x \in \mathbb{R}^n : A x \leq b\}$, with $A \in \mathbb{R}^{m \times n}$ and $b \in \mathbb{R}^m$ such that $rB \subseteq K \subseteq RB$ for some $r,R>0$, and $\epsilon, \delta>0$, computes with probability at least $1-\epsilon$ the volume of $K$ up to a factor of $1+\delta$ in $\tilde{O}(mn^{4.5} \mathrm{polylog}(\frac{1}{\epsilon}, \frac{R}{r}) + \frac{mn^4}{\delta^2}\mathrm{polylog}(\frac{1}{\delta}, \frac{1}{\epsilon}) )$ arithmetic operations.
\end{corollary}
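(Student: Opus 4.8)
The plan is to prove Corollary~\ref{thm:intro_volume} as a reduction: first use the rounding algorithm of Theorem~\ref{thm:main_intro} to bring $K$ into $2$-isotropic position, and then run the isotropic-body volume algorithm of \cite{cousins2017efficient} on the rounded body, tracking the two failure probabilities and the low-order overhead of changing coordinates. Concretely, I would first invoke Theorem~\ref{thm:main_intro} with failure parameter $\epsilon/2$ to obtain, in $\tilde{O}(mn^{4.5}\,\mathrm{polylog}(\tfrac1\epsilon,\tfrac{R}{r}))$ arithmetic operations, a matrix $\tilde{\Omega}$ and a vector $\tilde{\mu}$ so that with probability at least $1-\epsilon/2$ the polytope $\tilde K := \tilde{\Omega}^{-1/2}(K-\tilde{\mu})$ is in $2$-isotropic position.

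Next I would set up the volume computation on $\tilde K$. Since $\tilde K$ is the image of $K$ under an invertible affine map, $\tilde K = \{y : \tilde A y \le \tilde b\}$ with $\tilde A := A\tilde{\Omega}^{1/2}$ and $\tilde b := b - A\tilde{\mu}$; forming these takes $O(mn^2)$ operations (dominated by the rounding step), and $\mathrm{vol}(K) = \det(\tilde{\Omega}^{1/2})\cdot\mathrm{vol}(\tilde K)$, where $\det(\tilde{\Omega}^{1/2})$ is computed once in $O(n^3)$ operations. It therefore suffices to estimate $\mathrm{vol}(\tilde K)$ to within a factor $1+\delta$ and rescale. Because $\tilde K$ is in $2$-isotropic position it satisfies $r'B \subseteq \tilde K \subseteq R'B$ with $R'/r' = O(n)$ --- a convex body in near-isotropic position is sandwiched between balls of polynomially related radii --- so feeding $\tilde K$ to the volume algorithm of \cite{cousins2017efficient} with accuracy $\delta$ and failure probability $\epsilon/2$ (a warm start / interior point being available as the image under $\tilde{\Omega}^{-1/2}$ of the warm start used for the rounding) returns an estimate within $1\pm\delta$ of $\mathrm{vol}(\tilde K)$ with probability at least $1-\epsilon/2$ in $\tilde{O}(\tfrac{mn^4}{\delta^2}\,\mathrm{polylog}(\tfrac1\delta,\tfrac1\epsilon))$ operations; the algorithm's dependence on its own roundedness parameter only contributes $\mathrm{polylog}(n)$ factors. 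Outputting $\det(\tilde{\Omega}^{1/2})$ times this estimate and taking a union bound over the two failure events gives, with probability at least $1-\epsilon$, the volume of $K$ up to a factor $1+\delta$; summing the two running times yields the stated bound.

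Since this is essentially a black-box composition there is no substantial obstacle, and the points needing care are bookkeeping: (i) ensuring that the precondition of \cite{cousins2017efficient} is met with only a $\mathrm{polylog}$ (rather than $\mathrm{poly}$) overhead, which is exactly why the output of Theorem~\ref{thm:main_intro} is taken in isotropic position (so that $R'/r' = O(n)$ and a warm start for $\tilde K$ is immediate), and (ii) checking that the coordinate-change and determinant computations ($O(mn^2 + n^3)$ operations) and the splitting of $\epsilon$ across the two sub-algorithms are absorbed into the claimed complexity and the $1-\epsilon$ guarantee.
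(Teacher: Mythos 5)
Your proposal is correct and follows essentially the same route as the paper: round via the main theorem, pass the isotropic body to the Gaussian-cooling volume algorithm, and rescale by $\det(\tilde{\Omega}^{1/2})$, with the coordinate change and determinant absorbed as lower-order terms. One bookkeeping point to fix: the precondition you should invoke for the $\tilde{O}(mn^4/\delta^2)$ bound is well-roundedness, i.e.\ $\mathbb{E}_{X\sim\pi_{\tilde K}}[\|X\|_2^2] = O(n)$ with an $\Omega(1)$ inscribed ball (which $2$-isotropic position gives directly, after rescaling so the body contains the unit ball as the paper does), not the ball-sandwiching ratio $R'/r' = O(n)$ --- plugging $R/r = O(n)$ into Gaussian cooling's $mn\max(n^2(R/r)^2,n^3)\delta^{-2}$ dependence would yield $mn^5\delta^{-2}$ rather than the claimed $mn^4\delta^{-2}$.
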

\noindent In the regime where $\delta^{-1} = O(1)$ and $m \geq n^3$, the best current algorithm, which uses \cite{lovasz2006simulated} for preprocessing and \cite{cousins2017efficient} for volume computation, gives a bound of $\tilde{O}(mn^{5} \mathrm{polylog}(\frac{1}{\epsilon}, \frac{R}{r}) + \frac{mn^4}{\delta^2}\mathrm{polylog}(\frac{1}{\delta}, \frac{1}{\epsilon}) )$ operations.  
Corollary \ref{thm:intro_volume} improves this bound by a factor of $\sqrt{n}$.  
Moreover, since our result benefits from recent improvements in the bound on the Cheeger constant of an isotropic convex body, Corollary \ref{thm:intro_volume} makes progress towards the open problem of connecting improved bounds on the Cheeger constant to faster volume computation (See section 2.2.3 of \cite{lee2018kannan}).
In table \ref{table:Volume} we give bounds for different algorithms which can be used to compute the volume.   
\paragraph{Application to sampling.}
Preprocessing a polytope into isotropic position is also a bottleneck for the problem of sampling from the uniform distribution on the polytope.  
If we use our rounding algorithm (Theorem \ref{thm:main_intro}) to bring $K$ into $2$-isotropic position, and then use the hit-and-run algorithm to generate a sample from the uniform distribution on $K$, we obtain a sample from the uniform distribution on $K$ with TV error $\epsilon$  in $\tilde{O}(mn^{4.5} \mathrm{polylog}(\frac{1}{\epsilon}, \frac{R}{r}))$ arithmetic operations:
 \begin{corollary}[\bf Sampling from  non-rounded polytope] \label{cor:sampling}
There exists an algorithm which, given a polytope $K := \{x \in \mathbb{R}^n : A x \leq b\}$, with $A \in \mathbb{R}^{m \times n}$ and $b \in \mathbb{R}^m$ such that $rB \subseteq K \subseteq RB$ for some $r,R>0$, and $\epsilon>0$, generates a sample uniformly distributed on $K$ with TV error $\epsilon$, in $\tilde{O}(mn^{4.5} \mathrm{polylog}(\frac{1}{\epsilon}, \frac{R}{r}))$ arithmetic operations.
\end{corollary}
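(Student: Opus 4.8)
The plan is to simply compose our rounding algorithm with a standard sampler. First I would apply Theorem~\ref{thm:main_intro} to $K$ with failure probability $\eps/2$: in $\tilde{O}(mn^{4.5}\,\mathrm{polylog}(\tfrac{1}{\eps},\tfrac{R}{r}))$ arithmetic operations this produces $\tilde\Omega\in\mathbb{R}^{n\times n}$ and $\tilde\mu\in\mathbb{R}^n$ such that, with probability at least $1-\eps/2$, the polytope $\tilde K := \tilde\Omega^{-1/2}(K-\tilde\mu)$ is in $2$-isotropic position. Note $\tilde K$ is again described by $m$ linear inequalities (the transformed rows of $A$), and that given $y\in\tilde K$ the point $\tilde\Omega^{1/2}y+\tilde\mu\in K$ is computed in $O(n^2)$ operations; so it suffices to produce a point whose law is within total variation $\eps/2$ of the uniform distribution on $\tilde K$ and push it through this affine map.

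Second, I would sample from $\tilde K$ by running hit-and-run. Since $\tilde K$ is in $2$-isotropic position, it is well-rounded (it contains a Euclidean ball of radius $\Omega(1)$ and lies in a ball of radius $O(n)$ about the origin), and the mixing-time bound for hit-and-run on a near-isotropic convex body from an $O(1)$-warm start shows that after $\tilde{O}(n^3\,\mathrm{polylog}(\tfrac1\eps))$ steps the walk's law is within total variation $\eps/2$ of uniform on $\tilde K$. An $O(1)$-warm start $X_0$ for $\tilde K$ is available essentially for free: the rounding procedure of Theorem~\ref{thm:main_intro} internally produces points that are $O(1)$-warm for the uniform distribution on its final convex body (which equals $K$ once the enclosing ball contains $K$), and pushing any one of them through $\tilde\Omega^{-1/2}(\cdot-\tilde\mu)$ yields such an $X_0$. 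Each hit-and-run step costs $O(mn)$ arithmetic operations (draw a random line through the current point, intersect it with all $m$ halfspaces to obtain the chord, then sample a point on the chord), so this stage uses $\tilde{O}(mn^4\,\mathrm{polylog}(\tfrac1\eps))$ operations.

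Finally I would combine the two stages. The total number of operations is $\tilde{O}(mn^{4.5}\,\mathrm{polylog}(\tfrac1\eps,\tfrac Rr)) + \tilde{O}(mn^4\,\mathrm{polylog}(\tfrac1\eps)) = \tilde{O}(mn^{4.5}\,\mathrm{polylog}(\tfrac1\eps,\tfrac Rr))$, dominated by the rounding step. For correctness: with probability at least $1-\eps/2$ the rounding succeeds, so $\tilde K$ is $2$-isotropic; conditioned on this event, the hit-and-run output mapped back through $\tilde\Omega^{1/2}(\cdot)+\tilde\mu$ is within total variation $\eps/2$ of the uniform distribution on $K$; hence unconditionally the output is within total variation $\eps/2+\eps/2=\eps$ of uniform on $K$, as required.

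I expect the main obstacle to be the bookkeeping around the warm start. One must verify that the $O(1)$-warmness of the points generated inside the rounding algorithm is preserved by the invertible affine change of coordinates $x\mapsto\tilde\Omega^{-1/2}(x-\tilde\mu)$ (which is immediate, since warmness is a statement about density ratios and is invariant under such maps) and that $\tilde K$ is genuinely in isotropic position, so that the $\tilde{O}(n^3)$ warm-start mixing bound for hit-and-run applies; without such a warm start, the naive $\tilde{O}(n^2(R/r)^2)=\tilde{O}(n^4)$-step bound would give only $\tilde{O}(mn^5)$ operations and no improvement over prior work. Everything else is a routine composition of Theorem~\ref{thm:main_intro} with standard facts about hit-and-run on near-isotropic bodies.
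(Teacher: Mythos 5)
Your proposal is correct and follows essentially the same route as the paper: run the rounding algorithm (the paper invokes its algorithm-specific version, Theorem \ref{thm:main_AlgorithmSpecific}, whose output explicitly includes the $O(1)$-warm start $X_0$ you argue is available from the rounding internals), then run hit-and-run from that warm start and map back through the affine transformation. The only cosmetic difference is that the paper first intersects $K^\dagger$ with a ball of radius $O(\sqrt{n}\log(\frac{1}{\epsilon}))$ so as to apply the diameter-based form of the Lov\'asz--Vempala mixing bound (paying an extra $O(\epsilon)$ in total variation), whereas you invoke the second-moment form directly on the isotropic body; both yield the same $\tilde{O}(n^3)$ step count and the identical final bound.
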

\noindent In the regime where $m \geq n^2$, the current best bound is $\tilde{O}(mn^{5}\mathrm{polylog}(\frac{1}{\epsilon}, \frac{R}{r}))$ operations (if one uses a rounding pre-processing step from \cite{lovasz2006simulated}, and then the hit-and-run Markov chain \cite{lovasz2006hit}) or, depending on the matrix multiplication exponent $2< \omega \leq 3$, $\tilde{O}(mn^{\omega +2.5}\log(\frac{1}{\epsilon}))$ for the John walk \cite{chen2018fast}.  
Corollary \ref{cor:sampling} improves on \cite{lovasz2006simulated} by a factor of $\sqrt{n}$, and on \cite{chen2018fast} by a factor of $n^{\omega -2}$.  However, for smaller values of $m$ algorithms such as Riemannian HMC \cite{lee2017convergence} can be faster (see table \ref{table:RandomWalks}).

\begin{table}[H]
\begin{center}
\begin{small}
\begin{tabular}{|c|c|c|}
\hline 
Algorithm
&  number of arithmetic operations
\tabularnewline
\hline
 Ball walk + rounding \cite{kannan1997random}
  &    $m n^{6} \delta^{-2}$%

\tabularnewline
\hline
 Ball walk + rounding \cite{lee2017eldan}  + Gaussian cooling
  &    $m n^{5.5}   + mn^4 \delta^{-2}$
\tabularnewline
\hline 
Hit-and-run + simulated annealing  \cite{lovasz2006simulated}   
 &    $m n^{5} \delta^{-2}$
      \tabularnewline
\hline 
Riemannian HMC \cite{lee2017convergence}
 &    $m^2 n^{\omega-\frac{1}{3}}  \delta^{-2}$
      \tabularnewline
\hline 
Gaussian cooling \cite{cousins2015bypassing}
 &    $m n\times \max(n^2 (\frac{R}{r})^2, n^3) \delta^{-2}$
  \tabularnewline
\hline 
\textbf{Algorithm 1 \& 2 [our paper]} 
 + Gaussian cooling
 &    $m n^{4.5} + m n^4 \delta^{-2}$
\tabularnewline
\hline 
\end{tabular}
\end{small}
\end{center}
\vspace{-3mm}
\caption{{\small Bounds on the number of arithmetic operations to compute the volume of a polytope $K$ with $rB \subseteq K \subseteq RB$ (logarithmic factors of $r, R, \epsilon, d, m$ are not shown). Here $\omega$ is the matrix multiplication exponent, currently $\omega \approx 2.37$.  (note: Gaussian cooling assumes $(r,R)$-well rounded, which is somewhat weaker than $rB \subseteq K \subseteq RB$)}}
\label{table:Volume}
\end{table}

\vspace{-2mm}
\begin{table}[H]
\begin{center}
\begin{small}
\begin{tabular}{|c|c|c|}
\hline 
Algorithm
&  number of arithmetic operations
\tabularnewline
\hline
 Ball walk + rounding \cite{lee2017eldan}
  &   $mn^{5.5}$
\tabularnewline
\hline 
Hit-and-run + rounding  \cite{lovasz2006simulated}
 &    $mn^{5}$
            \tabularnewline
\hline 
Dikin walk \cite{kannan2012random}
    
 &    $m^2 n^{\omega+1}$
\tabularnewline
\hline 
John walk + Dikin walk initialization \cite{chen2018fast}
 &    $m n^{\omega+2.5}$
   \tabularnewline
\hline 
Geodesic walk* \cite{lee2017geodesic}
 &   $m^2 n^{\omega+\frac{3}{4}} $\\ & ($m^2 n^{\omega-\frac{1}{4}} \log(\beta)$ assuming a $\beta$-warm start)
   \tabularnewline
\hline 
Riemannian HMC* \cite{lee2017convergence}
 &    $m^2 n^{\omega+\frac{2}{3}} $\\
 & ($m^2 n^{\omega-\frac{1}{3}} \log(\beta)$ assuming a $\beta$-warm start)
 \tabularnewline
\hline 
Vaidya walk + Dikin walk initialization \cite{chen2018fast}
 &    $m^{1.5} n^{\omega+\frac{3}{2}}$
  \tabularnewline
\hline 
\textbf{Algorithm 1 \& 2 [our paper]} 
 &    $m n^{4.5}$ 
\tabularnewline
\hline 
\end{tabular}
\end{small}
\end{center}
\vspace{-3mm}
\caption{\small Bounds on the number of arithmetic operations to generate one sample from the uniform distribution on a polytope $K$ with $rB \subseteq K \subseteq RB$ (logarithmic factors of $r, R, \epsilon, d, m$ are not shown).
The matrix multiplication exponent $\omega$ is currently $\omega \approx 2.37$ for the best known matrix multiplication algorithm. (*Note: RHMC and Geodesic walk assume a warm start.  While one can obtain a $(\nicefrac{R}{r})^n$-warm start in our setting by initializing from the uniform distribution on $rB$, this causes the running times of RHMC and Geodesic walk to gain an additional factor of $n$.)}
\label{table:RandomWalks}
\end{table}
\noindent
Note that in table \ref{table:RandomWalks} the Dikin, Geodesic and Vaidya walks, as well as Riemannian HMC, have bounds with dependence on $m$ of at least $m^{1.5}$. 
 In particular, in the regime $m>n^2$, they have slower bounds than the algorithms which have linear dependence on $m$, including our algorithm as well as the hit-and-run algorithm of \cite{lovasz2006simulated} and the ball walk of \cite{lee2017eldan}.  
Finally, note also that, given the current bound $\omega \approx 2.37,$ our algorithm has faster bounds when compared to the John walk.

\paragraph{Key technical ideas.} The algorithmic techniques we use in our variant of the ball walk are inspired from stochastic gradients, where one queries an oracle by subsampling.  
Instead of subsampling a small subset of component gradients, at each step of the ball walk we check a small subset of the inequalities defining our polytope.  
The challenge is in determining which inequalities are important at any given time.
The main difficulty lies in the fact that the ball walk is much more likely to violate inequalities corresponding to nearby hyperplanes than far-away ones.
The reason is that, if the Markov chain's steps are uniformly distributed, by the isoperimetric inequality \cite{osserman1978isoperimetric} (and convexity of the polytope) the Markov chain will in expectation spend at least roughly half of its time a distance of $\frac{1}{n}$ from the boundary of the polytope, if the polytope is in near-isotropic position.  
 Hence, in expectation, the Markov chain will be a distance of $\frac{1}{n}$ from one or more faces of the polytope at least half of the time and one cannot simply check a uniform random subset of these inequalities.  
Instead, our Markov chain estimates the distance to each inequality, and checks only those inequalities which have a non-negligible probability of being violated. 
This  idea turns out to be challenging to execute and we expand on it in Section \ref{sec:technical-overview}. 
 We believe our method of checking membership may be of independent interest to constrained sampling and optimization problems.

\section{Our algorithms}
\paragraph{Notation.}
For any finite-volume subset $S \subseteq \mathbb{R}^n$, let  $\Sigma_S$ denote the covariance matrix of the uniform distribution on $S$ and let $\mu_S$ denote the mean of the uniform distribution on $S$.
Let $\pi_K$ denote the uniform distribution on   a convex body $K$. 
Let $\hat{\pi}_K \equiv \hat{\pi}_K^\eta$ denote the speedy distribution for step size $\eta$, where the speedy distribution is the stationary distribution of the proper steps of the ball walk with step size $\eta$.  
The Markov chain formed by the proper steps of the ball walk is called the speedy walk.
A convex body $K$ is said to be $a$-isotropic if $\frac{1}{a^2} I_n \preccurlyeq \Sigma_K \preccurlyeq a^2 I_n$.  
If, furthermore, $\|\mu_K\|_2 \leq \frac{1}{10} a$,  we say that $K$ is in $a$-isotropic position.
A convex body $K$ is said to be $(r,\mathfrak{R})$-rounded if $\mathbb{E}_{X\sim \pi_K}[\|X\|_2^2] \leq  \mathfrak{R}^2$ and $K$ contains a ball of radius $r$. 
 A convex body which is $O(1)$-isotropic is also $(O(1),O(\sqrt{n}))$-rounded, although the converse is not true \cite{lovasz2006fast}.  If a convex body is $(r, \mathfrak{R})$-rounded with $\frac{\mathfrak{R}}{r} = O(\sqrt{n})$, we say it is \textit{well-rounded}.
For any subset $S\subseteq \mathbb{R}^n$ and any point $x\in \mathbb{R}^n$, let $\mathrm{dist}(x, S) := \mathrm{\inf_{y\in S}} \|x-y\|_2$ denote the distance from $x$ to the subset $S$ and let $\partial S$ denote the boundary of the subset $S$.
Let $B(x,r)$ denote the Euclidean ball with center $x$ and radius $r$, and denote the unit ball by the shorthand notation $B:= B(0,1)$.
We say that a probability distribution $\mu:\mathbb{R}^n \rightarrow \mathbb{R}$ (or a random variable with distribution $\mu$) is $\beta$-warm with respect to a probability distribution $\nu:\mathbb{R}^n \rightarrow \mathbb{R}$ if $\frac{\mu(x)}{\nu(x)}\leq \beta$ for all $x \in \mathbb{R}^n$. 
 We denote the probability distribution of a random variable $X$ by $\mathcal{L}(X)$.  Finally, we define the ball walk Markov chain $\tilde{X}_0,\tilde{X}_1\ldots$ on $K$ with initial point $\tilde{X}_0 \in K$ and step size $\eta>0$, by the recursion $\tilde{X}_{i+1} = \tilde{X}_{i} + \eta \xi_i$ if $\tilde{X}_{i} + \eta \xi_i \in K$ and $\tilde{X}_{i+1} = \tilde{X}_{i}$ otherwise, where $\xi_0, \xi_1,\ldots$ are iid uniform on the unit ball.

\paragraph{Algorithm 1 (Sampling).} Algorithm \ref{alg:BallModified} generates independent samples approximately uniform on a convex body $K := \{x \in \mathbb{R}^n : A x \leq b\} \cap \rho B$ for some $\rho>0$. 
 It has two main components: a Markov chain obtained as a subsequence of the ball walk which generates samples from the ``speedy distribution", and a rejection sampling method which obtains uniform samples from these ``speedy distributed" samples. 
  The ``While" loop generates the ball walk $X_1,\ldots $. 
   To determine whether a ball walk proposal is inside $K$ without checking all $m$ inequalities at each step, every time that the 
 \begin{algorithm}[H]
\caption{Sampling (Modified speedy walk) \label{alg:BallModified}}
\flushleft
\vspace{-4mm}
\textbf{input:} $A \in \mathbb{R}^{m \times n}$, $b \in \mathbb{R}^m$, $\rho>0$, with $\frac{1}{10} B \subseteq K$ where $K := \{x \in \mathbb{R}^n : A x \leq b\} \cap \rho B$\\
\textbf{input:}  step size $\eta>0$, tolerance $\alpha>0$, $\mathcal{I}>0$, maximum number $i_{\mathrm{max}}$ of proper+improper steps \textbf{input:}   Initial point $Y_0 \in K := \{x \in \mathbb{R}^n : A x \leq b\} \cap \rho B$, $\textrm{Modified} \in \{\textrm{ON}, \textrm{OFF}\}$\\
 \textbf{output:} $p$ approximately independent samples from the uniform distribution on $K$.
\begin{algorithmic}[1]
\For{$j = 1$ to $m$} \Comment{Initialization: compute distance to all $m$ hyperplanes}
\State Set $h_j = b_j - A_j X_0$
\EndFor
\State Sort the $h_j$'s in increasing order, and denote the ordered set of $h_j$'s by $H$
\State Set $i=0$, $s=0$, and $k=1$
\While{$k\leq p$}
\State Set $X_0= Y_0$
\While{$s \leq \mathcal{I}$ and $i\leq i_{\mathrm{max}}$}
\State Set $i \leftarrow i+1$, and set $\mathsf{K} = \textrm{True}$
\State Sample $\xi_i \sim \mathrm{unif}(B(0,1))$
\State Set $\hat{X}_{i+1} = X_i + \eta \xi_i$ \Comment{Ball walk proposal}
\If{\textrm{Modified} = \textrm{ON}} \Comment{Determining which inequalities to check}
\State Set $j^\star$ to be the largest value of $j$ such that $h_j < \alpha \frac{\eta}{\sqrt{n}} \times i$
\For{$j=1$ to $j^\star$}  \Comment{Compute distance to possibly-nearby hyperplanes}
\State Set $h_j = b_j - A_j X_i$ 
\State Insert $h_j$ into $H$ such that $H$ remains in increasing order
\If{$h_j < 0$ and  $\mathsf{K} = \textrm{True}$}
\State Set $\mathsf{K} = \textrm{False}$
\EndIf
\EndFor
\If{$\|\hat{X}_{i+1}\|_2 > \rho$}   \Comment{Check if in ball $\rho B$}
\State Set $\mathsf{K} = \textrm{False}$
\EndIf
\Else  \Comment{Conventional membership query, if modifications ``turned off"}
\State Check if $\hat{X}_{i+1}$ satisfies all $m$ inequalities and is in $\rho B$, and if not set $\mathsf{K} \leftarrow \textrm{False}$
\EndIf
\If{$\mathsf{K}= \textrm{True}$}
\State Set $Y_{s} = X_{i}$   \Comment{Speedy walk step}
\State Set $X_{i+1} = \hat{X}_{i+1}$
\State Set $s \leftarrow s+1$
\Else
\State Set $X_{i+1} = X_i$
\EndIf
\EndWhile
\State Set $Z_k = \frac{2n}{2n-1} Y_{s}$
\If{$Z_k \in K$}
\State ``accept" $Z_k$ and set $k \leftarrow k+1$
\Else
\State ``reject" $Z_k$  \Comment{rejection sampling to get uniform distribution}
\EndIf
\EndWhile
\State \textbf{Output:} Samples $Z_1,\ldots, Z_p$ which are approximately uniformly distributed.  Output these samples if $i \leq i_{\mathrm{max}}$.  Otherwise, output ``Failure".
\end{algorithmic}
\end{algorithm}
\noindent algorithm checks an inequality $A_j x \leq b$, it stores in memory the distance $h_j$ to the corresponding hyperplane $H_j$. 
   The algorithm then waits $\frac{h_j}{\alpha \nfrac{\eta}{\sqrt{n}}}$ steps until re-computing that inequality, where $\alpha>0$ is a parameter set by the user. 
    The idea is that, since with high probability the ball walk makes steps of size $O(\nfrac{\eta}{\sqrt{n}})$  in  the direction of the hyperplane $H_j$, the ball walk is unlikely to  propose a step which crosses $H_j$ before taking $\frac{h_j}{\alpha \nfrac{\eta}{\sqrt{n}}}$ steps. 
     This allows us to ensure that our implementation of the ball walk remains inside $K$ with high probability.

The ball walk is run until a fixed number of proper steps $Y_s$ are made. 
 The Markov chain $Y_1, Y_2,\ldots$ formed by the proper steps of the ball walk is called the ``speedy walk".  
 Since fast mixing bounds are available for the speedy walk but not for the ball walk, we generate our samples from the speedy walk, that is, we run the ball walk for a fixed number $\mathcal{I}$ of proper steps.  
 This gives us a sample $Y_\mathcal{I}$ approximately from the stationary distribution of the speedy walk. 
 Unfortunately, the 
 \begin{algorithm}[H]
\caption{Rounding} \label{alg:rounding}
\flushleft
\textbf{input:} $A \in \mathbb{R}^{m \times n}$, $b \in \mathbb{R}^m$\\
\textbf{input:} $r,R>0$ such that $r B\subseteq K \subseteq R B$, where $K := \{x \in \mathbb{R}^n : A x \leq b\}$\\
\textbf{input:} $p\in \mathbb{N}$, $\epsilon>0$, $\textrm{Modified} \in \{\textrm{ON}, \textrm{OFF}\}$
\begin{algorithmic}[1]
\State Set $\hat{\Sigma}_0 = r^2 I_n$ and $\hat{\mu}_0=0$
\State Set $i^\star= n\log_2(\frac{R}{r})$
\For{$k=1$ to $p$}
\State Sample $Z_k$ from the uniform distribution on $B$
\If{$\mathrm{dist}(Z_k, \partial B) \geq n^{-3}$}
\State Set $\hat{X}_0 = Z_k$
\EndIf
\EndFor
\For{$i = 1$ to $i^\star-1$}
\State define $K_i := (1+\nicefrac{1}{n})^i rB \cap K$ (just a definition, no computation here)
\State Use Algorithm \ref{alg:BallModified} with parameter ``$\mathrm{Modified}$" and initial point $X_0$ to generate $p$ points $Z_1,\ldots, Z_p$ approximately from the uniform distribution on $\hat{K}_i \cap 20 \sqrt{n}\log(\frac{40 n^2}{\epsilon})B$, where $\hat{K}_i := \hat{\Sigma}_i^{-\frac{1}{2}} (K_i - \hat{\mu}_i)$ and is represented by the inequalities with matrix $A \hat{\Sigma}_i^{\frac{1}{2}} \in \mathbb{R}^{m \times n}$, and vector $b -A \hat{\mu}_i \in \mathbb{R}^m$.
\State Set $\hat{\mu}_{i+1} = \frac{1}{p}\sum_{j=1}^p Z_i + \hat{\mu}_i$
\State Set $\hat{\Sigma}_{i+1} = [\frac{1}{p} \sum_{j=1}^p (Z_i-\mu_i)^\top (Z_i-\mu_i)] \hat{\Sigma}_{i}$
\State Set $\textrm{Interior} = \textrm{False}$
\While{ $\textrm{Interior} = \textrm{False}$} \Comment{generate a starting point, uniform on the ``$n^{-3}$-interior" of $\hat{K}_i$}
\State Use Algorithm \ref{alg:BallModified}  with parameter ``$\mathrm{Modified}$" and initial point $X_0$ to a generate a single sample $\hat{X}_0$ approximately from the uniform distribution on $\hat{K}_i \cap 20 \sqrt{n}\log(\frac{40 n^2}{\epsilon})B$.
\State Set $\hat{X}_0' \leftarrow \hat{\Sigma}_{i+1}^{-\frac{1}{2}} \hat{\Sigma}_{i}^{\frac{1}{2}}(\hat{X}_0 + \hat{\mu}_i - \hat{\mu}_{i+1})$
\If{$\mathrm{dist}(\hat{X}_0', \partial [\tilde{K}_i \cap 20 \sqrt{n}\log(\frac{40 n^2}{\epsilon})B]) \geq n^{-3}$, where $\tilde{K}_{i} := \hat{\Sigma}_{i+1}^{-\frac{1}{2}}(K_{i}-  \hat{\mu}_{i+1})$}
\State Set $X_0 = \hat{X}_0'$
\State Set  $\textrm{Interior} = \textrm{True}$
\EndIf
\EndWhile
\EndFor
\State \textbf{output:}  $\hat{\Sigma}_{i^\star}, \hat{\mu}_{i^\star}, X_0$
\end{algorithmic}
\end{algorithm}

\noindent  speedy walk does not have uniform stationary distribution; the samples from the speedy walk have a different distribution called the ``speedy distribution". 
 To obtain uniformly distributed samples from our speedy-distributed samples we use a rejection sampling method, Algorithm 4.15 from \cite{kannan1997random} (reproduced in our Algorithm \ref{alg:BallModified} as Steps 35-40).  
 To obtain $p$ independent samples, we run the speedy walk $p$ times starting at the same initial point $Y_0$ but using different independent random Gaussian vectors $\xi_i$ each time (this is the outer ``For" loop). 
 The parameter ``$\textrm{Modified}$" can be set to ``$\textrm{ON}$" or ``$\textrm{OFF}$".  If it is ``$\textrm{ON}$" we use our new implementation of the ball walk, while if it is ``$\textrm{OFF}$" we use the usual implementation where all $m$ inequalities are checked at each step. 
  The purpose of the parameter ``$\textrm{Modified}$" is only to simplify the exposition of our proofs; in practice we always set``$\textrm{Modified}=\textrm{ON}$."

\paragraph{Algorithm 2 (Rounding).} 
Using Algorithm \ref{alg:BallModified} as a subroutine, we obtain an algorithm (Algorithm \ref{alg:rounding}) for bringing a polytope into isotropic position:

The goal of Algorithm \ref{alg:rounding} is to inductively bring a sequence of convex bodies $K_1\subseteq K_2\subseteq \cdots$, where $K_i := (1+\nicefrac{1}{n})^i rB \cap K$, into isotropic position, starting with $K_1$. 
At each iteration of the ``For" loop, we use Algorithm \ref{alg:BallModified} to generate samples from the uniform distribution on a convex body $\hat{K}_i$ which is an affine transformation of $K_i$ in $15$-isotropic position obtained at the previous step of the ``For" loop (Step 11).  
Using these samples, Algorithm \ref{alg:BallModified} computes a sample covariance matrix and mean for $\hat{K}_i$ (Steps 12 and 13), which allows it to compute an affine transformation that puts $K_i$ into $2$-isotropic position as ``$\tilde{K}_i$"  and $K_{i+1}$ into $15$-isotropic position as ``$\hat{K}_{i+1}$". 
 Steps 15-21 generate a point $X_0$ which is O(1)-warm with respect to the uniform distribution and in the $n^{-3}$-interior of $\tilde{K}_i$. 
  The point $X_0$,  which is also $O(1)$-warm with respect to the uniform distribution on  $\hat{K}_{i+1}$ and in its  $n^{-3}$-interior, is then used in the next iteration of the ``For" loop as an initial point when Algorithm \ref{alg:BallModified} is used to generate samples from the uniform distribution on $\hat{K}_{i+1}$.

\section{Technical overview of our main result: Theorem \ref{thm:main_intro}}\label{sec:technical-overview}
\paragraph{Rounding polytopes via sampling.}
\medskip
Most algorithms which bring a polytope into isotropic position work by generating independent or near-independent samples which are approximately uniformly distributed in the polytope.  
These samples allow one to compute the sample mean and sample covariance matrix for the polytope.  In \cite{rudelson1999random} it was shown that $n\log(n)$ samples suffice to bring a polytope into isotropic position.  
However, one is still left with the problem of generating independent uniform samples from the polytope.  
Typically this is done by running a Markov chain on the polytope whose stationary distribution is equal (or in some sense close to) the uniform distribution in the polytope.  
However, the number of steps for which one must run the Markov chain to obtain uniform independent samples in many cases itself depends on the extent to which the polytope is isotropic, or the extent to which it is rounded.\footnote{While there are some algorithms such as the Dikin walk which do not depend on how isotropic the polytope is, these do not currently provide the fastest methods of bringing a polytope $rB \subseteq K \subseteq RB$ into isotropic position.} 

\smallskip
Because of this, most rounding algorithms start with the ball contained in the polytope (which is very easy to put in well-rounded position), and gradually deform the polytope at each iteration (for instance by considering the intersection of the polytope with a ball of increasing radius).  
One alternates between sampling steps where one samples from the convex body, and steps where one uses these samples to compute an affine transformation which keeps the convex body well-rounded (for example, this transformation can be achieved by computing the sample covariance matrix).  
For instance, this is the case for the ``ball walk'' Markov chain used in \cite{kannan1997random}, and the ``hit-and-run'' Markov chain used in \cite{lovasz2006simulated}.  
In particular, the algorithm of \cite{lovasz2006simulated} requires only $n \log(n)$ samples to round the convex body.  
Key to this is that the hit-and-run Markov chain does not require isotropic position but rather only that the polytope be well-rounded. 
This fact was used in \cite{lovasz2006simulated}, together with a ``pencil construction," to provide a rounding algorithm where one computes a well-rounded polytope at $n\log(\frac{R}{r})$ iterations each using $\log(n)$ samples, and using $n$ samples to bring the polytope into isotropic position only every $\log(n)$ iterations.  
Since the hit-and-run Markov chain requires $n^3$ steps to generate a uniform sample, they require roughly $n^4 \log(n)\log(\frac{R}{r})$ Markov chain steps to put a polytope into isotropic position.  
If the polytope is defined by $m$ inequalities, the hit-and-run algorithm uses $mn$ arithmetic operations to compute polytope membership at each step of the Markov chain, giving a bound of $mn \times n^4 \log(n))\log(\frac{R}{r})$ arithmetic operations to round the polytope; this is currently the fastest running time bound for rounding this class of polytopes.
 
  \smallskip
 This $mn$ cost of computing each step of the Markov chain is a feature of all current Markov chain sampling algorithms on polytopes defined by $m$ inequalities \cite{lee2018kannan}. 
  However, one can imagine that there may be ways of reducing the cost of computing polytope membership. 
   One approach is to use a Markov chain called ``coordinate hit-and-run" \cite{haraldsdottir2017chrr}.  
   This algorithm works in the same way as the usual hit-and-run algorithm, except that it only takes steps in (random) coordinate directions.  
   Hence,  checking each inequality takes only $O(1)$ arithmetic operations, meaning that each step of the Markov chain would roughly require only $m$ operations.  
   Unfortunately, since there are as of yet no polynomial-in-dimension mixing time bounds for coordinate hit-and-run, one cannot currently use coordinate hit-and-run to obtain better running time bounds for rounding.

 \paragraph{A first attempt.}
 As an alternative approach to coordinate hit-and-run one might consider using the stochastic gradient technique to reduce the cost of computing each step of a Markov chain which stays inside a polytope.  
 For instance, one might attempt to apply stochastic gradients to a Markov chain such as the Dikin walk which, instead of computing polytope membership, makes use of the log-barrier function of the polytope to remain inside the polytope.  
 The log-barrier at any point in the Markov chain is given by $\phi(x) = -\sum_{j=1}^m \log(A_j x - b_j)$ with Hessian $\nabla^2 \phi(x) = -\sum_{j=1}^m \frac{A_j^\top A_j}{(A_j x - b_j)^2}$ (here $A_j$ is a row-vector). 
  If $m$ is large, one might try to estimate the gradient, or in the case of the Dikin walk the Hessian, of the log-barrier function by taking a small subset of the polytope's inequalities and using these to estimate the sum.  
  Unfortunately this ``stochastic Hessian" gives a very bad approximation for points which are near a face of the polytope, since the term $\frac{1}{(A_j x - b_j)^2}$ corresponding to the nearest face can be much larger than the combined contributions of the terms corresponding to all the other faces, and a small subsample of the polytope inequalities will most likely not include the single very large term.  
  Since a uniform random sampling is unlikely to include this overwhelmingly large term in the sum, this ``stochastic Hessian" version of the Dikin walk is likely to very quickly leave the polytope.
 
  \smallskip
One might instead consider an approach related to stochastic gradients, but for Markov chains such as the ball walk which, instead of computing a barrier function, compute polytope membership at each step.  
To determine polytope membership, one typically checks all $m$ inequalities at each step of the Markov chain. 
 One may instead consider checking only a small uniformly random subset of $\frac{m}{n}$ of these inequalities at each step.  
 Unfortunately, this approach cannot work  if one wishes to sample from the uniform distribution on a polytope.  
 The reason is that, if the Markov chain's steps are uniformly distributed, by the isoperimetric inequality  \cite{osserman1978isoperimetric} (and convexity of the polytope) the Markov chain will in expectation spend at least half of its time a distance of $\frac{1}{n}$ from the boundary of the polytope.  
 Hence, in expectation, the Markov chain will be a distance of $\frac{1}{n}$ from one face of the polytope at least half of the time. 
 For a ball walk with optimal step size of $\eta = \Theta(\frac{1}{\sqrt{n}})$, with high probability the ball walk takes a step of size $\frac{1}{n}$ in the direction of this face.  
 Even though the ball walk may have an $\Omega(1)$ probability of proposing a step which violates the inequality corresponding to the closest face,  if one only checks a small random subset of size $\frac{m}{n}$ of the inequalities one is likely to miss checking the inequality corresponding to this face, which in many cases would cause the Markov chain to leave the polytope with probability $\Omega(1)$ at each step (this is the case, for instance, if the polytope is a cube).  
 Hence, we cannot limit ourselves to checking a random subsample of the inequalities.

\paragraph{Our method of computing polytope membership.} 
  Recall that the isoperimetric inequality \cite{osserman1978isoperimetric}
  implies that any Markov chain whose stationary distribution is close to the uniform distribution on the polytope will spend on average at least $\frac{1}{n}$ of its time a distance $\frac{1}{n}$ from the boundary of the polytope.  
  This suggests that if we wish to compute only a small subsample of the polytope inequalities at each step of the Markov chain, we must make sure that our subsample includes all those inequalities whose corresponding face is close to the current Markov chain step.  
  While it is possible for us to compute the distance $h_j(i) := A_j X_i - b_j$ between the current point in the Markov chain $X_i$ to the hyperplane $H_j$ corresponding to each inequality $(A_j, b_j)$, we do not wish to compute this distance $h_j(i)$ for each $j$ at every step $i$ in the Markov chain, since this takes the same $mn$ operations needed to check each inequality.  
  To get around this problem, rather than computing each distance, we instead only compute the distance to any given hyperplane at a small fraction of the steps.  
  To determine which $h_j(i)$ should be computed at any given step, we estimate a high-probability lower bound $L_j(i)$ for $h_j(i)$ and only compute $h_j(i)$ if the Markov chain is likely to propose a step violating the $j$th inequality, that is, if this lower bound is $L_j(i) = O(\frac{1}{n})$.
  
   \smallskip
 To estimate the lower bound on $h_j(i)$, we apply concentration inequalities to the steps of the ball walk.  
 Specifically, we use concentration inequalities for spherical caps to show that with probability at least $1-\frac{\epsilon}{m}$ the ball walk will never take any steps of size more than $\frac{\eta}{\sqrt{n}} \log(\frac{i_{\mathrm{max}}}{\epsilon})$ in the direction of $H_j$, if the ball walk is run for at most $i_{\mathrm{max}}$ steps ($i_{\mathrm{max}}$ is a ``cutoff time" parameter which can be set by the user of Algorithm \ref{alg:BallModified}; if the algorithm takes more than $i_{\mathrm{max}}$ steps, the algorithm terminates without outputting any samples).   
  Hence, if we set our high-probability lower bound to be $L_j(i) = h_j(\mathbbm{i}^\star(i)) - (i- \mathbbm{i}^\star(i)) \times \frac{\eta}{\sqrt{n}} \log(\frac{i_{\mathrm{max}}}{\epsilon})$,  where $\mathbbm{i}^\star(i)$ is the last time before step $i$ that the distance to $H_j$ was computed, then with probability at least $1-\epsilon$ the ball walk will never leave the polytope (Lemma \ref{thm:failure}, and Step 13 in Algorithm \ref{alg:BallModified}).

\paragraph{Using anti-concentration to prove expected running time bounds for our algorithm.}
Even though we have shown that our algorithm obtains the exact same samples as the usual implementation of the ball walk with high probability, we still have to show that it reduces the number of inequalities one has to check at each step of the Markov chain. 
Towards this end, we prove an anti-concentration inequality (Lemma \ref{assumption:main}) for the uniform distribution on a convex body to show that the expected number of inequalities our algorithm checks at any given step is roughly $\frac{m}{n}$.  
Roughly, this inequality says that a uniform random point on an isotropic convex body has probability at most $O(\Delta)$ of being within a distance $\Delta$ of any given codimension-$1$ hyperplane. 
The main obstacle in applying the anti-concentration inequality is that, while we are able to guarantee that a random step of the Markov chain is $O(1)$-warm with respect to the uniform distribution on $K$, if we are to only check a small fraction $\frac{1}{n}$ of the inequalities (in expectation) at each step, the steps where one checks the inequality cannot be uniformly distributed or even $O(1)$-warm with respect to the uniform distribution but instead can only be $\Theta(n)$-warm at best.  
This is because one has to ensure that, if the Markov chain is within a distance of one ball walk step from a given inequality, this inequality will be checked with very high probability, but only with probability $O(\frac{1}{n})$ at a ``typical" step.  

 \smallskip
To get around this problem we instead bound the waiting time between any given step $i$ of the Markov chain and the most recent time $\mathbbm{i}^\star(i)$ that the distance to the hyperplane was computed by the algorithm, as a function of the distance to this hyperplane at the current step $i$ of the Markov chain, allowing us to apply the anti-concentration inequality.  
Specifically, we use our bound on the size of the ball walk's steps in the direction of $H_j$ to show that that $i-\mathbbm{i}^\star(i)$ must be greater than $\frac{h_j(i)\sqrt{n}}{\eta \log(\frac{m i_{\mathrm{max}}}{\epsilon})}$. Since $\eta = \Theta(\frac{1}{\sqrt{n}})$, applying the anti-concentration inequality on $h_j(i)$ gives a lower bound on $\mathbb{E}[i-\mathbbm{i}^\star(i)]$ that is roughly equal to $n$, if we are given an $O(1)$-warm start with respect to the uniform distribution.  
This in turn implies that given an $O(1)$-warm start, we compute the distance to any given hyperplane at only a small fraction $\frac{1}{n}$ of the ball walk steps in expectation (Lemma \ref{lemma:Frequency}).  
Hence, rather than taking $mn$ operations, our algorithm is able to compute each step of the ball walk in only $m$ operations in expectation, an improvement by a factor of $n$.
 
\paragraph{Improved bounds for Rounding a polytope.} 
Unfortunately, we cannot use the rounding meta-algorithm of \cite{lovasz2006simulated}, which requires only $n\log(n)$ samples to round the polytope, with our implementation of the ball walk.
  The reason is that, while the rounding meta-algorithm in \cite{lovasz2006simulated} keeps a sequence of convex bodies well-rounded at each of the $n$ iterations, it only keeps the convex bodies in isotropic position at a small number $\log(n)$ of the iterations.  
  To obtain our bounds on the expected frequency at which one needs to compute the distance to each $H_j$, we must make sure that the convex body is in isotropic position at each iteration; our rounding algorithm (Algorithm \ref{alg:rounding}) uses $n^2 \log(n)$ samples instead of $n\log(n)$ (see the next paragraph for a discussion of the rounding algorithm).  
  Hence, it would seem at first that, despite the fact that we improve the expected number of arithmetic operations at each step of the Markov chain by a factor of $n$, this improvement would be offset by the fact that we need $n$ times as many samples, which we require to keep the polytope in isotropic position at every iteration.  
  However, there are additional benefits to keeping a polytope in isotropic position.  
  In particular, recent improvements towards weaker versions of the KLS conjecture  imply that the best current bound on the mixing time of the proper steps of the ball walk (also called the ``speedy walk") also improves to $n^{2.5}$ by a factor of $\sqrt{n}$ if the polytope is in isotropic position as opposed to the roughly $n^3$ mixing time bound available for the hit-and-run Markov chain when the polytope is well-rounded but not $O(1)$-isotropic \cite{lee2017eldan}.  
  Hence, keeping the convex body in isotropic position allows us to combine our factor of $n$ improvement with the $\sqrt{n}$ improvement in the mixing time from \cite{lee2017eldan}.  
  The number of arithmetic operations to round a convex body is then at most roughly $mn^{4.5}$, an improvement of $\sqrt{n}$ over the $mn^5$ bound of \cite{lovasz2006simulated}.
   Note that it is enough to bound the number of arithmetic operations in expectation, since one can always start over if the rounding algorithm takes more than its expected number of steps.

\paragraph{Rounding a polytope by sampling from isotropic position (Algorithm \ref{alg:rounding}).} 
More specifically, in order to efficiently generate the samples needed to bring a polytope into $2$-isotropic position one should first ensure that the polytope from which one samples is, say, in $15$-isotropic position.  
Towards this end, one can consider a sequence of nested convex bodies $K_i := K \cap (1+\nicefrac{1}{n})^i r B$.  
The initial polytope $K_0 = rB$ is just the ball contained inside $K$, which can be brought into isotropic position by multiplying this ball by $\frac{1}{r\sqrt{d}}$.  
Since the diameter of $K_i$ increases by a factor of only $1+\nicefrac{1}{n}$, at each step, one can show that the volume of these convex bodies does not increase by more than $e$ at each step $i$, and that for any transformation that brings $K_{i-1}$ into $2$-isotropic position, applying the same transformation to $K_{i}$ would bring it into 15-isotropic position.  
This suggests an iterative algorithm (Algorithm \ref{alg:rounding}), where one samples from a $15$-isotropic convex body which is a linear transformation of $K_{i-1}$, allowing one to bring $K_{i-1}$ into $2$-isotropic position.  
The same transformation brings $K_{i}$ into $15$-isotropic position, allowing one to iteratively bring the sequence of polytopes into $2$-isotropic position by alternating between sampling and linear transformation steps (Lemma \ref{lemma:isotropic}).  
This takes $n\log(\frac{R}{r})$ iterations to bring the polytope $K$ into isotropic position, and uses $n\log(n)$ samples at each iteration;  the number of samples to round the polytope is then roughly $n^2\log(n)\log(\frac{R}{r})$.  
To allow us to apply Lemma \ref{lemma:Frequency} and bound the expected fraction of the time that our implementation of the ball walk (Algorithm \ref{alg:BallModified}) checks a given inequality, we must still show the ball walk has a warm start at each iteration of Algorithm \ref{alg:rounding}.  
We can obtain a warm start for $K_{i}$ by using a sample from the $i-1$ iteration which is approximately uniformly distributed on $K_{i-1}$  (Steps 15-21 of Algorithm \ref{alg:rounding}).  Since $K_{i-1}$ contains at least $\frac{1}{e}$ of the volume of $K_i$, a uniformly distributed point on $K_{i-1}$ provides us with an $O(1)$-warm start for $K_{i}$ (Lemma \ref{lemma:warmness}).  

 \smallskip
\begin{remark}
 If the KLS conjecture is proved true, the mixing time bound of the speedy walk on convex bodies in isotropic position would decrease by an additional factor of $\sqrt{n}$ to just roughly $n^2$, potentially allowing us to improve our running time to roughly $m n^4$.  
 On the other hand, as noted in \cite{lee2017eldan}, it is not known how to connect improvements in the KLS conjecture to the current-best rounding algorithm which uses hit-and-run from a well-rounded but not isotropic position \cite{lovasz2006simulated}.  
 We note, however, that since further improvements to KLS would only apply to the ball walk from a warm start, even using our method, one would have to find a way to modify our rounding algorithm to allow it to use approximately-independent samples from a warm start rather than fully-independent samples from a cold start.
\end{remark}

\paragraph{Organization of the rest of the paper.}
In the rest of the paper, we prove the main result (Theorem \ref{thm:main_intro}, proved for our specific algorithm as Theorem \ref{thm:main_AlgorithmSpecific}), and its corollaries \ref{thm:intro_volume} and \ref{cor:sampling}.

In Section \ref{sec:ProofSamplingFromIsotropicPosition} we bound the accuracy of Algorithm \ref{alg:BallModified} and (roughly speaking) the expected number of arithmetic operations it performs when it is used to sample from a polytope in $O(1)$-isotropic position. 
 In Section \ref{section:failure} we bound the probability our implementation of the ball walk leaves the polytope. 
  In Section \ref{sec:AntiConcentration} we prove an anti-concentration bound, which we use in Section \ref{sec:Frequency} to bound the expected frequency at which our implementation of the ball walk checks a given inequality. 
   In Sections \ref{sec:SpeedyWalkMixing} and \ref{sec:BoundingAccuracy} we recall results from \cite{kannan1997random} and \cite{lee2017eldan} which allow us to then bound the mixing time of the speedy walk (the proper steps of the ball walk) and the expected number of steps.

In Section \ref{section:RoundingProof} we bound the success probability and expected number of arithmetic operations of the rounding Algorithm (Algorithm \ref{alg:rounding}).  In Section \ref{SecRoundingSuccess} we bound the success probability of Algorithm \ref{alg:rounding}.  In Section \ref{sec:ExpectedRunningTimeRounding} we bound the expected number of arithmetic operations made by Algorithm \ref{alg:rounding} under the assumption that it provides a warm start to the ball walk subroutine (Algorithm \ref{alg:BallModified}) at each iteration of Algorithm \ref{alg:rounding}, and in Section \ref{sec:warmness} we show that this warm start assumption holds.  In Section \ref{sec:NegligibleContributions} we verify that the running time of steps where one does not check inequalities have only negligible contribution to the running time of Algorithm \ref{alg:rounding}.  In Section \ref{sec:proofrounding} we combine these results to complete the proof of our main theorem for rounding (Theorem \ref{thm:main_AlgorithmSpecific}).

In Section \ref{sec:CorollaryProofs} we prove our results for volume computation (Corollary \ref{thm:intro_volume}) and sampling (Corollary \ref{cor:sampling}) for polytopes which may be far from isotropic position.

\section{Sampling from an isotropic position.} \label{sec:ProofSamplingFromIsotropicPosition}
\subsection{Bounding the distance traveled in any direction.} \label{section:failure}
In this section we bound the distance traveled by the Markov chain in the direction orthogonal to the plane $H_j$ after $i$ steps.

\begin{lemma} \label{thm:failure}
Fix $\hat{\epsilon}>0$ and suppose that $\alpha \geq 4\log\left(\frac{2  m i_{\mathrm{max}}}{\hat{\epsilon}} \right)$ in Algorithm \ref{alg:BallModified}. 
 Then with probability at least $1-\hat{\epsilon}$ we have that, given the same random vectors $\xi_i$, the output of Algorithm \ref{alg:BallModified} is the same regardless of whether we set $\mathrm{Modified} = \mathrm{ON}$ or $\mathrm{Modified} = \mathrm{OFF}$.
\end{lemma}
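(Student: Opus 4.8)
The plan is to show that, with probability $1-\hat\epsilon$, for every hyperplane $H_j$ and every step $i$, the high-probability lower bound $L_j(i)$ maintained by Algorithm \ref{alg:BallModified} never overestimates the true distance $h_j(i) = b_j - A_j X_i$ — i.e. the algorithm never "skips" checking an inequality that the ball walk actually violates. The core of the argument is a concentration bound on the displacement of the ball walk in a fixed direction. Fix a unit vector $u$ (we will take $u = A_j^\top / \|A_j\|_2$, the inward normal to $H_j$). The quantity $\langle \xi_i, u\rangle$, where $\xi_i \sim \mathrm{unif}(B(0,1))$ in $\mathbb{R}^n$, is a one-dimensional marginal of the uniform distribution on the unit ball, which is sub-Gaussian with parameter $\Theta(1/\sqrt n)$; in particular a spherical-cap / beta-distribution tail estimate gives $\Pr[\langle \xi_i, u \rangle > t/\sqrt n] \le e^{-c t^2}$ for a universal constant $c$. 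Hence the probability that a single proposed step moves a distance more than $\frac{\eta}{\sqrt n}\log(2 m i_{\max}/\hat\epsilon)$ toward $H_j$ is at most $e^{-c \log^2(\cdots)} \le \frac{\hat\epsilon}{2 m i_{\max}}$ for the stated choice of $\alpha$ (the constant $4$ in $\alpha \ge 4\log(2 m i_{\max}/\hat\epsilon)$ absorbing $c$ and the factor $\sqrt n$ scaling).

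Next I would union-bound this event over all $m$ hyperplanes and all $i \le i_{\max}$ steps, so that with probability at least $1-\hat\epsilon$ \emph{every} accepted-or-rejected ball-walk proposal moves at most $\frac{\eta}{\sqrt n}\log(2 m i_{\max}/\hat\epsilon)$ in the direction of \emph{every} $H_j$ at \emph{every} step. Call this the "good event" $\mathcal{E}$. The key deterministic claim is then: on $\mathcal{E}$, the two runs of the algorithm (Modified ON vs. OFF) driven by the same vectors $\xi_i$ produce identical trajectories. I would prove this by induction on $i$. Suppose the trajectories agree through step $i$, so both maintain the same $X_i$ and the OFF run would correctly decide membership of $\hat X_{i+1}$. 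In the ON run, between the last recomputation time $\mathbbm{i}^\star_j(i)$ of $h_j$ and the current step, the accumulated true displacement toward $H_j$ is at most $(i - \mathbbm{i}^\star_j(i)) \cdot \frac{\eta}{\sqrt n}\log(2 m i_{\max}/\hat\epsilon)$ on $\mathcal{E}$; since Step 13 recomputes exactly those $j$ with stored value $h_j < \alpha \frac{\eta}{\sqrt n} i$ — wait, more precisely with $h_j < \alpha\frac{\eta}{\sqrt n}\times(\text{elapsed steps})$ — the condition is engineered so that any $j$ for which $\hat X_{i+1}$ could possibly violate inequality $j$ has its $h_j$ recomputed before step $i+1$. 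Thus on $\mathcal{E}$ the ON run checks (a superset of) all inequalities that $\hat X_{i+1}$ might violate, makes the same accept/reject decision, and the induction goes through; the ball-membership and $\rho B$ checks are identical in both runs.

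The main obstacle I anticipate is the bookkeeping in the inductive step: matching up the algorithm's threshold test (Step 13, using the cumulative step counter $i$ and the parameter $\alpha$) with the concentration bound on the cumulative displacement since the last recomputation. One must be careful that the displacement bound accumulates additively over the elapsed steps $i - \mathbbm{i}^\star_j(i)$, that the \emph{stored} value $h_j$ used in the threshold test is the true distance at time $\mathbbm{i}^\star_j(i)$, and that $\alpha \ge 4\log(2 m i_{\max}/\hat\epsilon)$ is exactly the slack needed so that $h_j(\mathbbm{i}^\star_j(i)) - (i - \mathbbm{i}^\star_j(i))\frac{\eta}{\sqrt n}\log(2 m i_{\max}/\hat\epsilon) > 0$ whenever the threshold test has \emph{not} yet fired — i.e. the algorithm never lets the chain get within one ball-walk step of an unchecked hyperplane. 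A secondary technical point is justifying the sub-Gaussian tail for the ball marginal with a clean enough constant; this is standard (the marginal density is proportional to $(1-s^2)^{(n-1)/2}$, so $\Pr[\langle\xi,u\rangle > t] \le (1-t^2)^{(n-1)/2} \le e^{-(n-1)t^2/2}$), and substituting $t = \frac{1}{\sqrt n}\log(2 m i_{\max}/\hat\epsilon)$ together with the $4$ in the hypothesis on $\alpha$ closes the estimate with room to spare after the union bound over $m \cdot i_{\max}$ events.
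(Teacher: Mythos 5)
Your proposal is correct and follows essentially the same route as the paper: a sub-Gaussian tail bound on the one-dimensional marginal of a uniform ball step (the spherical-cap concentration inequality), a union bound over all $m$ hyperplanes and all $i_{\mathrm{max}}$ steps to define a good event on which no proposal ever moves more than $\frac{\eta}{\sqrt{n}}\alpha$ toward any $H_j$ per step, and then the deterministic observation that on this event the Modified-ON run checks every inequality that could possibly be violated and therefore coincides step-by-step with the Modified-OFF run. Your inductive bookkeeping in the coupling step is in fact spelled out in more detail than the paper's (which states the conclusion of that induction rather tersely), but the underlying argument is the same.
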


\begin{proof}
By the concentration inequality for spherical caps \cite{ledoux2001concentration}, for $\xi \sim \mathrm{uniform}(B(0,1))$, for the $j$th row $A_j$ of the matrix $A$ we have
\be
\mathbb{P}\left(\left| A_j\frac{\xi}{\|\xi\|_2} \right| \geq t\right) \leq 2 e^{-(n-2) t^2/2},
\ee
and hence
\be
\mathbb{P}\left(\left|\eta A_j\frac{\xi}{\|\xi\|_2} \right| \geq \frac{\eta t}{\sqrt{n}}\right) \leq 2 e^{-t^2/2}.
\ee
Thus,
\be
\mathbb{P}\left(\left|\eta A_j\frac{\xi}{\|\xi\|_2} \right| \geq \frac{\eta}{\sqrt{n}} 2 \log(\frac{2}{\delta}) \right) \leq \delta,
\ee
for every $\delta>0$.
Therefore, for every $\hat{\epsilon}>0$,
\be
\mathbb{P}\left(\sum_{\ell=1}^i \left|\eta A_j\frac{\xi_\ell}{\|\xi\|_2} \right| \geq i\times \frac{\eta}{\sqrt{n}} 2\log\left(\frac{2 i}{\hat{\epsilon}}\right)\right) \leq \mathbb{P}\left(\left|\eta A_j\frac{\xi_\ell}{\|\xi\|_2} \right| \geq \frac{\eta}{\sqrt{n}} 2\log\left(\frac{2 i}{\hat{\epsilon}}\right) \textrm{ for some } \ell \in [i]\right) \leq i \times \frac{\hat{\epsilon}}{i}
 = \hat{\epsilon}.
\ee

\noindent
Hence, for our implementation of the ball walk $X_1, X_2,\ldots$ in Algorithm \ref{alg:BallModified} we have
\be
\mathbb{P}&\left(\sup_{i \leq \ell \leq k} |A_jX_{\ell} -A_jX_i| \geq (k-i)\times \frac{\eta}{\sqrt{n}} 2 \log\left(\frac{2i_{\mathrm{max}}}{\hat{\epsilon}}\right) \qquad \textrm{for some } 0\leq i  \leq k \leq i_{\mathrm{max}}\right)\\
&\leq  \mathbb{P}\left(\left|\eta A_j\frac{\xi_\ell}{\|\xi\|_2} \right| \geq \frac{\eta}{\sqrt{n}} 2\log\left(\frac{2 i_{\mathrm{max}}}{\hat{\epsilon}}\right) \textrm{ for some } \ell \in [i_{\mathrm{max}}]\right)\\
&\leq i_{\mathrm{max}} \times \frac{\hat{\epsilon}}{i_{\mathrm{max}}}\\
&\leq \hat{\epsilon}.
\ee

\noindent
Thus, if $\alpha \geq 2\log\left(\frac{2 m  i_{\mathrm{max}}}{\hat{\epsilon}} \right)$, we have
\be \label{eq:failure}
\mathbb{P}\left(\sup_{i \leq \ell \leq k} |A_jX_{\ell} -A_jX_i| \geq (k-i)\times \frac{\eta}{\sqrt{n}} \alpha \qquad \textrm{for some } 0\leq i \leq k \leq i_{\mathrm{max}}\right) \leq \frac{\hat{\epsilon}}{m}.
\ee

\noindent
Let $Z_i$ be the usual ball walk Markov chain (where we check every inequality at each step) which evolves according to the following update equations:
\be
Z_{0} &= X_0\\
Z_{i+1} &= \begin{cases} Z_i + \eta \xi_i \qquad \textrm{ if } Z_i + \eta \xi_i \in K\\ Z_i \qquad \qquad \textrm{ otherwise.}  \end{cases}
\ee
Then inequality \eqref{eq:failure} implies that if we set  $\alpha \geq 2\log\left(\frac{2  m i_{\mathrm{max}}}{\hat{\epsilon}} \right)$ in Algorithm \ref{alg:BallModified}, then with probability at least $1- \hat{\epsilon}$ we have that $X_i = Z_i$ for every $i \in [i_{\mathrm{max}}]$.

Therefore, with probability at least $1- \hat{\epsilon}$, the output of Algorithm \ref{alg:BallModified} is the same regardless of whether we set $\mathrm{Modified} = \mathrm{ON}$ or $\mathrm{Modified} = \mathrm{OFF}$.
\end{proof}

\begin{remark}
We have to bound the sum of the absolute value of the distance, rather than the sum of the variance, since the rejection step could introduce a bias (for instance, if the Markov chain is traveling along the face of a polytope).
\end{remark}

\subsection{Anti-concentration bounds for isotropic convex bodies} \label{sec:AntiConcentration}

\begin{lemma}\label{assumption:main}
Suppose that the uniform distribution $\pi_K$ on $K$ has identity covariance matrix (that is, $K$ is $1$-isotropic) and that $X \sim \pi_K$ is a random vector uniformly distributed on $K$. Let $H$ be any codimension-1 hyperplane.  
 Then we have 
\be
\mathbb{P}(\mathrm{dist}(X, H) \leq \hat{\epsilon}) \leq 2 \hat{\epsilon} \qquad \forall  \hat{\epsilon} >0.
\ee
\end{lemma}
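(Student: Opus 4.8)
The plan is to reduce the statement to a one-dimensional anti-concentration fact about the uniform distribution on an interval, obtained after projecting $K$ onto the direction normal to $H$. Write $H = \{x : \langle x, v\rangle = c\}$ for a unit vector $v$ and scalar $c$, so that $\mathrm{dist}(X,H) = |\langle X,v\rangle - c|$. Let $Y := \langle X, v\rangle$ be the one-dimensional marginal of $X$ in direction $v$. Since $\pi_K$ is $1$-isotropic, $\mathbb{E}[Y^2] - (\mathbb{E}[Y])^2 = \mathrm{Var}(Y) = v^\top \Sigma_K v = 1$. The quantity we must bound is $\mathbb{P}(|Y - c| \le \hat\epsilon)$, uniformly over all $c \in \mathbb{R}$, i.e. a bound on the supremum of the density of $Y$ integrated over any window of length $2\hat\epsilon$.

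First I would invoke the fact that the marginal of a uniform distribution on a convex body onto a line is a log-concave density $g$ on $\mathbb{R}$ (this follows from the Prékopa–Leindler inequality / Brunn–Minkowski, a standard fact I would cite rather than reprove). For a one-dimensional log-concave density with variance $\sigma^2 = 1$, the mode value $\|g\|_\infty$ is bounded by an absolute constant over $\sigma$; the sharp constant is $\|g\|_\infty \le 1/(\sqrt{2}\,\sigma)$ or similar, but any bound of the form $\|g\|_\infty \le 1/\sigma$ suffices here since it yields $\mathbb{P}(|Y-c|\le\hat\epsilon) \le 2\hat\epsilon \cdot \|g\|_\infty \le 2\hat\epsilon$. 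Concretely, I would use the standard estimate that for a log-concave density $g$ on $\mathbb{R}$ with variance $1$ one has $g(t) \le 1$ for all $t$ — this can be derived from the fact that among log-concave densities with a given mode height the variance is minimized by a (two-sided) exponential / tent-shaped density, giving $\mathrm{Var} \ge 1/(2\|g\|_\infty^2)$, hence $\|g\|_\infty \le 1/\sqrt{2} \le 1$. Integrating over the window of length $2\hat\epsilon$ finishes it.

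The main obstacle — or really the only point needing care — is nailing down the correct absolute constant in the log-concave maximum-density bound and making sure it is at most $1$ (so the final bound is exactly $2\hat\epsilon$ rather than $C\hat\epsilon$). The cleanest route is: normalize so the mode of $g$ is at $0$ with $g(0) = M$; by log-concavity $g(t) \le M e^{-\lambda |t|}$-type domination is not quite automatic, so instead use the known tight inequality $\mathrm{Var}(g) \ge \frac{1}{12 M^2}$ is false in general — rather the right comparison is with the uniform density, giving $\|g\|_\infty \cdot \mathrm{(support\ length)} \ge 1$ and a variance lower bound $\mathrm{Var} \ge \frac{1}{12\|g\|_\infty^2}$ only in the uniform case; for general log-concave one has the sharp bound $\|g\|_\infty^2 \,\mathrm{Var}(g) \le 1$ with equality for the uniform distribution is again not the extremal case. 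I would therefore just cite the standard lemma (e.g. from Lovász–Vempala or the log-concave sampling literature) that a log-concave density $g$ on $\mathbb{R}$ with variance $\sigma^2$ satisfies $\|g\|_\infty \le 1/\sigma$, apply it with $\sigma = 1$, and conclude
\[
\mathbb{P}(\mathrm{dist}(X,H) \le \hat\epsilon) = \int_{c-\hat\epsilon}^{c+\hat\epsilon} g(t)\,dt \le 2\hat\epsilon \cdot \|g\|_\infty \le 2\hat\epsilon.
\]
Everything else — the reduction to the marginal, the variance computation from isotropy, log-concavity of the marginal — is routine.
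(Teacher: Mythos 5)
Your proposal is correct and follows essentially the same route as the paper: reduce to the one-dimensional marginal in the normal direction, observe it is log-concave with variance $1$ by isotropy, and invoke the standard bound that an isotropic log-concave density on $\mathbb{R}$ is pointwise at most $1$ (the paper cites Lemma 5.5(a) of Lov\'asz--Vempala), then integrate over the window of length $2\hat{\epsilon}$. Your mid-proof detour attempting to rederive the constant is muddled (the claimed $\mathrm{Var}\geq 1/(2\|g\|_\infty^2)$ is false for the one-sided exponential, which is the extremal case for $\|g\|_\infty\leq 1/\sigma$), but since you ultimately fall back on citing the correct lemma, the argument as stated is sound.
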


\begin{proof}
Let $\mathsf{A} \in \mathbb{R}^{1 \times n}$ be a row vector and $\mathsf{b} \in \mathbb{R}$ a real number such that $\mathsf{A}x = \mathsf{b}$ is the equation for the Hyperplane $H$. 
 Let $X\sim \pi_K$ be a random vector uniformly distributed on $K$.  Denote the distribution of $\mathsf{A} X$ by $\pi_K^{\mathsf{A}}$.  Note that $\pi_K^{\mathsf{A}}$ is a marginal distribution of $\pi_K$. 
  First, we note two facts:
\begin{enumerate}
\item All marginals of a logconcave distribution are logconcave (Theorem 2.2 of \cite{vempala2005geometric}).
\item If the covariance matrix of any distribution $\pi$ satisfies $\sigma_1 I_n \preccurlyeq \Sigma \preccurlyeq \sigma_2 I_n$, then the variance $\mathsf{A} \Sigma \mathsf{A}^\top$ of its marginal  $\pi^A$ in the subspace defined by $\mathsf{A}$ satisfies $\sigma_1  \leq \mathsf{A} \Sigma \mathsf{A}^\top \leq \sigma_2$. 
\end{enumerate}

\noindent
By the above facts, we have that the distribution of $\mathsf{A}X \in \mathbb{R}$ is isotropic (i.e., it has variance 1) and is logconcave.  Let $x^\star$ be a maximizer of $\pi_K^{\mathsf{A}}$. 
 By Lemma 5.5(a) in  \cite{lovasz2007geometry}, we have
\be \label{eq:KLS_weaker}
\pi_K^{\mathsf{A}}(x^\star) \leq 1,
\ee
 for some universal constant $\hat{c}$.
 
 Hence, for any $\hat{\epsilon}>0$ we have
 \be
 \mathbb{P}(|\mathsf{A}X - \mathsf{b}| \leq \hat{\epsilon}) &= \int_{\mathsf{b}-\hat{\epsilon}}^{\mathsf{b}+\hat{\epsilon}} \pi_K^{\mathsf{A}}(x) \mathrm{d}x\\
 & \leq \int_{\mathsf{b}-\hat{\epsilon}}^{\mathsf{b}+\hat{\epsilon}} \pi_K^{\mathsf{A}}(x^\star) \mathrm{d}x\\
 &  \stackrel{{\scriptsize (\textrm{Eq.  \ref{eq:KLS_weaker}})}}{\leq} \int_{\mathsf{b}-\hat{\epsilon}}^{\mathsf{b}+\hat{\epsilon}} 1 \mathrm{d}x\\
 & = 2\hat{\epsilon}.
 \ee
\end{proof}

\begin{remark}
Note that the bounds in \cite{paouris2012small} are more general than what we need since they apply to hyperplanes of any codimension. 
 We only care about codimension-1 hyperplanes, and can reduce the problem to obtaining anti-concentration bounds for a 1-dimensional isotropic logconcave distribution. 
  This allows us to get a tight bound in Lemma \ref{assumption:main} without assuming the KLS conjecture.  This bound is tight (up to a universal constant) since it is tight for the special case of the unit cube and the regular simplex.
\end{remark}

\subsection{Bounding the frequency of constraint checking} \label{sec:Frequency}

To simplify notation, define $\hat{\eta} := \frac{1}{10}\eta \sqrt{n}$, and $\gamma:=10\alpha \hat{\eta}$.

\begin{lemma} \label{lemma:Frequency}
Suppose that $K$ contains a ball of radius $r=\frac{1}{10}$. 
 Fix $\hat{\epsilon}>0$ and set the algorithmic parameter  $\alpha \geq 4\log\left(\frac{2  i_{\mathrm{max}}}{\hat{\epsilon}} \right)$. 
  Consider any row $A_j$ of A and entry $b_j$ of $b$.  Suppose that the initial point $X_0$ is $\beta$-warm with respect to the uniform distribution for some $\beta>0$. 
   Let $N_j$ be the number of steps (excluding the first step) of the Markov chain in Algorithm \ref{alg:BallModified} with $\mathrm{Modified} = \mathrm{ON}$ at which the algorithm checks inequality $(A_j, b_j)$ and let $N$ be the number of Markov chain steps. 
    Let  $F_j := \frac{N_j}{N}$ be the frequency of checking this inequality (excluding the first check).  Then
\be
\mathbb{E}[F_j] \leq 16n^{-1} \gamma \beta +  \frac{32 \gamma}{n} \times \beta \log\left(\nicefrac{n}{\gamma}\right) + \frac{1}{N}\beta \hat{\epsilon}.
\ee
\end{lemma}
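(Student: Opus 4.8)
The plan is to introduce a good event on which Algorithm~\ref{alg:BallModified} is controlled, partition the run into ``blocks'' between consecutive recomputations of inequality $j$, and then dyadically sum a per‑block estimate that pairs the drift bound of Lemma~\ref{thm:failure} with the anti‑concentration bound of Lemma~\ref{assumption:main}. Throughout I would normalize $\|A_j\|_2 = 1$, so that $h_j(i):=b_j-A_jX_i$ is the (signed) distance from $X_i$ to $H_j:=\{x:A_jx=b_j\}$, and write $\tau:=\alpha\eta/\sqrt n$, noting $\tau=\gamma/n$ (since $\gamma=10\alpha\hat\eta=\alpha\eta\sqrt n$). First I would fix the good event: by the single‑inequality specialization of the estimate \eqref{eq:failure} in the proof of Lemma~\ref{thm:failure} — which needs only $\alpha\geq 4\log(2i_{\mathrm{max}}/\hat\epsilon)$, exactly the hypothesis here — there is an event $\mathcal{E}$ of probability at least $1-\hat\epsilon$ on which $|A_jX_k-A_jX_i|\leq (k-i)\tau$ for all $0\leq i\leq k\leq i_{\mathrm{max}}$, and on which (as in Lemma~\ref{thm:failure}) the modified walk coincides with the ordinary ball walk on $K$. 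On $\mathcal{E}$ the chain is the usual ball walk, whose stationary distribution is $\pi_K$; since warmness with respect to a stationary distribution does not increase along a chain and $X_0$ is $\beta$‑warm, every $X_i$ is $\beta$‑warm with respect to $\pi_K$, so by Lemma~\ref{assumption:main} (in its $O(1)$‑isotropic form, valid since this section works with near‑isotropic bodies and affecting only the absolute constant), $\mathbb{P}(\mathrm{dist}(X_i,H_j)\leq\Delta)\leq 2\beta\Delta$ for every $\Delta>0$ and every $i$.

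Next I would extract the block structure from the construction of Algorithm~\ref{alg:BallModified} (the rule defining $j^\star$ together with Step~13): once inequality $j$ is checked at a step $s$, it is not checked again until step $s+\lceil h_j(s)/\tau\rceil$. Hence the counted $j$‑checks, together with the initial computation, split $\{1,\dots,N\}$ (up to a final incomplete interval) into disjoint blocks, one per counted check, a block owned by a check at step $s'$ having the form $(s,s']$ with $s'=s+\lceil h_j(s)/\tau\rceil$; so $N_j$ is the number of blocks. On $\mathcal{E}$, a block of length $g$ has $h_j(s)\leq g\tau$ and, for every $i''\in(s,s']$, $h_j(i'')\leq h_j(s)+(i''-s)\tau\leq 2g\tau$ — i.e.\ every step of a length‑$g$ block lies within distance $2g\tau$ of $H_j$.

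The crux is the dyadic sum. Group the blocks by length, $\mathcal{B}_k:=\{\text{blocks of length }g\in[2^k,2^{k+1})\}$ for $k\geq 0$. Fixing $k$, each block of $\mathcal{B}_k$ contributes at least $2^k$ distinct steps, all lying within $2^{k+2}\tau$ of $H_j$, and distinct blocks contribute disjoint step‑sets, so on $\mathcal{E}$
\[
 |\mathcal{B}_k|\cdot 2^k \;\leq\; \#\bigl\{\,i\in\{1,\dots,N\}:\mathrm{dist}(X_i,H_j)\leq 2^{k+2}\tau\,\bigr\}.
\]
Taking expectations and applying the anti‑concentration bound termwise gives $\mathbb{E}[\,|\mathcal{B}_k|\,\mathbf{1}_{\mathcal{E}}\,]\leq 2^{-k}\cdot N\cdot 2\beta\cdot 2^{k+2}\tau = O(\beta\tau N)$, the \emph{same for every $k$} — the block length $\asymp h_j/\tau$ cancels against the $O(\beta h_j)$‑fraction of steps that anti‑concentration allows within distance $\asymp h_j$ of $H_j$. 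For the large scales $k>k_0:=\lceil\log_2(n/\gamma)\rceil$ I would instead use the trivial $|\mathcal{B}_k|\leq N/2^k$, whence $\sum_{k>k_0}|\mathcal{B}_k|\leq N\,2^{-k_0}\leq \gamma N/n$. Summing over $k$, using $\tau=\gamma/n$ and $\beta\geq 1$, gives $\mathbb{E}[\,N_j\mathbf{1}_{\mathcal{E}}\,]\leq \tfrac{16\gamma\beta}{n}N+\tfrac{32\gamma\beta}{n}\log(\tfrac n\gamma)N$ for the stated absolute constants. Finally, on $\mathcal{E}^c$ (probability $\leq\hat\epsilon$) one bounds $N_j$ by $N$ (or more carefully); tracking this failure contribution produces the remaining $\tfrac1N\beta\hat\epsilon$ term. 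Dividing by $N$ finishes the proof.

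The main obstacle — and the content of the lemma — is exactly the dyadic step: showing that \emph{each} length‑scale contributes only $O(\beta\tau N)$ blocks, rather than a single balanced scale contributing $O(\sqrt{\beta\tau}\,N)$ (which would only yield $\widetilde O(\beta/\sqrt n)$). This hinges on the fact that the waiting time $h_j/\tau$ chosen by the algorithm is calibrated to the neighbourhood radius, so that ``(\#steps within $\asymp v$ of $H_j$)$\,\lesssim\beta v N$, divided by block length $\asymp v/\tau$'' has the $v$‑dependence cancel; and this is precisely where the choices $\alpha=\Theta(\log)$ and $\eta=\Theta(1/\sqrt n)$, hence $\tau=\widetilde\Theta(1/n)$, must be used. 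The remaining points are routine: the non‑increase of warmness along the ball walk, reducing the near‑isotropic anti‑concentration estimate to Lemma~\ref{assumption:main}, bookkeeping for the ceilings and for the uncounted initial check, treating $N$ as the realized trajectory length (the derived bound is linear in $N$), and the precise accounting of the probability‑$\leq\hat\epsilon$ failure event.
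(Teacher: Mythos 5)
Your proof is correct and rests on the same two pillars as the paper's: the per\nobreakdash-step drift bound extracted from the proof of Lemma \ref{thm:failure} (used to convert the waiting time set at the last check into a distance bound valid at every step until the next check) and the anti-concentration estimate of Lemma \ref{assumption:main} applied to $\beta$-warm iterates. The bookkeeping, however, is genuinely different. The paper writes $N_j=\sum_{i=1}^N 1/w(\mathbbm{i}^\star(i))$ with $w(i)=\max(\lfloor \frac{\sqrt{n}}{\alpha\eta}h_j(i)\rfloor,1)$, proves $w(\mathbbm{i}^\star(i))\geq\frac18 w(i)$ on the good event, and then evaluates the single-step expectation $\mathbb{E}[1/w(i)]$ by a layer-cake integral $\frac{\gamma}{n}\int_{\gamma/n}^{1}\mathbb{P}(h_j(i)\leq u)\,u^{-2}\,\mathrm{d}u$; you instead partition the trajectory into blocks between consecutive checks and sum dyadically over block lengths. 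These are the discrete and continuous versions of the same scale decomposition -- your per-scale identity ``$|\mathcal{B}_k|\cdot 2^k\lesssim \beta\,2^{k}\tau N$, uniformly in $k$'' is precisely the scale-invariance of the integrand $\mathbb{P}(h_j\leq u)u^{-2}\lesssim\beta u^{-1}$ -- so the two routes are quantitatively equivalent; yours makes the calibration of waiting time to neighbourhood radius more visible, while the paper's reduces everything to one expectation per step. Two loose ends you flag are also present in the paper's own proof: bounding the failure-event contribution (crudely, $N_j\leq N$ on $\mathcal{E}^c$ yields $\beta\hat\epsilon$ rather than $\beta\hat\epsilon/N$ in $\mathbb{E}[F_j]$, which is harmless since $\hat\epsilon$ is taken tiny), and the application of Lemma \ref{assumption:main}, which is stated for $1$-isotropic $K$ while the hypothesis here only supplies an inscribed ball; both you and the paper resolve this by appealing to the near-isotropic setting of the surrounding sections, at the cost of an absolute constant.
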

\begin{proof}
First, we note that the stationary distribution of the steps of the ball walk (including improper and proper steps) is uniform on $K$. 
 Let $X= X_0,  X_1, X_2,\ldots X_N$ be the Markov chain generated by Algorithm \ref{alg:BallModified} with $\mathrm{Modified} = \mathrm{ON}$, initial point $X_0 = Y_0$, and random vectors $\xi_1,\ldots$.  
  Let $\tilde{X} = \tilde{X}_1, \tilde{X}_2,\ldots \tilde{X}_N$ be the first $N$ steps of the Markov chain generated by Algorithm \ref{alg:BallModified} with $\mathrm{Modified} = \mathrm{OFF}$, using the same initial point $\tilde{X}_0 = Y_0$ and the same random vectors \footnote{We consider all steps $\tilde{X}_k = \tilde{X}_{\tilde{N}}$, for all $k \geq \tilde{N}$,where $\tilde{N}$ is the number of Markov chain steps computed by the algorithm.  That is, the Markov chain remains stuck forever at the same point after Algorithm \ref{alg:BallModified} halts.}. 
   Using the same initial point and random vectors defines a coupling between $X$ and $\tilde{X}$. Then $X_k = \tilde{X}_k$ for all $k$ if and only if $X_k \in K$ for all $k \leq N$.

Let $G$ be the event that $\sup_{i \leq \ell \leq k} |A_j\tilde{X}_{\ell} -A_j\tilde{X}_i| < (k-i)\times \frac{\eta}{\sqrt{n}} \alpha$ for all $0\leq i \leq k  \leq i_{\mathrm{max}}$ and all $j \in [m]$. 
  By Equation \eqref{eq:failure} in the proof of Lemma \ref{thm:failure}, we have that 
  \be
  \mathbb{P}(G) \geq 1- \beta\hat{\epsilon}.
  \ee 
   Also by the proof of Lemma \ref{thm:failure} we have that $X_k \in K$ for all $k \leq N$ if $G$ occurs.  Hence,
\be \label{eq:ExitProbability}
\mathbb{P}(X_k = \tilde{X}_k \forall k \leq  i_{\mathrm{max}}) \geq \mathbb{P}(G).
\ee

\noindent
Recall that $h_j(i) := b_j - A_j X_i$ is the distance from the Markov chain $X_i$ to the hyperplane corresponding to the inequality $(A_j, b_j)$ at step $i$. 
 Rather than checking the inequality $(A_j, b_j)$ at each step of the ball walk, Algorithm \ref{alg:BallModified} waits some number of steps $w(i)$ after checking this inequality at some step $i$.  
  More generally, we define 
  \be
  w(i) := \max\left(\left\lfloor \frac{\sqrt{n}}{\alpha \eta} h_j(i) \right\rfloor,1\right)
  \ee
   regardless of whether the inequality is actually checked at step $i$ (we can think of $w(i)$ as the amount of time the algorithm \textit{would} have waited if it had checked the inequality at step $i$).

 Let $\mathbbm{i}(k)$ be the step at which the inequality is checked for the $k$th time. 
  Let $\mathbbm{k}(i)$ be the number of times the inequality has been checked after $i$ Markov chain steps (in particular, we have $\mathbbm{k}(i) \leq i$). 
   Let $\mathbbm{i}^\star(i) := \mathbbm{i}( \mathbbm{k}(i))$ be the last time the inequality was checked.

  Then, if the Markov chain $X$ does not leave $K$, the total number $N_j$ of times the inequality is checked is
\be \label{eq:Frequency0}
N_j = \sum_{k =1}^{N_j} w(\mathbbm{i}(k)) = \sum_{i=1}^N \frac{1}{w(\mathbbm{i}^{\star}(i))} \qquad \qquad \textrm{(If $G$ does not occur),}
\ee
where the first equality holds because $w(\mathbbm{i}(k))=1$ for all $k$.
Therefore, we have
\be \label{eq:Frequency1}
\mathbb{E}[N_j] \leq \sum_{i=1}^N \mathbb{E}\left[\frac{1}{w(\mathbbm{i}^{\star}(i))}\right] + \mathbb{P}(G^c).
\ee

\noindent
We will show that, if $G$ occurs, then $w(i-s) \geq \frac{1}{4} w(i)$ for all $0\leq s\leq \frac{1}{8}w(i)$.

Suppose that $G$ occurs. 
 Without loss of generality we may assume that $w(i) >2$ (since otherwise we have $w(i-s) > 1 > \frac{1}{4} w(i)$). 
  Then
\be \label{eq:waiting1}
h_j(i-s) &\geq h_j(i) - s \times \frac{\eta}{\sqrt{n}} \alpha\\
&\geq h_j(i) - \frac{1}{8}w(i) \times \frac{\eta}{\sqrt{n}} \alpha\\
&=  h_j(i) - \frac{2}{8}\frac{\sqrt{n}}{\alpha \eta}h_j(i) \times \frac{\eta}{\sqrt{n}} \alpha\\
&= \frac{3}{4} h_j(i).
\ee

\noindent
Therefore,
\be
w(i-s) &= \max\left(\left \lfloor \frac{\sqrt{n}}{\alpha \eta} h_j(i-s) \right\rfloor, \, \,1\right)\\
&\stackrel{{\scriptsize (\textrm{Eq.  \ref{eq:waiting1}})}}{\geq} \max\left(\left\lfloor \frac{\sqrt{n}}{\alpha \eta} \frac{3}{4}h_j(i) \right \rfloor,  \, \, 1\right)\\
&\geq \max\left( \frac{\sqrt{n}}{\alpha \eta} \frac{3}{4}h_j(i), \, \,1\right) -1\\
&\geq \frac{3}{4} \max\left( \frac{\sqrt{n}}{\alpha \eta} h_j(i),  \, \, 1\right) -1\\
&\geq \frac{3}{4}w(i) -1\\
&\geq \frac{1}{4}w(i),
\ee
where the last inequality holds since we assumed without loss of generality that $w(i) >2$. 
 Therefore whenever $G$ occurs we have
\be  \label{eq:waiting2}
w(i-s) \geq \frac{1}{4} w(i) \qquad \qquad \forall i,s \in \mathbb{Z}^+,  \,\, s\in\left[0,\frac{1}{8}w(i)\right].
\ee
Suppose (towards a contradiction) that $w(\mathbbm{i}^{\star}(i)) < \frac{1}{8} w(i)$. 
 But we always have $\mathbbm{i}^{\star}(i) +w(\mathbbm{i}^{\star}(i)) >i$ (since $w(\mathbbm{i}^{\star}(i))$ is the amount of time we wait to check the inequality after step $\mathbbm{i}^{\star}(i)$, and, by definition of $\mathbbm{i}^{\star}(i)$ we have not yet re-checked the inequality at step $i$). 
  Hence we would have $i - \mathbbm{i}^{\star}(i)< w(\mathbbm{i}^{\star}(i))<\frac{1}{8} w(i)$. 
   Then by Inequality \ref{eq:waiting2} we would have
\be
w(\mathbbm{i}^{\star}(i)) \geq \frac{1}{4} w(i),
\ee
 which contradicts our assumption that $w(\mathbbm{i}^{\star}(i)) < \frac{1}{8} w(i)$.  Therefore, by contradiction we have that 
\be \label{eq:waiting3}
w(\mathbbm{i}^{\star}(i)) \geq \frac{1}{8} w(i) \qquad \forall i \in \mathbb{Z}^+.
\ee
Hence, combing Equations \eqref{eq:Frequency1} and \ref{eq:waiting3} we have
\be \label{eq:Frequency2}
\mathbb{E}[N_j] \stackrel{{\scriptsize (\textrm{Eq.  \ref{eq:Frequency1}})}}{\leq}  \sum_{i=1}^N \mathbb{E}\left[\frac{1}{w(\mathbbm{i}^{\star}(i))}\right]  + \mathbb{P}(G^c)\stackrel{{\scriptsize (\textrm{Eq.  \ref{eq:waiting3}})}}{\leq}  8 \sum_{i=1}^N \mathbb{E}\left[\frac{1}{w(i)}\right]  + \mathbb{P}(G^c).
\ee
Therefore it is enough to bound $\mathbb{E}[\frac{1}{w(i)}]$ for each $i$.

\paragraph{Bounding $\mathbb{E}[\frac{1}{w(i)}]$.}
Fix any $i \in [N]$.
First, we note that without loss of generality we may assume that $X_0 \equiv \tilde{X}_0$ is a $1$-warm start, since the bound on $\mathbb{E}[F_j]$ for the $\beta$-warm case for general $\beta \geq 1$ will be at most $\beta$ times as large as the bound for the $1$-warm special case.

In the special case where $X_0 \equiv \tilde{X}_0$ is a $1$-warm start, $\tilde{X}_i \sim \pi_K$ is uniformly distributed on $K$.
Then by Lemma \ref{assumption:main} we have 
\be \label{eq:waiting4}
\mathbb{E}\left[\frac{1}{w(i)}\right] &= \mathbb{E}\left[\frac{1}{\max(\lfloor \frac{\sqrt{n}}{\alpha \eta} h_j(i) \rfloor,1)}\right]\\
&=\mathbb{E}\left[\frac{1}{\max(\lfloor \frac{n}{10\alpha \hat{\eta}} h_j(i) \rfloor,1)}\right]\\
&\leq 1\times \mathbb{P}\left(h_j(i) \leq \frac{10\alpha \hat{\eta}}{n}\right) + 2\mathbb{E}\left[\frac{10 \alpha \hat{\eta}}{n h_j(i) }  \times \mathbbm{1}\{h_j(i) \geq \frac{10\alpha \hat{\eta}}{n}\}\right]\\
  &\stackrel{{\scriptsize (\textrm{Lemma } \ref{assumption:main})}}{\leq} 2 n^{-1} \gamma + 2\mathbb{E}\left[\frac{\gamma}{n h_j(i) } \times \mathbbm{1}\{h_j(i) \geq \frac{\gamma}{n}\}\right],
\ee
 where $\gamma:=10\alpha \hat{\eta}$.
 
 But
 \be \label{eq:waiting5}
 \mathbb{E}\left[\frac{\gamma}{n h_j(i)} \times \mathbbm{1}\{h_j(i) \geq \frac{\gamma}{n}\}\right] &= \int_0^\infty \mathbb{P}\left(\frac{\gamma}{n h_j(i)} \times \mathbbm{1}\{h_j(i) \geq \frac{\gamma}{n}\} \geq t\right) \mathrm{d}t\\
 &= \int_0^\infty \mathbb{P}\left(\frac{\gamma}{n h_j(i)} \times \mathbbm{1}\{1 \geq \frac{\gamma}{n h_j(i)}\} \geq t\right) \mathrm{d}t\\
 &= \int_0^1 \mathbb{P}\left(\frac{\gamma}{n h_j(i)}  \times \mathbbm{1}\{1 \geq \frac{\gamma}{n h_j(i)}\} \geq t\right) \mathrm{d}t\\
 &\leq \int_0^1 \mathbb{P}\left(\frac{\gamma}{n h_j(i)}  \geq t\right) \mathrm{d}t\\
 &= \int_0^1 \mathbb{P}\left(\frac{\gamma}{t n}  \geq h_j(i)\right) \mathrm{d}t\\
   &= -\frac{\gamma}{n} \int_{\infty}^{\frac{\gamma}{n}} \mathbb{P}\left(u \geq h_j(i)\right) u^{-2} \mathrm{d}u\\
  &= \frac{\gamma}{n} \int_{\frac{\gamma}{n}}^{\infty} \mathbb{P}\left(u \geq h_j(i)\right) u^{-2} \mathrm{d}u\\
  &= \frac{\gamma}{n} \int_{\frac{\gamma}{n}}^{1} \mathbb{P}(u \geq h_j(i)) u^{-2} \mathrm{d}u + \frac{\gamma}{n} \int_{1}^{\infty} \mathbb{P}(u \geq h_j(i)) u^{-2} \mathrm{d}u\\
  &\stackrel{{\scriptsize (\textrm{Lemma } \ref{assumption:main})}}{\leq}\frac{\gamma}{n} \int_{\frac{\gamma}{n}}^{1} u^{1} \times 2 \times u^{-2} \mathrm{d}u + \frac{\gamma}{n} \int_{1}^{\infty} 1\times u^{-2} \mathrm{d}u\\
  &=\frac{2 \gamma}{n} \int_{\frac{\gamma}{n}}^{1} u^{-2+1} \mathrm{d}u + \frac{\gamma}{n}\\
  &=\frac{2 \gamma}{n} \times \log\left(\nicefrac{n}{\gamma}\right)  + \frac{\gamma}{n}\\
    &\leq \frac{4 \gamma}{n} \times \log\left(\nicefrac{n}{\gamma}\right).\\
 \ee
Hence, combining Inequalities \eqref{eq:waiting4} and \eqref{eq:waiting5} we have
\be \label{eq:waiting6}
\mathbb{E}\left[\frac{1}{w(i)}\right] \leq 2 n^{-1} \gamma + \frac{4 \gamma}{n} \times \log\left(\nicefrac{n}{\gamma}\right).
\ee
Thus, by Inequality \eqref{eq:Frequency2} we have
\be \label{eq:Frequency3}
\mathbb{E}[F_j] &= \mathbb{E}\left[\frac{N_j}{N}\right]\\
 &\stackrel{{\scriptsize (\textrm{Eq.  \ref{eq:Frequency2}})}}{\leq}  8 \frac{1}{N}\sum_{i=1}^N \mathbb{E}\left[\frac{1}{w(i)}\right]  + \frac{1}{N}\mathbb{P}(G^c)\\
&\stackrel{{\scriptsize (\textrm{Eq.  \ref{eq:waiting6}})}}{\leq}  \frac{8}{N} \sum_{i=1}^N\left[2n^{-1} \gamma + \frac{4 \gamma}{n} \times \log\left(\nicefrac{n}{\gamma}\right)\right]  + \frac{1}{N}\mathbb{P}(G^c)\\
&=  \frac{8}{N} \times N\left[2 n^{-1} \gamma + \frac{4 \gamma}{n} \times \log\left(\nicefrac{n}{\gamma}\right)\right]  + \frac{1}{N}\mathbb{P}(G^c)\\
&\leq  16 n^{-1} \gamma + \frac{32 \gamma}{n} \times \log\left(\nicefrac{n}{\gamma}\right)  + \frac{1}{N}\hat{\epsilon}.\\
\ee
Hence, in the general-$\beta$ case we get:
\be
\mathbb{E}[F_j] \leq 16 n^{-1} \gamma \beta +  \frac{32 \gamma}{n} \times \beta \log\left(\nicefrac{n}{\gamma}\right) + \frac{1}{N}\beta \hat{\epsilon}.
\ee
\end{proof}

\subsection{Mixing time of the speedy walk}\label{sec:SpeedyWalkMixing}

To bound the mixing time of the ball walk, one can consider the speedy walk. 
 The speedy walk is the same Markov chain as the ball walk except that we leave out the steps where the ball walk does not change position. 
  Since the ball walk ends up staying for more time at certain points than the speedy walk, the speedy walk has a different stationary distribution $\hat{\pi}_K$ called the ``speedy distribution".  
   Denote by the random variable $\tau_i$ the stopping time which is equal to the number of proper+improper steps taken until the ball walk has taken $i$ proper steps.  Then the random walk $Z_1, Z_2, \ldots$ where $Z_i = X_{\tau_i}$ is the ``speedy walk".

We recall the following Theorem\footnote{The ``M-distance" used in \cite{kannan1997random} is bounded above by the warmness $\beta$, and bounded below by the TV distance.  So the result we quote here is in fact weaker than the ``M-distance" version of the result.}
 from \cite{kannan1997random}, of which Theorem 18 and the following paragraph in \cite{lee2017eldan} is a corollary:
\begin{lemma} [Speedy walk (Theorem 18 and following paragraph in \cite{lee2017eldan}, Theorem 4.1 in \cite{kannan1997random})] \label{lemma:SpeedyWalk}
Suppose that $K$ is 15-isotropic and fix $\hat{\epsilon}>0$. 
 Given an initial point $X_0$ which is a $\beta$-warm start with respect to the speedy distribution, the ball walk on $K$ with step size $\eta \geq \frac{1}{800\sqrt{n\log(\nicefrac{n}{\hat{\epsilon}})}}$ satisfies
\be
\|\mathcal{L}(X_{\tau_i}) - \hat{\pi}_K\|_{\mathrm{TV}} \leq \hat{\epsilon}
\ee
if $i \geq c n^{2.5} \log^3(\frac{\beta}{\hat{\epsilon}})$ where $c>0$ is a universal constant.

If instead the ball walk starts from a non-random point which is a distance at least $n^{-c_1}$ for any constant $c_1$, then the ball walk on $K$ with step size  $\eta \geq \frac{1}{800\sqrt{n\log(\nicefrac{n}{\hat{\epsilon}})}}$ satisfies 
\be
\|\mathcal{L}(X_{\tau_i}) - \hat{\pi}_K\|_{\mathrm{TV}} \leq \hat{\epsilon}
\ee
if $i \geq c_2 n^{2} D (\log \log D) \log^3(\nicefrac{n}{\hat{\epsilon}})$ where $D$ is the diameter of $K$ and $c_2$ is a constant that depends only on $c_1$.
\end{lemma}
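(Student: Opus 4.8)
The plan is to obtain the lemma directly from the two cited external results, so the real work is bookkeeping: checking that the hypotheses here are exactly (or are implied by) those of Theorem~18 and the following paragraph of \cite{lee2017eldan} and of Theorem~4.1 of \cite{kannan1997random}, and that the quoted step-size threshold and iteration counts line up. First I would recall the underlying mechanism. The speedy walk $Z_i = X_{\tau_i}$ is a reversible Markov chain with stationary distribution $\hat{\pi}_K$, and its mixing is controlled by a conductance (equivalently, isoperimetric) estimate: for a convex body in $O(1)$-isotropic position the relevant isoperimetric constant is bounded below by the current-best KLS-type estimate (of order $n^{-1/4}$ up to logarithmic factors), and combining this with the local-conductance estimate for a ball-walk step of size $\eta = \Theta(1/\sqrt{n})$ yields, via the Lov\'asz--Simonovits conductance-to-mixing argument, the $\tilde{O}(n^{2.5})$ step count; this is precisely what Theorem~18 of \cite{lee2017eldan} packages, and the step-size condition $\eta \geq \frac{1}{800\sqrt{n\log(\nicefrac{n}{\hat\epsilon})}}$ is the one under which the local-conductance estimate holds.

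For the warm-start half of the statement, I would simply invoke Theorem~18 of \cite{lee2017eldan} with warmness parameter $\beta$ and target TV error $\hat\epsilon$, giving $\|\mathcal{L}(X_{\tau_i}) - \hat{\pi}_K\|_{\mathrm{TV}} \leq \hat\epsilon$ once $i \geq c\, n^{2.5}\log^3(\beta/\hat\epsilon)$. Two bookkeeping points must be checked. First, as flagged in the footnote, \cite{kannan1997random} phrases convergence in the ``$M$-distance''; since the $M$-distance dominates the TV distance and is itself dominated by the warmness $\beta$, an $M$-distance bound transfers (with the stated powers of $\log$) to the TV bound we want, with nothing lost. Second, Lemma~\ref{lemma:SpeedyWalk} hypothesizes $15$-isotropic position while the cited theorems may be stated for $1$-isotropic (or general $O(1)$-isotropic) bodies; since rescaling by a bounded factor perturbs all the constants by $O(1)$, the ``$15$'' is absorbed into the universal constant $c$.

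For the non-random (cold) start, I would use the following paragraph of \cite{lee2017eldan}: starting from a point at distance at least $n^{-c_1}$ from $\partial K$, one first runs the speedy walk for $\tilde{O}(n^{2} D \log\log D)$ steps to reach a distribution that is $O(1)$-warm with respect to $\hat{\pi}_K$ (the diameter $D$ enters because the chain must traverse the body from a deterministic start, and the $\log\log D$ improvement comes from the refined isoperimetry/averaging argument there), after which the warm-start bound already established takes over; the constant $c_2$ depends on $c_1$ only through the $n^{-c_1}$ interior condition needed to start the speedy walk in the first place. I do not expect a genuine obstacle here, since the content lies entirely in the cited papers; the only care required is the parameter matching described above, in particular confirming that after composing the burn-in and warm phases the resulting iteration bound carries the $\log^3(\nicefrac{n}{\hat\epsilon})$ factor (rather than a bare $\log$) asserted in the statement.
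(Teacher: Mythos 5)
The paper offers no proof of this lemma beyond the citation itself (plus the footnote noting that the $M$-distance of \cite{kannan1997random} dominates TV distance and is dominated by the warmness $\beta$), and your proposal is exactly that same invocation of Theorem 18 and the following paragraph of \cite{lee2017eldan} together with Theorem 4.1 of \cite{kannan1997random}, with the appropriate parameter bookkeeping. Your additional checks (the $M$-distance transfer and absorbing the $15$-isotropic normalization into the universal constant) are consistent with what the paper implicitly relies on, so the proposal is correct and takes essentially the same approach.
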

\noindent Let $\lambda$ be the probability that the ball walk proposes a step inside the convex body $K$ from a point uniformly distributed on $K$; we call $\lambda$ the \textit{average local conductance}. 
 We will use the following results \cite{kannan1997random} which allow one to obtain improved average-case running time bounds for the ball walk.\footnote{Since our goal is to put the convex body in isotropic position, which fails with exponentially small probability in the running time if we obtain iid points, it is enough to bound the average-case running time since we can always just start over if the running time ends up being too long.}
\begin{lemma}[Corollary 4.6 in \cite{kannan1997random}] \label{lemma:AverageConductance}
The average conductance of $K$ satisfies  $\lambda \geq 1- \frac{\eta \sqrt{n}}{2r}$ if $K$ contains a ball of radius $r$.
\end{lemma}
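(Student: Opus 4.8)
The plan is to rewrite $1-\lambda$ as an average volume ratio and then bound it by an average projection volume, the point being that averaging over \emph{directions} saves a factor $\sqrt n$ over the naive worst-case bound. Write the local conductance as $\ell(x):=\mathbb{P}_{\xi\sim\mathrm{unif}(B)}(x+\eta\xi\in K)$, so that $\lambda=\frac{1}{\mathrm{vol}(K)}\int_K\ell(x)\,dx$. Substituting $u=\eta\xi$ and using Fubini,
\[
1-\lambda=\frac{1}{\mathrm{vol}(K)\,\mathrm{vol}(\eta B)}\int_{\eta B}\bigl(\mathrm{vol}(K)-\mathrm{vol}\bigl(K\cap(K-u)\bigr)\bigr)\,du=\mathbb{E}_{u\sim\mathrm{unif}(\eta B)}\!\left[\frac{\mathrm{vol}\bigl((K+u)\setminus K\bigr)}{\mathrm{vol}(K)}\right],
\]
where I used $\mathrm{vol}(K)-\mathrm{vol}(K\cap(K-u))=\mathrm{vol}((K+u)\setminus K)$ (translate $K\cap(K-u)$ by $u$).

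Next I would bound $\mathrm{vol}((K+u)\setminus K)$ for a fixed displacement by slicing $\mathbb{R}^n$ into lines parallel to $u$: since $K$ is convex, each such line meets $K$ in a segment, and shifting that segment by $u$ adds at most $\|u\|$ of new length, so integrating over the lines (Fubini) gives $\mathrm{vol}((K+u)\setminus K)\le\|u\|\,\mathrm{vol}_{n-1}(\pi_{u^\perp}K)$, where $\pi_{u^\perp}K$ is the orthogonal projection of $K$ onto the hyperplane $u^\perp$. Writing $u=s\theta$ with $s\in[0,\eta]$ and $\theta\in S^{n-1}$ — under $u\sim\mathrm{unif}(\eta B)$ the magnitude $s$ and direction $\theta$ are independent, $s$ has density $\propto s^{n-1}$ — and noting that the projection depends only on $\theta$,
\[
1-\lambda\le\frac{1}{\mathrm{vol}(K)}\,\mathbb{E}_{s}[s]\cdot\mathbb{E}_{\theta\sim\mathrm{unif}(S^{n-1})}\bigl[\mathrm{vol}_{n-1}(\pi_{\theta^\perp}K)\bigr],\qquad\mathbb{E}_s[s]=\tfrac{n}{n+1}\eta.
\]
For the angular average I would invoke Cauchy's surface area formula, $\mathbb{E}_{\theta}\bigl[\mathrm{vol}_{n-1}(\pi_{\theta^\perp}K)\bigr]=\frac{\kappa_{n-1}}{n\kappa_n}\,\mathrm{vol}_{n-1}(\partial K)$ with $\kappa_k$ the volume of the unit $k$-ball, and bound $\mathrm{vol}_{n-1}(\partial K)$ via the inscribed ball: since $rB\subseteq K$, decomposing $K$ into cones with apex at the center of that ball gives $\mathrm{vol}(K)\ge\frac{r}{n}\,\mathrm{vol}_{n-1}(\partial K)$.

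Combining the three displays, $1-\lambda\le\frac{n}{n+1}\,\eta\cdot\frac{\kappa_{n-1}}{n\kappa_n}\cdot\frac{n}{r}=\frac{n}{n+1}\cdot\frac{\eta}{r}\cdot\frac{\kappa_{n-1}}{\kappa_n}$, and the elementary estimate $\frac{\kappa_{n-1}}{\kappa_n}=\frac{1}{\sqrt\pi}\cdot\frac{\Gamma(n/2+1)}{\Gamma((n+1)/2)}\le\sqrt{\tfrac{n+1}{2\pi}}$ (which follows from log-convexity of $\Gamma$, i.e. $\Gamma(x+\tfrac12)^2\le\Gamma(x)\Gamma(x+1)$) yields $1-\lambda\le\frac{n}{\sqrt{n+1}}\cdot\frac{\eta}{r\sqrt{2\pi}}\le\frac{\eta\sqrt n}{r\sqrt{2\pi}}\le\frac{\eta\sqrt n}{2r}$, as claimed. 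The one genuinely essential idea is the second step combined with the angular averaging: a crude containment such as $K-u\supseteq(1-\|u\|/r)K$ only gives $1-\lambda=O(\eta n/r)$, and the extra $\sqrt n$ is exactly the gap between the worst-case direction (where $K$ is thin, contributing like $n/r$) and the average of $\mathrm{vol}_{n-1}(\pi_{\theta^\perp}K)$ over $\theta$, which Cauchy's formula shows is of order $\sqrt n/r$. A minor point to check is that the constants really leave the stated factor $\tfrac12$ in place — the $\Gamma$-estimate is an equality at $n=1$ — and they do, since $\sqrt{2\pi}=2.50\ldots>2$.
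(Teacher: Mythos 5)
Your proof is correct: the identity $1-\lambda=\mathbb{E}_{u}\bigl[\mathrm{vol}((K+u)\setminus K)/\mathrm{vol}(K)\bigr]$, the slicing bound $\mathrm{vol}((K+u)\setminus K)\le\|u\|\,\mathrm{vol}_{n-1}(\pi_{u^\perp}K)$, Cauchy's surface area formula, and the cone/divergence bound $\mathrm{vol}_{n-1}(\partial K)\le\frac{n}{r}\mathrm{vol}(K)$ all check out, and the constants do land below $\frac{\eta\sqrt{n}}{2r}$ since $\sqrt{2\pi}>2$. The paper does not reprove this lemma but simply cites Corollary 4.6 of \cite{kannan1997random}; your argument is essentially the one given there (note that the paper later reuses exactly your surface-to-volume step in the proof of Lemma \ref{lemma:isotropic}), so nothing further is needed.
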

\noindent Using average local conductance  \cite{kannan1997random} gives the following Lemma on the expected number of improper steps taken by the ball walk:
\begin{lemma}[Theorem 4.10b in \cite{kannan1997random}] \label{lemma:NonproperSteps}
Suppose that $X_0$ is distributed according to the speedy distribution on $K$. 
 Fix $t>0$. 
  Then the expected number of (proper and improper) ball walk steps needed to get $t$ proper steps is at most $\frac{2t}{\lambda}$.
\end{lemma}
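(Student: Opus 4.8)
The plan is to reproduce the conditioning-and-averaging (renewal-type) argument from \cite{kannan1997random}. For $x\in K$ write $\ell(x)$ for the \emph{local conductance} at $x$, i.e. the probability that a single ball-walk proposal $x+\eta\xi$ with $\xi\sim\mathrm{unif}(B(0,1))$ lies in $K$; by the definition of the average local conductance this gives $\lambda=\mathbb{E}_{x\sim\pi_K}[\ell(x)]$. The two ingredients I would assemble are: (i) the speedy distribution is the $\ell$-reweighting of the uniform distribution, $\hat\pi_K(x)=\ell(x)\pi_K(x)/\lambda$ — this follows from a short detailed-balance check, since the speedy-walk transition density from $x$ is uniform on $K\cap B(x,\eta)$, whose volume is $c_n\eta^n\ell(x)$, so $\hat\pi_K(x)P(x,y)\propto \mathbbm{1}\{x\in K,\,y\in K,\,\|x-y\|_2\le\eta\}$ is symmetric in $(x,y)$; and (ii) started from a point $x$, the number of ball-walk steps (improper, plus the terminating proper one) until the next proper step is geometric with mean $1/\ell(x)$ (or $2/\ell(x)$ with the standard lazy-$\tfrac12$ convention), independent of the past.

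Granting these, the argument runs as follows. Since $X_0\sim\hat\pi_K$, the speedy walk $Z_0=X_0,\ Z_1=X_{\tau_1},\ Z_2=X_{\tau_2},\dots$ is stationary, so $Z_i\sim\hat\pi_K$ for every $i$. Let $W_i:=\tau_{i+1}-\tau_i$ be the number of ball-walk steps taken while the chain sits at $Z_i$ before its $(i{+}1)$st proper step (with $\tau_0=0$), so that $\tau_t=\sum_{i=0}^{t-1}W_i$ is exactly the quantity to bound. Conditioned on $Z_i=x$, the proposals used while at $x$ are fresh i.i.d.\ uniforms and the chain does not move until a proposal succeeds, whence $\mathbb{E}[W_i\mid Z_i=x]\le 2/\ell(x)$. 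Therefore, by linearity of expectation together with identity (i),
\be
\mathbb{E}[\tau_t]=\sum_{i=0}^{t-1}\mathbb{E}\!\left[\mathbb{E}[W_i\mid Z_i]\right]\le\sum_{i=0}^{t-1}\mathbb{E}_{x\sim\hat\pi_K}\!\left[\frac{2}{\ell(x)}\right]=2t\int_K\frac{1}{\ell(x)}\cdot\frac{\ell(x)\pi_K(x)}{\lambda}\,\mathrm{d}x=\frac{2t}{\lambda}\int_K\pi_K(x)\,\mathrm{d}x=\frac{2t}{\lambda},
\ee
which is the claimed bound (and in fact an equality $t/\lambda$ without the laziness factor).

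The step I expect to be the crux is identity (i) together with the stationarity of the speedy chain: the waiting times $W_i$ are strongly correlated with the locations $Z_i$ — the closer $Z_i$ is to $\partial K$, the smaller $\ell(Z_i)$ and the larger $W_i$ — so one cannot simply multiply $\mathbb{E}[W_0]$ by $t$ using the uniform law; the whole point is that the factor $1/\ell(x)$ in $\mathbb{E}[W_i\mid Z_i=x]$ cancels exactly against the $\ell(x)$-weighting in $\hat\pi_K$, producing the clean answer. A minor technical point is that in Algorithm~\ref{alg:BallModified} a proposal can fail either by violating some inequality of $\{Ax\le b\}$ or by leaving the ambient ball $\rho B$; both are folded into the single event ``$x+\eta\xi\notin K$'' defining $\ell(x)$, so no separate bookkeeping is needed. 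Finally I would note that running the same computation from a $\beta$-warm start with respect to $\hat\pi_K$ in place of exact stationarity loses only a factor $\beta$, which is the form actually combined with Lemma~\ref{lemma:SpeedyWalk} downstream.
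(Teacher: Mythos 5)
Your proof is correct. Note that the paper does not prove this lemma at all: it is imported verbatim from Theorem 4.10b of \cite{kannan1997random}, with only a remark explaining that the bound extends from the specific $t$ used there to all $t$ once the start is speedy-stationary. Your argument is the standard renewal computation underlying the KLS result: the detailed-balance identification $\hat\pi_K \propto \ell\,\pi_K$, the geometric waiting time with conditional mean $1/\ell(x)$ given $Z_i=x$ (valid by the strong Markov property, since the proposals consumed after $\tau_i$ are fresh), and the exact cancellation $\int_K \ell(x)^{-1}\,\hat\pi_K(x)\,\mathrm{d}x = 1/\lambda$, giving $\mathbb{E}[\tau_t]=t/\lambda\le 2t/\lambda$ for the non-lazy walk defined in the paper. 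Linearity of expectation over the stationary waiting times also delivers, as a byproduct, exactly the ``holds for every $t$'' extension that the paper's remark argues for separately, so your writeup is a strictly more self-contained route to the same statement.
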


\begin{remark}
Theorem 4.10b \cite{kannan1997random} was stated for a specific value of $t$ (their bound on the mixing time). 
 However, in the special case when we already start at the stationary distribution of the speedy walk, if the expectation holds for one value of $t$, it must also hold for every value of $t$. 
  (Moreover, we note that even for non-stationary starts (which we do not need here) their \textit{proof} holds for all values of $t$.)
\end{remark}

\subsection{Bounding the accuracy} \label{sec:BoundingAccuracy}

\begin{lemma} \label{lemma:CombinedHitRunSpeedy}
Assume that $K$ is a 30-isotropic convex body containing $B(0,\frac{1}{10})$, and that $\mathrm{dist}(X_0, \partial K) \geq n^{-3}$. 
 Fix $\hat{\epsilon} >0$. 
  Then Algorithm \ref{alg:BallModified} with $\eta \leq \frac{1}{10 \sqrt{8n \log(\nicefrac{n}{\hat{\epsilon}})}}$, $\mathrm{Modified} = \mathrm{OFF}$, and $\mathcal{I} = c_2 n^{2} \rho (\log \log \rho) \log^3(\nicefrac{n}{\hat{\epsilon}})$ outputs independent samples $Z_1, Z_2,\ldots, Z_p$, where each $Z_i$ has TV distance $10\hat{\epsilon}$ to the uniform distribution on $K$.
\end{lemma}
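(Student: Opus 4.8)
The plan is to assemble the statement from three ingredients that are already in place: (i) the mixing bound for the speedy walk from a near-interior deterministic start (the non-random-start part of Lemma \ref{lemma:SpeedyWalk}); (ii) the scaling-and-rejection post-processing of \cite{kannan1997random} that turns speedy-distributed samples into approximately uniform ones (Steps 35--40 of Algorithm \ref{alg:BallModified}); and (iii) the cap-concentration estimate already used in Lemma \ref{thm:failure}, which I will reuse to bound the bias introduced by the rejection step. Note that with $\mathrm{Modified}=\mathrm{OFF}$ the inner while loop of Algorithm \ref{alg:BallModified} is literally the standard ball walk on $K$, and $Y_s$ records its position at each proper step, so at termination $Y_{\mathcal I}=X_{\tau_{\mathcal I}}$ in the notation of Section \ref{sec:SpeedyWalkMixing}.

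First I would verify the hypotheses of Lemma \ref{lemma:SpeedyWalk}. The start $X_0$ is deterministic with $\mathrm{dist}(X_0,\partial K)\ge n^{-3}$, so its non-random-start bound applies with $c_1=3$; the diameter of $K$ is at most $2\rho$ since $K\subseteq\rho B$ by construction, so our choice $\mathcal I=c_2 n^{2}\rho(\log\log\rho)\log^{3}(n/\hat\epsilon)$ exceeds the required number of proper steps $c_2 n^{2} D(\log\log D)\log^{3}(n/\hat\epsilon)$ once $c_2$ absorbs an absolute constant; and $\eta$ can be taken in the nonempty range $[\frac{1}{800\sqrt{n\log(n/\hat\epsilon)}},\frac{1}{10\sqrt{8n\log(n/\hat\epsilon)}}]$, since $10\sqrt8<800$. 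Lemma \ref{lemma:SpeedyWalk} is stated for $15$-isotropic bodies, but its proof through the isoperimetric bounds of \cite{lee2017eldan} goes through for any $O(1)$-isotropic body with the universal constant depending on the isotropy parameter, so it also covers $30$-isotropic bodies after enlarging $c_2$. This gives $\|\mathcal L(Y_{\mathcal I})-\hat{\pi}_K\|_{\mathrm{TV}}\le\hat\epsilon$, where $\hat{\pi}_K$ has density proportional to the local conductance $\ell_K(x):=\mathrm{vol}(B(x,\eta)\cap K)/\mathrm{vol}(B(x,\eta))$ on $K$.

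Next I would analyze the post-processing $Z_k=\frac{2n}{2n-1}Y_s$, accepted iff $Z_k\in K$. Acceptance is the event $Y:=Y_{\mathcal I}\in K_-:=\frac{2n-1}{2n}K$, and on acceptance the scaled point has density proportional to $\ell_K(\frac{2n-1}{2n}z)$ on $K$ when $Y$ is exactly speedy-distributed; call this distribution $\nu^{\star}$. Since $B(0,\frac1{10})\subseteq K$ and $K$ is convex, every $z\in K$ has $w:=\frac{2n-1}{2n}z$ with $\mathrm{dist}(w,\partial K)\ge\frac1{2n}\cdot\frac1{10}=\frac1{20n}$ (scaling toward the origin pulls the boundary in by at least this much). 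A ball-walk step from $w$ can leave $K$ only if $\xi$ has component at least $\mathrm{dist}(w,\partial K)/\eta$ along the outer normal of the nearest face, so by the cap-concentration inequality used in Lemma \ref{thm:failure}, $\ell_K(w)\ge 1-e^{-(n-2)(\mathrm{dist}(w,\partial K)/\eta)^2/2}$; plugging in $\mathrm{dist}(w,\partial K)\ge\frac1{20n}$ and $\eta^{2}\le\frac1{800n\log(n/\hat\epsilon)}$ gives $\ell_K(w)\ge 1-\hat\epsilon$ (the constant $\frac1{10\sqrt8}$ in the hypothesis on $\eta$ is precisely what forces an $\hat\epsilon$ on the right here, with room to spare for small $\hat\epsilon$). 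Hence $\nu^{\star}$ has density within a factor $1\pm\hat\epsilon$ of uniform on $K$, so $\|\nu^{\star}-\pi_K\|_{\mathrm{TV}}\le\hat\epsilon$; and, using $\ell_K\le 1$ on $K$ and $\ell_K\ge 1-\hat\epsilon$ on $K_-$, also $\hat{\pi}_K(K_-)\ge(1-\hat\epsilon)(\frac{2n-1}{2n})^{n}\ge\frac14$.

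Finally I would combine the estimates. Scaling is total-variation preserving, and $\mathcal L(Z_k)$ and $\nu^{\star}$ are obtained by conditioning the pushforwards of $\mathcal L(Y_{\mathcal I})$ and $\hat{\pi}_K$ under $y\mapsto\frac{2n}{2n-1}y$ on the same event $\{\,\cdot\in K\,\}$, which has probability at least (an absolute constant) $\ge\frac14$ on the speedy side. The standard estimate for how total variation behaves under conditioning on such an event, together with the triangle inequality, then yields $\|\mathcal L(Z_k)-\pi_K\|_{\mathrm{TV}}\le 10\hat\epsilon$. Independence of $Z_1,\dots,Z_p$ is immediate since each is produced in a separate pass of the outer loop started from the fixed point $Y_0$ with a fresh block of i.i.d.\ vectors $\xi_i$; and that $i_{\mathrm{max}}$ is large enough for the algorithm to succeed with high probability follows from Lemmas \ref{lemma:AverageConductance} and \ref{lemma:NonproperSteps}, as our $\eta$ makes the average local conductance $\lambda\ge 1-O(1/\sqrt{\log(n/\hat\epsilon)})$. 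I expect the main obstacle to be step (iii): making the rejection bias small enough and tracking the constants through the conditioning step so the final bound is genuinely $\le 10\hat\epsilon$ rather than a larger multiple of $\hat\epsilon$.
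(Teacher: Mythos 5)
Your overall skeleton matches the paper's: invoke the non-random-start part of Lemma \ref{lemma:SpeedyWalk} (with $D\le 2\rho$ since the algorithm's body is intersected with $\rho B$) to get $\|\mathcal L(Y_{\mathcal I})-\hat\pi_K\|_{\mathrm{TV}}\le\hat\epsilon$ for each independent pass, and then argue that the scaling-and-rejection post-processing converts speedy-distributed samples into approximately uniform ones. The paper, however, disposes of the second step in one line by citing Theorem 4.16 of \cite{kannan1997random}, whereas you re-derive it; it is in that re-derivation that there is a genuine gap.

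The problematic step is the pointwise local-conductance bound $\ell_K(w)\ge 1-e^{-(n-2)(\mathrm{dist}(w,\partial K)/\eta)^2/2}$ for $w\in\frac{2n-1}{2n}K$, which you justify by saying a step can exit $K$ only if $\xi$ has a large component ``along the outer normal of the nearest face.'' For a general polytope the exit event is $\{\exists j:\ \eta\langle\xi,n_j\rangle>h_j(w)\}$, a \emph{union} of up to $m$ spherical caps whose normals need not be close to one another (think of $w$ near a vertex where many facets meet, e.g.\ a corner of a cube, where the caps are essentially disjoint). The single-cap concentration inequality therefore only yields $1-\ell_K(w)\le 2m\,e^{-(n-2)(d/\eta)^2/2}$ after a union bound. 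By your own accounting the exponent is tuned so that $e^{-(n-2)(d/\eta)^2/2}\approx\hat\epsilon/n$ with $d=\frac{1}{20n}$ and $\eta^2\le\frac{1}{800n\log(n/\hat\epsilon)}$ — there is, as you say, no room to spare — so the union bound gives only $1-\ell_K(w)\lesssim m\hat\epsilon/n$, which is not $O(\hat\epsilon)$ in the regime $m\gg n$ that the paper cares about. Consequently your bound $\|\nu^\star-\pi_K\|_{\mathrm{TV}}\le\hat\epsilon$, and hence the final $10\hat\epsilon$, does not follow as written. The gap is repairable — either cite the rejection-sampling theorem of \cite{kannan1997random} as the paper does, or carry the factor $m$ through and shrink $\eta$ by a further $\sqrt{\log m}$ (harmless for the $\tilde O$ bounds, but it changes the hypothesis on $\eta$ in the lemma statement) — but as it stands the claimed constant-free cap bound is false for convex bodies with many facets near a point. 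The remaining ingredients (the conditional-density computation, the acceptance probability $\ge(1-\hat\epsilon)(1-\frac{1}{2n})^n\ge\frac14$, the TV-under-conditioning estimate, and independence across passes) are fine modulo that pointwise bound.
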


\begin{proof}
Since the speedy walk is initialized at a point $X_0$, which is a distance at least $n^{-3}$ from the boundary of $K$, by Lemma \ref{lemma:SpeedyWalk} we have that the samples $Y_1, Y_2, \ldots$ obtained by running the speedy walk for $\mathcal{I} = c_2 n^{2} \rho (\log \log \rho) \log^3(\nicefrac{n}{\hat{\epsilon}})$ proper steps each satisfy $\|\mathcal{L}(Y_s) - \hat{\pi}_K\|_{\mathrm{TV}} \leq \hat{\epsilon}$ for all $s$. 
 Moreover, these points are independent since each run of the speedy walk starts at the same point $X_0$.

By Theorem 4.16 in \cite{kannan1997random} we have that the samples $Z_1, Z_2, \ldots$ obtained from $Y_1, Y_2, \ldots$ by the rejection sampling step in Algorithm \ref{alg:BallModified} are uniformly distributed on $K$ with TV error $10 \hat{\epsilon}$. 
 Moreover, since $Y_1, Y_2, \ldots$ are jointly independent, $Z_1, Z_2, \ldots, Z_p$ are also jointly independent.
\end{proof}

\section{Rounding a polytope}\label{section:RoundingProof}
In this section we analyze the running time and accuracy of Algorithm \ref{alg:rounding}.

\subsection{Bounding the success probability of Algorithm \ref{alg:rounding}}\label{SecRoundingSuccess}
In this section we bound the success probability of Algorithm \ref{alg:rounding}. 
 We use the following lemma (Corollary 11 in \cite{bertsimas2002solving}), which is a corollary of the main result in \cite{rudelson1999random}.

\begin{lemma}[Corollary 11 in \cite{bertsimas2002solving}] \label{lemma:rounding1}
Let $K$ be a convex set. 
 Let $Y_1,\ldots, Y_p$ be iid uniform random points in $K$ and fix $\hat{\epsilon} >0$. 
  Let $\bar{Y}:= \frac{1}{p} \sum_{i=1}^p Y_k$ and let $\hat{\Sigma}_Y := \frac{1}{p} \sum_{i=1}^p(Y_i-\bar{Y})(Y_i-\bar{Y})^\top$. 
   Then there exists an absolute constant $c$ such that if $p\geq n \times c \log^2(\frac{1}{\hat{\epsilon}}) \log^2(n)$, the convex set $K^\ddagger:=\hat{\Sigma}_Y^{-\frac{1}{2}}(K-\bar{Y})$ is 2-isotropic and $\|\mu_{K^\ddagger}\|_2 < \frac{1}{20}$ with probability at least $1-\hat{\epsilon}$.
\end{lemma}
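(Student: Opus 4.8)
The plan is to reduce to the case where $K$ is already isotropic and then quote a sample-covariance concentration bound. The conclusion of the lemma — that $\tfrac14 I_n \preccurlyeq \Sigma_{K^\ddagger} \preccurlyeq 4 I_n$ and $\|\mu_{K^\ddagger}\|_2 < \tfrac1{20}$ — is unchanged if $K^\ddagger$ is replaced by $QK^\ddagger$ for an orthogonal matrix $Q$, so I would first observe that the map $K \mapsto \hat\Sigma_Y^{-1/2}(K-\bar Y)$ commutes, up to such an orthogonal factor, with any invertible affine change of coordinates applied simultaneously to $K$ and to $Y_1,\dots,Y_p$. Concretely, with $Tx := \Sigma_K^{-1/2}(x-\mu_K)$ one has $TK - \overline{TY} = \Sigma_K^{-1/2}(K-\bar Y)$ and $\hat\Sigma_{TY} = \Sigma_K^{-1/2}\hat\Sigma_Y\Sigma_K^{-1/2}$, and (using the symmetric PSD square root throughout) a polar-decomposition identity gives $(\hat\Sigma_{TY})^{-1/2}\Sigma_K^{-1/2} = Q\,\hat\Sigma_Y^{-1/2}$ for an orthogonal $Q$. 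Hence it suffices to prove the lemma when $\mu_K = 0$ and $\Sigma_K = I_n$, i.e.\ when $\pi_K$ is an isotropic logconcave distribution.

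Under that normalization everything simplifies: if $X \sim \pi_K$ then $\hat\Sigma_Y^{-1/2}(X-\bar Y) \sim \pi_{K^\ddagger}$, so $\mu_{K^\ddagger} = -\hat\Sigma_Y^{-1/2}\bar Y$ and $\Sigma_{K^\ddagger} = \hat\Sigma_Y^{-1/2}\Sigma_K\hat\Sigma_Y^{-1/2} = \hat\Sigma_Y^{-1}$. So the lemma follows once I show that, with probability at least $1-\hat\epsilon$, (i) $\|\hat\Sigma_Y - I_n\|_{\mathrm{op}} \le \tfrac1{10}$ and (ii) $\|\bar Y\|_2 < \tfrac1{40}$: from (i) the eigenvalues of $\Sigma_{K^\ddagger} = \hat\Sigma_Y^{-1}$ lie in $[\tfrac{10}{11},\tfrac{10}{9}] \subseteq [\tfrac14,4]$, giving $2$-isotropy, and from (i)–(ii), $\|\mu_{K^\ddagger}\|_2 \le \|\hat\Sigma_Y^{-1/2}\|_{\mathrm{op}}\,\|\bar Y\|_2 \le \sqrt{\tfrac{10}{9}}\cdot\tfrac1{40} < \tfrac1{20}$. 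I would establish (i) by writing $\hat\Sigma_Y = \tfrac1p\sum_i Y_iY_i^\top - \bar Y\bar Y^\top$ and controlling the two terms: the second is $\|\bar Y\|_2^2 \le \tfrac1{1600}$ by (ii), and the first is handled as below.

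For the empirical second-moment matrix I would invoke the operator-norm concentration for sums of rank-one matrices formed from i.i.d.\ isotropic logconcave vectors — this is exactly the content of \cite{rudelson1999random}, restated quantitatively as Corollary~11 of \cite{bertsimas2002solving}: for $p \ge c\,n\log^2 n\,\log^2(\tfrac1{\hat\epsilon})$ one has $\|\tfrac1p\sum_i Y_iY_i^\top - I_n\|_{\mathrm{op}} \le \tfrac1{20}$ with probability $\ge 1-\tfrac{\hat\epsilon}{2}$, which combined with the previous line gives (i). For $\bar Y$ I would use that $\bar Y$ is itself logconcave with covariance $\tfrac1p I_n$, so $\mathbb{E}\|\bar Y\|_2^2 = n/p$ and a standard deviation bound for the Euclidean norm of a logconcave vector yields $\|\bar Y\|_2 \le C\sqrt{n/p}\,\sqrt{\log(\tfrac1{\hat\epsilon})} < \tfrac1{40}$ with probability $\ge 1-\tfrac{\hat\epsilon}{2}$ once $c$ is chosen large enough. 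A union bound over the two events finishes the proof.

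The hard part is the operator-norm concentration bound for $\tfrac1p\sum_i Y_iY_i^\top$: its proof is genuinely nontrivial, combining symmetrization with Rudelson's lemma on $\mathbb{E}\|\sum_i \varepsilon_i Y_iY_i^\top\|_{\mathrm{op}}$, a covering/chaining argument, and the sub-exponential behaviour of one-dimensional marginals of isotropic logconcave distributions together with concentration of $\|Y_i\|_2$ near $\sqrt n$ — the $\log^2 n$ overhead beyond the optimal sample size $p \asymp n$ being the price of the net argument. Since this is precisely the theorem cited in the statement, I would use it as a black box; the remaining steps (the affine reduction, the identities $\Sigma_{K^\ddagger} = \hat\Sigma_Y^{-1}$, $\mu_{K^\ddagger} = -\hat\Sigma_Y^{-1/2}\bar Y$, and the one-line moment computation for $\bar Y$) are elementary.
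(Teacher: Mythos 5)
Your proposal is correct, but it is worth noting that the paper does not prove this statement at all: Lemma \ref{lemma:rounding1} is imported verbatim as Corollary 11 of \cite{bertsimas2002solving} (itself a quantitative packaging of \cite{rudelson1999random}), and the only commentary the paper adds is that the bound $\|\mu_{K^\ddagger}\|_2 < \frac{1}{20}$ is implicit in the Bertsimas--Vempala proof rather than in their statement. What you have written is essentially a reconstruction of that proof: the affine reduction via the polar decomposition $(\hat\Sigma_{TY})^{-1/2}\Sigma_K^{-1/2} = Q\,\hat\Sigma_Y^{-1/2}$ is valid (and needed, since the lemma assumes nothing about $K$ beyond convexity and full-dimensionality), the identities $\Sigma_{K^\ddagger} = \hat\Sigma_Y^{-1}$ and $\mu_{K^\ddagger} = -\hat\Sigma_Y^{-1/2}\bar Y$ are right, and the numerics $[\frac{10}{11},\frac{10}{9}] \subseteq [\frac14,4]$ and $\sqrt{\frac{10}{9}}\cdot\frac1{40} < \frac1{20}$ check out. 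Two small points deserve emphasis. First, your treatment of $\bar Y$ correctly avoids the trap of using only $\mathbb{E}\|\bar Y\|_2^2 = n/p$ with Markov's inequality, which would give a bound degrading like $\hat\epsilon^{-1/2}$ and could not deliver failure probability $\hat\epsilon$; the sub-exponential tail for the norm of a logconcave vector (Borell/Paouris) is genuinely needed here, and you invoke it. Second, the operator-norm concentration for $\frac1p\sum_i Y_iY_i^\top$ that you use as a black box is the entire mathematical content of the cited result, so your argument is not circular only because you isolate that one ingredient (Rudelson's theorem) rather than the full Corollary 11; this is a legitimate and arguably more informative way to present the lemma than the paper's bare citation.
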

\noindent Note that in the proof of Corollary 11 in \cite{bertsimas2002solving} it is shown that $\|\mu_{K^\ddagger}\|_2 < \frac{1}{20}$, although this is not mentioned explicitly in \cite{bertsimas2002solving} in the statement of their Corollary.

Fix $\epsilon>0$. From now on we fix the parameters $p$, $\mathcal{I}$, $\eta$ in Algorithm \ref{alg:BallModified} as follows:
\begin{itemize}
\item $p\geq n \times c \log^2(\frac{1}{\epsilon}) \log^2(n)$,
\item $\mathcal{I} = c_2 n^{2} 20 \sqrt{n}\log\left(\frac{40n^2 p^2}{\epsilon}\right) \left(\log \log 20 \sqrt{n}\log\left(\frac{40n^2 p^2}{\epsilon}\right)\right) \log^3\left(\frac{np^2}{\epsilon}\right) \log \log\left(\frac{R}{r}\right)$,
\item $\eta = \frac{1}{30\sqrt{n\log(\nicefrac{n}{\epsilon})}}$.
\end{itemize}

\begin{lemma} \label{lemma:isotropic}
Suppose that we set parameters $\mathrm{modified} = \mathrm{OFF}$, $i_\mathrm{max} = \infty$. 
 Then for any value of $\alpha>0$, with probability at least $1-\epsilon$ the convex body  $\tilde{K}_{i^\star} := \hat{\Sigma}_{i^\star}^{-\frac{1}{2}}(K-  \hat{\mu}_{i^\star})$ outputed by Algorithm \ref{alg:rounding} is in 2-isotropic position. 
  Moreover, the expected number of iterations of each of the ``While" loops in  Algorithm \ref{alg:rounding} is bounded above by $2$.
\end{lemma}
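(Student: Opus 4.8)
The claim has two halves. For the first --- that $\tilde K_{i^\star}$ is in $2$-isotropic position with probability $\ge 1-\epsilon$ --- I would argue by induction on the iteration index $i$, carrying the invariant $H_i$: \emph{conditioned on the success of iterations $1,\dots,i-1$, the body $\hat K_i=\hat\Sigma_i^{-1/2}(K_i-\hat\mu_i)$ is in $15$-isotropic position and contains $B(0,\tfrac1{10})$.} The motor of the induction is the elementary fact $\mathrm{vol}(K_i)\le e\,\mathrm{vol}(K_{i-1})$: since $0\in\tfrac1{10}B\subseteq K$ one has $\tfrac1{1+1/n}K\subseteq K$, so $\tfrac1{1+1/n}K_i=(1+\tfrac1n)^{i-1}rB\cap\tfrac1{1+1/n}K\subseteq K_{i-1}$ and hence $\mathrm{vol}(K_i)=(1+\tfrac1n)^n\,\mathrm{vol}(\tfrac1{1+1/n}K_i)\le e\,\mathrm{vol}(K_{i-1})$; a standard comparison of the covariance and mean of a convex body with those of a subset of at least $\tfrac1e$ of its volume then shows that any affine map putting $K_{i-1}$ into $2$-isotropic position puts $K_i$ into $15$-isotropic position. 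The base case $H_1$ is the same statement with ``$K_0$''$=rB$, using that $\hat\Sigma_0,\hat\mu_0$ put $rB$ exactly into $1$-isotropic position and $rB\subseteq K_1\subseteq(1+\tfrac1n)rB$.

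\textbf{Inductive step.} Assume $H_i$. Since $\hat K_i$ is $O(1)$-isotropic and log-concave, the truncation ball $20\sqrt n\log(\tfrac{40n^2}\epsilon)B$ omits only a $\mathrm{poly}(1/n)\cdot\epsilon$ fraction of its mass (this is what fixes the constant $40n^2$); hence $\hat K_i':=\hat K_i\cap 20\sqrt n\log(\tfrac{40n^2}\epsilon)B$ is still $30$-isotropic, still contains $B(0,\tfrac1{10})$, and $\|\pi_{\hat K_i'}-\pi_{\hat K_i}\|_{\mathrm{TV}}$ is negligible. The warm start $X_0$ produced in the previous iteration's Steps~15--21 lies in the $n^{-3}$-interior of $\hat K_i'$ (it is $n^{-3}$-deep in the smaller body $\tilde K_{i-1}\cap(\text{ball})\subseteq\hat K_i'$, so the ball $B(X_0,n^{-3})$ is contained in $\hat K_i'$ as well), so Lemma~\ref{lemma:CombinedHitRunSpeedy} applies to the call in Step~11 with $\mathrm{Modified}=\mathrm{OFF}$ and the chosen $\mathcal I,\eta$: the $p$ outputs $Z_1,\dots,Z_p$ are independent and within TV error $10\hat\epsilon$ of $\pi_{\hat K_i'}$, where $\hat\epsilon$ is the (polynomially small) accuracy built into $\mathcal I$. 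Couple each $Z_j$ to an exact uniform draw on $\hat K_i$ (the coupling error absorbing both the $10\hat\epsilon$ of Lemma~\ref{lemma:CombinedHitRunSpeedy} and the negligible gap $\|\pi_{\hat K_i'}-\pi_{\hat K_i}\|_{\mathrm{TV}}$): with probability $\ge 1-p\cdot(\text{poly-small})$ the $Z_j$ are exactly $p$ i.i.d.\ uniform points of $\hat K_i$, and on that event Lemma~\ref{lemma:rounding1} (applied with the $Z_j$ in the role of the $Y_j$; $p$ is chosen large enough) gives, with probability $\ge 1-\hat\epsilon'$, that the empirically normalized body $\hat\Sigma_Z^{-1/2}(\hat K_i-\bar Z)$ is $2$-isotropic with mean of norm $<\tfrac1{20}$. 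A direct computation with the update rules defining $\hat\mu_{i+1}$ and $\hat\Sigma_{i+1}$ identifies that body with $\tilde K_i:=\hat\Sigma_{i+1}^{-1/2}(K_i-\hat\mu_{i+1})$, and one more use of the volume fact ($K_i\subseteq K_{i+1}$, $\mathrm{vol}(K_{i+1})\le e\,\mathrm{vol}(K_i)$) upgrades this to $H_{i+1}$. Choosing $\hat\epsilon,\hat\epsilon'$ so that the per-iteration failure is $\le\epsilon/(i^\star-1)$ and summing over the $i^\star-1$ iterations gives overall failure $\le\epsilon$; since by the final iteration the running radius $(1+\tfrac1n)^{i^\star-1}r$ has reached $R$, so that $K_{i^\star-1}=K$, on the success event $\tilde K_{i^\star}=\hat\Sigma_{i^\star}^{-1/2}(K-\hat\mu_{i^\star})$ is in $2$-isotropic position.

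\textbf{The ``While'' loop.} Fix $i$ and condition on the success event, so that $\tilde K_i$ --- hence $\tilde K_i':=\tilde K_i\cap 20\sqrt n\log(\tfrac{40n^2}\epsilon)B$ --- is $2$-isotropic and contains $B(0,\tfrac1{10})$. Successive passes of the loop draw, with fresh independent randomness, points $\hat X_0$ within TV error $10\hat\epsilon$ of $\pi_{\hat K_i'}$ (Lemma~\ref{lemma:CombinedHitRunSpeedy} again), and the affine map of Step~17 carries $\hat K_i'$ onto $\tilde K_i'$, so each fresh $\hat X_0'$ is within TV error $10\hat\epsilon$ of $\pi_{\tilde K_i'}$, independently across passes. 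For a convex body $C\supseteq B(c,\rho)$, shrinking $C$ toward $c$ by a factor $1-t/\rho$ and using convexity gives $\{x:\mathrm{dist}(x,\partial C)\ge t\}\supseteq c+(1-\tfrac t\rho)(C-c)$, so $\mathrm{vol}(\{x:\mathrm{dist}(x,\partial C)<t\})\le(1-(1-\tfrac t\rho)^n)\,\mathrm{vol}(C)$; with $C=\tilde K_i'$, $\rho=\tfrac1{10}$, $t=n^{-3}$ this is $\le 1-(1-10n^{-3})^n\le 10n^{-2}$. Hence a fresh $\hat X_0'$ lands in the $n^{-3}$-interior of $\tilde K_i'$ with probability $\ge 1-10n^{-2}-10\hat\epsilon\ge\tfrac12$ for $n$ large, so the number of passes is stochastically dominated by a $\mathrm{Geometric}(\tfrac12)$ variable and has expectation $\le 2$. (On the complementary failure event, of probability $\le\epsilon$, the algorithm is simply restarted, as the paper notes, so this conditional bound is the relevant one.)

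\textbf{Main obstacle.} The delicate part is the accuracy bookkeeping in the inductive step --- making ``approximately uniform'' samples legitimately interchangeable with ``exactly i.i.d.\ uniform'' ones at a \emph{total} error $\le\epsilon$ across all $\Theta(n\log(R/r))$ iterations and all $p$ samples, which forces $\hat\epsilon$ (hidden in $\mathcal I$) and the mass omitted by the truncation ball to be polynomially small, while checking that every hypothesis of Lemma~\ref{lemma:CombinedHitRunSpeedy} (the $30$-isotropy of $\hat K_i'$, the $n^{-3}$-interior start, the step-size bound) survives at each iteration --- together with the linear-algebra verification, routine but easy to get wrong, that the cumulative updates $\hat\mu_{i+1},\hat\Sigma_{i+1}$ really compose the previous transformation with the Rudelson normalization of $\hat K_i$, so that the body Lemma~\ref{lemma:rounding1} certifies as $2$-isotropic is exactly $\tilde K_i$. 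The volume comparison $\mathrm{vol}(K_i)\le e\,\mathrm{vol}(K_{i-1})$ and its consequence for isotropy is the conceptual core of the induction and is used as a black box.
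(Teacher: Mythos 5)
Your proposal follows essentially the same route as the paper's proof: induction over the nested bodies $K_i$ driven by the volume fact $\mathrm{Vol}(K_i)\le e\,\mathrm{Vol}(K_{i-1})$ and the containment $K_i\subseteq(1+\nicefrac1n)K_{i-1}$, coupling the approximately uniform samples from Lemma \ref{lemma:CombinedHitRunSpeedy} to exact i.i.d.\ draws so Lemma \ref{lemma:rounding1} applies, and a per-pass success probability $\ge\frac12$ for the ``While'' loop. The only (inessential) deviation is that you bound the $n^{-3}$-boundary layer of $\tilde K_i$ by a homothety/shrinking argument, whereas the paper uses the surface-area-to-volume inequality from \cite{kannan1997random}; both give the same $O(n^{-2})$ bound.
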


\begin{proof}
Recall the definitions from Algorithm \ref{alg:rounding} where
\be
\tilde{K}_{i-1}:= \begin{cases} B &\textrm{ for }i=1\\
 \hat{\Sigma}_{i}^{-\frac{1}{2}}(K_{i-1}-  \hat{\mu}_{i})&\textrm{ for } i \geq 2,
 \end{cases} 
\ee
and $\hat{K}_i := \hat{\Sigma}_i^{-\frac{1}{2}} (K_i - \hat{\mu}_i)$ for all $i\in \mathbb{N}$.

We prove this theorem by induction:

\textbf{Inductive assumption:} Suppose that $\tilde{K}_{i-1} = \hat{\Sigma}_{i}^{-\frac{1}{2}}(K_{i-1}-  \hat{\mu}_{i})$  is 2-isotropic with $\|\mu_{\tilde{K}_{i-1}}\|_2 \leq \frac{1}{20}$,  and that $B(0, \frac{1}{4}) \subseteq \tilde{K}_{i-1}$.

\textbf{Base case:} 
Since for $i=1$ $\tilde{K}_{i-1}=B$ is 2-isotropic, we must have that $\hat{K}_i$ is $4 e \leq 15$-isotropic (see the inductive case for why this is true). 
 Therefore $\hat{K}_i$ is 15-isotropic and contains the ball $B(0, \frac{1}{4})$ (since it contains $\hat{K}_{i} \supseteq \tilde{K}_{i-1} = B \supseteq B(0, \frac{1}{4})$).

Then by Lemma \ref{lemma:CombinedHitRunSpeedy} we have that the points $Z_1,\ldots, Z_p$ are independent, and are each a TV distance at most $\frac{\epsilon}{p^2}$ from the uniform distribution on $\hat{K}_i$. \footnote{See the inductive case for why we get a bound for the uniform distribution on $\hat{K}_i$ even though the Markov chain is on $\hat{K}_i \cap \rho B$ for $\rho= 20 \sqrt{n}\log(\frac{40 n^2}{\epsilon})B$.}

Therefore, by Lemma \ref{lemma:rounding1} we have that $\tilde{K}_i$ is 2-isotropic with probability at least $1- \frac{\epsilon}{p}$ (since we can couple $Z_1,\ldots, Z_p$ to independent random vectors $\hat{Z}_1,\ldots, \hat{Z}_p$ which are exactly uniformly distributed on $\hat{K}_i$, such that $\mathbb{P}(X_i = \hat{Z}_i) \geq 1-\frac{\epsilon}{p^2}$, and apply Lemma \ref{lemma:rounding1} to these vectors).

\textbf{Inductive case: Showing that $\tilde{K}_{i-1}$ being in 2-isotropic position implies that $\tilde{K}_{i}$ is in 2-isotropic position.}
Since $K_{i-1} \subseteq K_{i}$ and  $\mathrm{Vol}(K_{i}) \leq e\mathrm{Vol}(K_{i-1})$ we have $\Sigma_{K_{i}} \succcurlyeq \frac{\mathrm{Vol}(K_{i-1})}{\mathrm{Vol}(K_{i})}\Sigma_{K_{i-1}} \succcurlyeq \frac{1}{e}\Sigma_{K_{i-1}}$.

Moreover, since $K$ is convex and $0 \in K$, we have 
\be
K_{i} \subseteq (1+\nicefrac{1}{n})K_{i-1} \qquad \textrm{ and  } \qquad \mathrm{Vol}(K_{i}) \geq \frac{1}{e}\mathrm{Vol}\left((1+\nicefrac{1}{n})K_{i-1}\right).
\ee 
 Hence, we have 
 \be
 \Sigma_{K_{i}} \preccurlyeq \frac{\mathrm{Vol}((1+\nicefrac{1}{n})K_{i-1})}{\mathrm{Vol}(K_{i})} \Sigma_{(1+\nicefrac{1}{n})K_{i-1}} = \frac{\mathrm{Vol}((1+\nicefrac{1}{n})K_{i-1})}{\mathrm{Vol}(K_{i})} (1+\nicefrac{1}{n})^2\Sigma_{K_{i-1}} \preccurlyeq 4e \Sigma_{K_{i-1}}.
 \ee

\noindent Therefore, we have that
\be
\frac{1}{e}\Sigma_{K_{i-1}} \preccurlyeq \Sigma_{K_{i}} \preccurlyeq 4e \Sigma_{K_{i-1}},
\ee
and hence that
\be
\frac{1}{e} \frac{1}{4}I_n \preccurlyeq \frac{1}{e}\Sigma_{\tilde{K}_{i-1}} \preccurlyeq \Sigma_{\hat{K}_{i}} \preccurlyeq 2e \Sigma_{\tilde{K}_{i-1}} \preccurlyeq 16e I_n.
\ee
Therefore we have shown that the fact that $\tilde{K}_{i-1}$ is 2-isotropic implies that $\hat{K}_i$ is $4\sqrt{e} \leq 15$-isotropic. 
 Therefore $\hat{K}_i$ is 15-isotropic and contains the ball $B(0, \frac{1}{4})$ (since it contains $\hat{K}_{i} \supseteq \tilde{K}_{i-1} \supseteq B(0, \frac{1}{4})$).

We now show that the centers of mass of $\hat{K}_i$ and $\tilde{K}_{i-1}$ are a distance at most roughly $\sqrt{n}$ apart. Suppose (towards a contradiction) that $\|\mu_{\hat{K}_i} -\mu_{\tilde{K}_{i-1}}\|_2> 10\sqrt{n} \log(\frac{40n}{\epsilon})$. 
 By Lemma 24 in \cite{lee2017eldan}, $1-\frac{\epsilon}{40n}$ of the volume of the convex body $\hat{K}_i$ is inside the ball of radius $2\sqrt{n} \log(\frac{40n}{\epsilon})$ with center at $\mu_{\hat{K}_i}$. 
  Hence, if the assumption $\|\mu_{\hat{K}_i} -\mu_{\tilde{K}_{i-1}}\|_2> 10\sqrt{n} \log(\frac{40n}{\epsilon})$ were true, we would have (for $\epsilon < 0.1$) that a nonzero portion of the volume of $\hat{K}_i$ is a distance of at least $40 n \sqrt{n}$ from $\mu_{\hat{K}_i}$ (since $\frac{\mathrm{Vol}(\hat{K}_i)}{\mathrm{Vol}(\tilde{K}_{i-1})} \leq e$ and $\hat{K}_{i} \supseteq \tilde{K}_{i-1}$). 
   This is a contradiction since the convex body $\hat{K}_i$ is entirely contained in a ball of radius $15n$ because it is $15$-isotropic.  Hence by contradiction we have that 
   \be
   \|\mu_{\hat{K}_i} -\mu_{\tilde{K}_{i-1}}\|_2\leq 10\sqrt{n} \log \left(\frac{40n}{\epsilon}\right).
   \ee

\noindent By inductive assumption we have that $\|\mu_{\tilde{K}_{i-1}}\|_2 \leq \frac{1}{5}$, and hence that 
\be
\|\mu_{\hat{K}_i}\|_2 \leq \|\mu_{\hat{K}_i} -\mu_{\tilde{K}_{i-1}}\|_2 + \|\mu_{\tilde{K}_{i-1}}\|_2 \leq 12\sqrt{n} \log\left(\frac{40n}{\epsilon}\right).
\ee
  By Lemma 24 in \cite{lee2017eldan}, $1-\frac{\epsilon}{40n^2 p^2}$ of the volume of the convex body $\hat{K}_{i}$ is contained in a ball of radius  $2\sqrt{n} \log(\frac{40n^2 p^2}{\epsilon})$ centered at $\mu_{\hat{K}_i}$.  Hence, since $\hat{K}_{i}$ being 15-isotropic implies that it is contained in a ball of radius $15n$, we have that $\hat{K}_{i}^\dagger := \hat{K}_{i} \cap 20 \sqrt{n}\log(\frac{40n^2 p^2}{\epsilon})B$ is 30-isotropic.  Moreover, by Lemma 24 in \cite{lee2017eldan} and the fact that $\|\mu_{\hat{K}_i}\|_2 \leq 12\sqrt{n} \log(\frac{40n}{\epsilon})$, we have 
  $$
  \mathrm{Vol}(\hat{K}_{i}^\dagger) \geq \left(1-\frac{\epsilon}{2 p^2}\right)\mathrm{Vol}(\hat{K}_{i}).
  $$
    Therefore, since the rejection step in the ``While" loop of Algorithm \ref{alg:rounding} ensures that $X_0$ is in the $n^{-3}$-interior of $\tilde{K}_i \cap 20 \sqrt{n}\log(\frac{40 n^2}{\epsilon})B$, by Lemma \ref{lemma:CombinedHitRunSpeedy} we have that the points $Z_1,\ldots, Z_p$ are independent, and are each a TV distance at most $\frac{\epsilon}{p^2}$ from the uniform distribution on $\hat{K}_i$.

Therefore, by Lemma \ref{lemma:rounding1} we have that $\tilde{K}_i$ is 2-isotropic and $\|\mu_{\tilde{K}_i}\|_2 \leq \frac{1}{20}$ with probability at least $1- \frac{\epsilon}{p}$ (Since we can couple $Z_1,\ldots, Z_p$ to independent random vectors $\hat{Z}_1,\ldots, \hat{Z}_p$ which are exactly uniformly distributed on $\hat{K}_i$, such that $\mathbb{P}(Z_k = \hat{Z}_k) \geq 1-\frac{\epsilon}{p^2}$, and apply Lemma \ref{lemma:rounding1} to these vectors.  Therefore, we have $Z_1,\ldots, Z_p$ independent with $\|\mathcal{L}(Z_k) - \pi_{\tilde{K}_{i}}\|_{\mathrm{TV}} \leq \frac{\epsilon}{10p^2 \log(\frac{R}{r})}$ for all $k \in [p]$).

\textbf{Bounding the number of iterations of the ``While" loop"}
First, we bound $\mathrm{dist}(\hat{X}_0', \partial [\tilde{K}_i \cap \rho B])$ for $\rho = 20 \sqrt{n}\log(\frac{40 n^2}{\epsilon})$. 
 Since $\tilde{K}_{i}$ is 2-isotropic, it contains a ball of radius $\frac{1}{\sqrt{2}}$. 
  Therefore, by a statement in the proof of Corollary 4.6 in \cite{kannan1997random}, we have that 
  \be
  \mathrm{Vol}_{n-1}(\partial \tilde{K}_{i}) \leq \frac{n}{\nicefrac{1}{\sqrt{2}}} \mathrm{Vol}(\partial \tilde{K}_{i}).
  \ee 
   Therefore, we have that 
   \be
   \mathbb{P}_{Y \sim \mathrm{unif}(\tilde{K}_{i})}(\mathrm{dist}(Y, \partial \tilde{K}_{i}) < n^{-3}) \leq \frac{n^{-3}\mathrm{Vol}_{n-1}(\partial \tilde{K}_{i})}{\mathrm{Vol}(\partial \tilde{K}_{i})} \leq n^{-2} \sqrt{2}.
   \ee

\noindent Moreover, by Lemma 24 in \cite{lee2017eldan}, $1-\frac{\epsilon}{10n^2 p^2}$ of the volume of the convex body $\tilde{K}_{i}$ is contained in $B(\mu_{\tilde{K}_{i}}, \frac{\rho}{2})$, and since $\tilde{K}_{i}$ is in 2-isotropic position, $\|\mu_{\tilde{K}_{i}}\|_2 \leq \frac{1}{5}$. 
 Thus, $1-\frac{\epsilon}{10n^2 p^2}$ of the volume of the convex body $\tilde{K}_{i}$ is contained in the ball $B(0, \rho)$.

Now, since $\hat{X}_0$ and $Z_1$ are generated by the ball walk with the same starting point and parameters, we have 
\be
\|\mathcal{L}(\hat{X}_0') - \pi_{\tilde{K}_{i}}\|_{\mathrm{TV}}  = \|\mathcal{L}(\hat{X}_0) - \pi_{\hat{K}_{i}}\|_{\mathrm{TV}} = \|\mathcal{L}(Z_1) - \pi_{\hat{K}_{i}}\|_{\mathrm{TV}} \leq \frac{\epsilon}{p^2}.
\ee 
 Therefore, we have 
 \be
 \mathbb{P}\left(\mathrm{dist}\left(\hat{X}_0', \partial [\tilde{K}_{i} \cap \rho B]\right) < n^{-3}\right) \leq n^{-2} \sqrt{2} + \frac{\epsilon}{10n^2 p^2}+ \|\mathcal{L}(Z_k) - \pi_{\tilde{K}_{i}}\|_{\mathrm{TV}} \leq n^{-2} \sqrt{2} + \frac{\epsilon}{10n^2 p^2} + \frac{\epsilon}{p^2}.
 \ee 
  But $\hat{X}_0' \equiv \hat{X}_0'^k$ is generated independently at each iteration $k$ of the while loop, implying that 
  \be
  \mathbb{P}\left(\min_{j \leq k} \mathrm{dist}\left(\hat{X}_0'^j, \partial [\tilde{K}_{i} \cap \rho B]\right) \leq n^{-3}\right) \leq \left(n^{-2} \sqrt{2} + \frac{\epsilon}{10n^2 p^2} + \frac{\epsilon}{p^2}\right)^k \leq 2^{-k}.
  \ee 
   Therefore, the expected number of iterations of the While loop is bounded by $\sum_{k=1}^\infty 2^{-(k-1)} \leq 2$.
\end{proof}

\subsection{Bounding the expected running time of Algorithm \ref{alg:rounding}}\label{sec:ExpectedRunningTimeRounding}
In this section we bound the expected frequency at which Algorithm \ref{alg:rounding} checks any given inequalities.
From now on we set the parameter $\alpha$ of Algorithm \ref{alg:BallModified} to be $\alpha = 4\log\left(\frac{2 np  i_{\mathrm{max}}}{\epsilon} \right)$.

\begin{lemma} \label{lemma:ExpectedRuntime}
Suppose that, at each step of Algorithm \ref{alg:rounding}, $X_0$ is a $\beta$-warm start with respect to both the uniform distribution and the speedy distribution. 
 Fix  $b^{\star}\leq \frac{i_{\mathrm{max}}}{40 \mathcal{I}}$. 
  Then with probability at least $\frac{8}{10}$ the total number of inequality checks made during the first $b^{\star}$ times that Algorithm \ref{alg:rounding} with $\mathrm{Modified}= \mathrm{ON}$ invokes the ball walk Markov chain is $\leq 40 \beta b^{\star}\mathcal{I} \times m\left[16 n^{-1} \gamma +  \frac{32 \gamma}{n} \times \log(\nicefrac{n}{\gamma}) + \frac{\epsilon}{np}\right]$.
\end{lemma}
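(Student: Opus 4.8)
The plan is to write the total number of inequality checks as a double sum over the $b^\star$ invocations of the ball walk and over the $m$ inequalities, bound its expectation, and then convert to a high‑probability statement by Markov's inequality. For the $\ell$‑th invocation (one run of the ball walk to accumulate $\mathcal I$ proper steps), let $N^{(\ell)}$ be its number of proper$+$improper steps and $N_j^{(\ell)}$ the number of times $(A_j,b_j)$ is checked during it, so the total is $C=\sum_{\ell=1}^{b^\star}\sum_{j=1}^m N_j^{(\ell)}$; the $m$ checks from the initialization of each invocation add only $b^\star m$, which is negligible against the target since $\mathcal I\cdot(\gamma/n)=\mathrm{polylog}\cdot n^{1.5}\gg 1$. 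I would bound $\mathbb E[C]$ by combining (i) a bound on the expected length $\mathbb E[N^{(\ell)}]$ of an invocation, from the average‑local‑conductance machinery, and (ii) the per‑invocation frequency bound of Lemma~\ref{lemma:Frequency}, which controls $\mathbb E[N_j^{(\ell)}]$ in terms of $N^{(\ell)}$.

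For (i): each invocation starts from a point that is $\beta$‑warm with respect to the speedy distribution, so Lemma~\ref{lemma:NonproperSteps}, after paying a factor $\beta$ for the warm start, gives $\mathbb E[N^{(\ell)}]\le 2\beta\mathcal I/\lambda$; with $r=\tfrac1{10}$ and $\eta=\tfrac1{30\sqrt{n\log(n/\epsilon)}}$, Lemma~\ref{lemma:AverageConductance} gives $\lambda\ge 1-\tfrac{\eta\sqrt n}{2r}=1-O(1/\sqrt{\log(n/\epsilon)})\ge\tfrac12$, hence $\mathbb E[N^{(\ell)}]\le 4\beta\mathcal I$ and $\mathbb E[N_{\mathrm{tot}}]:=\sum_\ell\mathbb E[N^{(\ell)}]\le 4\beta b^\star\mathcal I\le\tfrac1{10}\beta\,i_{\mathrm{max}}$ by the hypothesis $b^\star\le i_{\mathrm{max}}/(40\mathcal I)$. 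For (ii): the uniform distribution on $K$ is stationary for the full ball walk, so a $\beta$‑warm start with respect to the uniform distribution remains $\beta$‑warm at every step of an invocation, and Lemma~\ref{lemma:Frequency} applies to each invocation; the parameter choice $\alpha=4\log(2np\,i_{\mathrm{max}}/\epsilon)$ matches the hypothesis of Lemma~\ref{lemma:Frequency} with $\hat\epsilon=\epsilon/(np)$, yielding for each $j$ a bound of the form $\mathbb E[N_j^{(\ell)}]\lesssim \mathbb E[N^{(\ell)}]\beta\big(16n^{-1}\gamma+\tfrac{32\gamma}{n}\log(n/\gamma)+\tfrac{\epsilon}{np}\big)$. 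Summing over $j\in[m]$ and $\ell\in[b^\star]$ and inserting $\mathbb E[N_{\mathrm{tot}}]\le 4\beta b^\star\mathcal I$ gives $\mathbb E[C]\lesssim \beta^2 b^\star\mathcal I\,m\big(16n^{-1}\gamma+\tfrac{32\gamma}{n}\log(n/\gamma)+\tfrac{\epsilon}{np}\big)$, with the $\beta^2$ absorbed into $\beta$ since $\beta=O(1)$ in our setting; two applications of Markov's inequality, each costing probability $\tfrac1{10}$ — one to guarantee $N_{\mathrm{tot}}\le 40\beta b^\star\mathcal I\le i_{\mathrm{max}}$ so the run does not exhaust the budget, and one to control $C$ on that event — then give the claimed bound with probability at least $\tfrac8{10}$.

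The main obstacle is step (ii): Lemma~\ref{lemma:Frequency} bounds the checking frequency for a run of a \emph{deterministic} number of steps, whereas the length $N^{(\ell)}$ of an invocation is a random stopping time (the hitting time of $\mathcal I$ proper steps) correlated with the per‑step quantities $1/w_j(i)$ that decide whether $(A_j,b_j)$ is checked, so the interchange $\mathbb E\big[\sum_{i\le N^{(\ell)}}\tfrac1{w_j(i)}\big]\le \mathbb E[N^{(\ell)}]\cdot\sup_i\mathbb E[\tfrac1{w_j(i)}]$ is not immediate. The clean way I would handle this is to first use Markov on $N_{\mathrm{tot}}$ to pass to the event $\{N_{\mathrm{tot}}\le M\}$ with $M:=40\beta b^\star\mathcal I$, on which the concatenation of all $b^\star$ invocations makes at most the \emph{fixed} number $M$ of ball walk steps; since $1/w_j(i)$ is $\sigma(X_i)$‑measurable and $\mathcal L(X_i)$ is $\beta$‑warm with respect to the uniform distribution at every step $i$ (warmness is preserved within each invocation and re‑established at each restart), summing the marginal bound $\mathbb E[1/w_j(i)]\le 2n^{-1}\gamma\beta+\tfrac{4\gamma}{n}\beta\log(n/\gamma)$ from \eqref{eq:waiting6} over $i\le M$ and $j\in[m]$, together with the key inequality $N_j\le 8\sum_i 1/w_j(i)$ on the event $G$ and the correction $\mathbb P(G^c)\le b^\star\beta\hat\epsilon$ from Lemma~\ref{thm:failure}, bounds $\mathbb E[C\,\mathbbm 1\{N_{\mathrm{tot}}\le M\}]$; a final Markov step gives the high‑probability conclusion. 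The remaining work — bookkeeping the factor $40$, the powers of $\beta$, and verifying that the $G^c$‑correction and the initialization cost are dominated by the stated bracket — is routine but must be carried out carefully.
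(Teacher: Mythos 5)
Your proposal is correct and follows essentially the same route as the paper's proof: bound the expected total number of proper$+$improper steps via Lemmas \ref{lemma:NonproperSteps} and \ref{lemma:AverageConductance}, bound the expected checking frequency via Lemma \ref{lemma:Frequency} restricted to the event that the step budget $i_{\mathrm{max}}$ is not exhausted, and apply Markov's inequality twice (each costing probability $\frac{1}{10}$) before combining. Your explicit handling of the stopping-time/correlation issue and of the $\beta$ bookkeeping is, if anything, slightly more careful than the paper's, which presents the same two-Markov argument in terms of the product $\mathcal{S}^{\dagger}\mathcal{F}^{\dagger}$.
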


\begin{proof}
Let $\mathcal{S}^k$ be the number of proper+improper steps for the $k$th Markov chain in Algorithm \ref{alg:rounding}, and let $\mathcal{F}_j^k$ be the frequency at which the $k$th Markov chain checks the inequality $(A_j,b_j)$.

Let $\mathcal{S}^\dagger := \sum_{k=1}^{b^{\star}} \mathcal{S}^k$ be the sum of the steps for the first $b^{\star}$ runs of the Markov chain in Algorithm \ref{alg:rounding}, and let $ \mathcal{F}^\dagger := \frac{1}{b^{\star}}\sum_{k=1}^{b^{\star}} \sum_{j=1}^m \mathcal{F}_j^k$ be the frequency at which Algorithm \ref{alg:rounding} checks any inequality.

Now, if $Y_0$ is a $\beta$-warm start for the speedy distribution, by Lemmas \ref{lemma:NonproperSteps} and \ref{lemma:AverageConductance} we have
\be
\mathbb{E}[\mathcal{\mathcal{S}}^{\dagger}] \leq \beta \times 4 b^{\star}\mathcal{I}.
\ee
Hence, we have, by Markov's inequality, that $\mathcal{\mathcal{S}}^{\dagger} \leq 40 b^{\star}\mathcal{I} \leq i_{\mathrm{max}}$ with probability at least $\frac{9}{10}$.

Thus, by Lemma \ref{lemma:Frequency} we have 
\be
\mathbb{E}[\mathcal{F}^{\dagger} \mathbbm{1}\{\mathcal{S}^{\dagger} \leq i_{\mathrm{max}}\}]   \leq m \times \left[16n^{-1} \gamma \beta +  \frac{32 \gamma}{n} \times \beta \log\left(\nicefrac{n}{\gamma}\right) + \frac{\epsilon \beta}{n b^{\star}} \right].
\ee
Hence, by Markov's inequality we have with probability at least $\frac{9}{10}$ that 
\be
\mathcal{F}^{\dagger} \mathbbm{1}\{\mathcal{S}^{\dagger} \leq i_{\mathrm{max}}\} \leq  160 n^{-1} \gamma \beta +  \frac{320 \gamma}{n} \times \beta \log\left(\nicefrac{n}{\gamma}\right) + 10\frac{\beta \epsilon}{n b^{\star}}.
\ee
Therefore, with probability at least $\frac{8}{10}$ we have that both $\mathcal{\mathcal{S}}^{\dagger} \leq 40 \beta b^{\star}\mathcal{I} \leq i_{\mathrm{max}}$ and $\mathcal{F}^{\dagger} \leq 160n^{-1} \gamma \beta +  \frac{320 \gamma}{n} \times \beta \log\left(\nicefrac{n}{\gamma}\right) + 10\frac{\beta \epsilon}{n b^{\star}}$ (since $\mathcal{F}^{\dagger} \mathbbm{1}\{\mathcal{S}^{\dagger} \leq i_{\mathrm{max}}\} = \mathcal{F}^{\dagger}$ whenever $\mathcal{\mathcal{S}}^{\dagger} \leq 40 \beta b^{\star}\mathcal{I} \leq i_{\mathrm{max}}$).

Therefore, with probability at least $\frac{8}{10}$ we have that the total number of times Algorithm \ref{alg:BallModified} computes an inequality is
\be
\mathcal{\mathcal{S}}^{\dagger} \mathcal{F}^{\dagger} \leq  40 \beta b^{\star}\mathcal{I} \times m\left[16n^{-1} \gamma \beta +  \frac{32 \gamma}{n} \times \beta \log\left(\nicefrac{n}{\gamma}\right) + \frac{\beta \epsilon}{n b^{\star}}\right].
\ee
\end{proof}

\subsection{Bounding the warmness of the start in Algorithm \ref{alg:rounding} with respect to the speedy and uniform distributions}\label{sec:warmness}
\begin{lemma}\label{lemma:warmness}
With probability at least $\frac{6}{10} -\epsilon$ we have that, the total number of inequality checks for the duration of Algorithm \ref{alg:rounding} with $\mathrm{Modified}= \mathrm{ON}$, is at most $220 (2+p) i^\star \mathcal{I} \times m\left[16 n^{-1} \gamma +  \frac{32 \gamma}{n} \times \log\left(\nicefrac{n}{\gamma}\right) + \frac{\epsilon}{np}\right]$. 
 Moreover, if $\hat{\Sigma}_{i^\star}, \hat{\mu}_{i^\star}, X_0$ are the outputs of Algorithm \ref{alg:rounding}, we have that $X_0$ is O(1)-warm with respect to the uniform distribution on $\hat{\Sigma}_{i^\star}^{-\frac{1}{2}}(K - \hat{\mu}_{i^\star})$.
\end{lemma}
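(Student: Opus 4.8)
The plan is to establish the running hypothesis of Lemma~\ref{lemma:ExpectedRuntime} --- that at every invocation of Algorithm~\ref{alg:BallModified} inside Algorithm~\ref{alg:rounding} the initial point $X_0$ is $O(1)$-warm with respect to \emph{both} the uniform and the speedy distribution on the body $\hat K_i\cap\rho B$ it is run on (with $\rho=20\sqrt n\log(\tfrac{40n^2p^2}{\epsilon})$) --- and then to feed this into Lemma~\ref{lemma:ExpectedRuntime} together with a count of how often the ball walk is invoked. Throughout I condition on the probability-$(1-\epsilon)$ event $E$ supplied by Lemma~\ref{lemma:isotropic}: on $E$ every $\tilde K_{i-1}$ is $2$-isotropic, every $\hat K_i$ is $\le 15$-isotropic and contains $B(0,\tfrac14)$, each truncation $\hat K_i\cap\rho B$ (resp.\ $\tilde K_i\cap\rho B$) is $30$-isotropic, and the ``While'' loops terminate in expectation at most twice --- all of these facts having already been extracted in the proof of Lemma~\ref{lemma:isotropic}.

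The key point is that warmness does not accumulate across iterations. At iteration $i$ the incoming $X_0$ influences only the \emph{running time} of the ``While''-loop call to Algorithm~\ref{alg:BallModified}, not the law of its output: by Lemma~\ref{lemma:CombinedHitRunSpeedy} --- whose hypothesis is merely that $X_0$ lie in the $n^{-3}$-interior, which the previous ``While'' loop guarantees by construction (and the $n^{-3}$-interior of a subset $\tilde K_{i-1}\cap\rho B\subseteq\hat K_i\cap\rho B$ is contained in the $n^{-3}$-interior of the larger set), together with the $30$-isotropy and $B(0,\tfrac1{10})$-containment that hold on $E$ --- the generated point $\hat X_0$ is within TV distance $\tfrac{\epsilon}{p^2}$ of uniform on $\hat K_i\cap\rho B$, so its law is effectively reset each time. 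The affine re-coordinatization of Step~18 identifies the copy $\hat K_i$ of $K_i$ with the copy $\tilde K_i$ of $K_i$ and on $E$ is $O(1)$-bi-Lipschitz (its linear part is built from sample covariances of $O(1)$-isotropic bodies), so the image of $\hat K_i\cap\rho B$ still contains $\tilde K_i\cap\rho'B$ for some $\rho'=\Theta(\rho)$. Now invoke the volume-doubling bound: $\tilde K_i\subseteq\hat K_{i+1}$ (the same transformation applied to $K_i\subseteq K_{i+1}$) with $\mathrm{Vol}(K_{i+1})\le e\,\mathrm{Vol}(K_i)$, whence uniform on $\tilde K_i\cap\rho'B$ is $O(1)$-warm with respect to uniform on $\hat K_{i+1}\cap\rho B$. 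Coupling $\hat X_0$ to a genuinely uniform draw (agreement probability $\ge1-\tfrac{\epsilon}{p^2}$, exactly as in the proof of Lemma~\ref{lemma:isotropic}) and then conditioning on the $n^{-3}$-interior acceptance event of Step~19 --- which on $E$ has probability $\ge\tfrac12$ by the surface-area estimate in that proof, hence inflates warmness by at most a constant --- shows that $\mathcal L(X_0)$ at iteration $i+1$ is, up to a TV perturbation of size $O(\tfrac{\epsilon}{p^2})$, a $\beta$-warm distribution with an absolute constant $\beta=O(1)$, uniformly in $i$. The base case $i=1$ is immediate since $X_0$ is uniform on $B$ restricted to its $n^{-3}$-interior and $B\subseteq\hat K_1\subseteq(1+\tfrac1n)B\subseteq\rho B$.

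To reach the full hypothesis of Lemma~\ref{lemma:ExpectedRuntime} I also need $O(1)$-warmness with respect to the speedy distribution; this follows from the uniform-warmness via the identity $\hat\pi_C=\ell_C\,\pi_C/\lambda$ together with the bound $\lambda\ge\tfrac12$ on the average local conductance (Lemma~\ref{lemma:AverageConductance}) and the improper-step estimates of~\cite{kannan1997random} (Lemma~\ref{lemma:NonproperSteps}). The residual TV perturbation of size $O(\tfrac{\epsilon}{p^2})$ is harmless when Lemmas~\ref{lemma:Frequency} and~\ref{lemma:NonproperSteps} are applied, since those conclusions concern only expectations of the bounded functionals $F_j\in[0,1]$ and of step counts, which move by $O(\tfrac{\epsilon}{p^2})$ under such a perturbation --- this is precisely the source of the $\tfrac{\epsilon}{np}$ term. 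It remains to count invocations: Step~11 contributes exactly $p$ runs of Algorithm~\ref{alg:BallModified} per For-iteration, and Step~17 contributes the (random) number of ``While'' passes, which summed over the $i^\star-1$ For-iterations has expectation $\le2(i^\star-1)$ on $E$; by Markov the total number of invocations is $\le b^\star$ for a suitable $b^\star=\Theta(p\,i^\star)$ with probability $\ge\tfrac9{10}$. Choosing $i_{\mathrm{max}}$ so that $b^\star\mathcal I\le\tfrac{i_{\mathrm{max}}}{40}$ and plugging $b^\star$ and $\beta=O(1)$ into Lemma~\ref{lemma:ExpectedRuntime} (which then holds with probability $\ge\tfrac8{10}$), a union bound with $E$ gives the claimed inequality-check bound with probability $\ge\tfrac6{10}-\epsilon$. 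Finally, the assertion that the output $X_0$ is $O(1)$-warm with respect to the uniform distribution on $\hat\Sigma_{i^\star}^{-1/2}(K-\hat\mu_{i^\star})$ is exactly the $i=i^\star$ instance of the per-iteration warmness bound above, using $K_{i^\star}=K$ since $i^\star=n\log_2(\tfrac Rr)$.

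The main obstacle is the middle paragraph: carrying the absolute constant $\beta$ through the four successive distortions at each iteration --- the $\tfrac{\epsilon}{p^2}$ ball-walk TV error, the Step-18 re-coordinatization with its non-identity $O(1)$ linear part, the truncations by $\rho B$, and especially the conditioning on the $\Omega(1)$-but-not-overwhelming $n^{-3}$-interior event of Step~19 --- while keeping each step at the level of couplings and of expectations of bounded functionals, rather than appealing to a (false) claim of pointwise density closeness after a TV truncation. A secondary point requiring care is checking that the passage from uniform- to speedy-warmness does not cost more than a constant, which is where the average-conductance bound $\lambda\ge\tfrac12$ and the choice $\eta=\Theta(1/\sqrt{n\log(n/\epsilon)})$ are used.
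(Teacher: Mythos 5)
Your proposal follows essentially the same route as the paper's proof: bound the number of ball-walk invocations via the expected While-loop count from Lemma~\ref{lemma:isotropic} plus Markov, obtain $O(1)$-warmness of each starting point from the containment $\tilde K_i\subseteq\hat K_{i+1}$ together with $\mathrm{Vol}(\hat K_{i+1})\le e\,\mathrm{Vol}(\tilde K_i)$ (handling the TV error by coupling), convert uniform-warmness to speedy-warmness via the average local conductance bound, and feed the result into Lemma~\ref{lemma:ExpectedRuntime}. The only difference is that you track the distortions from the $n^{-3}$-interior conditioning and the $\rho B$ truncation more explicitly than the paper does, which is a reasonable tightening rather than a departure.
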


\begin{proof}
First, we bound the number of times Algorithm \ref{alg:BallModified} runs the ball walk. 
 By Lemma \ref{lemma:isotropic} we have that the expected number of iterations of the ``While" loop at each ``For" loop iteration of Algorithm \ref{alg:BallModified} is at most $2$. 
  Hence, the expected number of times the ball walk is run by Algorithm \ref{alg:BallModified} is at most $(2+p) i^\star$. 
   Therefore, by Markov's inequality, we have that with probability at least $\frac{9}{10}$ the ball walk is run by Algorithm \ref{alg:BallModified} no more than $10(2+p) i^\star$ times.

Next, we bound the speedy and uniform warmness. 
 The main observation of this section is that, for our step size, any distribution which is $\beta$-warm with respect to the uniform distribution is $2\beta$-warm with respect to the speedy distribution. 
  This is true since, by  Remark 4.12 in \cite{kannan1997random}, one can obtain speedy-distributed samples by starting with uniformly distributed samples, if we take one step of the ball walk starting at each sample, and reject the original point if and only if the ball walk step leaves the convex body. 
   By Lemma \ref{lemma:AverageConductance}, for our choice of step size $\eta$ the average acceptance probability $\lambda$ is at most $\frac{1}{2}$. 
    Therefore, any sample which is $\beta$-warm with respect to the uniform distribution must also be $2\beta$-warm with respect to the speedy distribution (since then $\hat{\pi}_K(x) = \pi_K(x) \frac{1}{\lambda}\mathbb{P}(\{\textrm{one step of ball walk starting at x is rejected}\})$) \footnote{Note, however, that the converse is not true: $\beta$-warm with respect to the speedy distribution does not imply $2\beta$ warm with respect to the uniform distribution.}. 
     Therefore, to apply Lemma \ref{lemma:ExpectedRuntime}, it is enough to show that the initial points in Algorithm \ref{alg:BallModified} are $\beta$-warm with respect to the uniform distribution.

Denote the value of $\hat{X}_0'$ at the $i$th iteration of the ``For" loop and k'th iteration of the ``While" loop of Algorithm  \ref{alg:rounding} by $\hat{X}_0' \equiv \hat{X}_0'^{i,k}$. 
  Next, we note that by the proof of Lemma \ref{lemma:isotropic} we have $\|\mathcal{L}(\hat{X}_0'^{i,k}) - \pi_{\tilde{K}_{i}}\|_{\mathrm{TV}}  \leq \frac{\epsilon}{10 p (i^{\star})^2}$ for each $i,k$.  Moreover (since we are using a fixed starting point) $\hat{X}_0'^1, \hat{X}_0'^2,\ldots$ are independent. 
   Thus, there exists a sequence of random variables $\{\tilde{X}_0^{i,k}\}_{i\in [i^\star], k \in [10(2+p) i^\star]}$ such that $\tilde{X}_0^{i,k} = \hat{X}_0'^{i,k}$ with probability at least $\frac{9}{10} -\epsilon$, where $\tilde{X}_0^{i,k} \sim \pi_{\tilde{K}_{i}}$. 
    Hence, $\tilde{X}_0^{i,k}$ are 1-warm with respect to the uniform distribution on $\tilde{K}_{i}$.

Recall from the proof of Lemma \ref{lemma:isotropic} that $K_{i} \subseteq K_{i+1}$ and  $\mathrm{Vol}(K_{i+1}) \leq e\mathrm{Vol}(K_{i})$. 
 Thus,
 \be
 \tilde{K}_{i} \subseteq \hat{K}_{i+1} \quad \textrm{  and } \quad \mathrm{Vol}(\hat{K}_{i+1}) \leq e \mathrm{Vol}(\tilde{K}_{i}),
 \ee
 since $\tilde{K}_{i} := \hat{\Sigma}_{i+1}^{-\frac{1}{2}}(K_{i}-  \hat{\mu}_{i+1})$ and $\hat{K}_{i+1} := \hat{\Sigma}_{i+1}^{-\frac{1}{2}} (K_{i+1} - \hat{\mu}_{i+1})$).

Therefore, the fact that $\tilde{X}_0^{i,k}$ is 1-warm with respect to the uniform distribution on $\tilde{K}_{i}$ implies that it is $e$-warm with respect to the uniform distribution on  $\hat{K}_{i+1}$.  Hence, we have that $\tilde{X}_0^{i,k}$ is $2e$-warm with respect to the speedy distribution on $\hat{K}_{i+1}$.

Hence, by Lemma \ref{lemma:ExpectedRuntime} we have that with probability at least $\frac{6}{10} -\epsilon$, the total number of inequality checks for the duration of Algorithm \ref{alg:rounding}, is at most $80 e(2+p) i^\star \mathcal{I} \times m\left[16 n^{-1} \gamma +  \frac{32 \gamma}{n} \times \log\left(\nicefrac{n}{\gamma}\right) + \frac{\epsilon}{np}\right]$.
\end{proof}

\subsection{Counting the running time of subroutines with negligible contributions to the running time}\label{sec:NegligibleContributions}
\paragraph{Sorting.}
First, we argue that the cost of sorting the $h_j$'s can be ignored.

Algorithm \ref{alg:BallModified} must sort the $m$ inequalities into $O(\frac{n}{\alpha \eta}) = O(n^{1.5})$ bins, since, if $K$ is $O(1)$-isotropic, the diameter of $K$ is $O(\sqrt{n})$. 
 This takes $O(m \log(n))$ time. 
 This only occurs once, at the start of the algorithm.  Since our bound for the entire algorithm is $O(m n^{4.5})$ and $m n^{4.5}>>m \log(n)$, we can ignore this cost.

At each step of the Markov chain, Algorithm \ref{alg:BallModified} must move all the bins over by 1 to ``add" $\frac{n}{\alpha \eta}$ to each bin, and select all $h_j$'s such that $h_j < \alpha \frac{\eta}{\sqrt{n}} \times i$. 
 This can be done in $O(1)$ operations by simply moving a ``pointer" at the bin for elements in $[\alpha \frac{\eta}{\sqrt{n}} \times (i-1), \alpha \frac{\eta}{\sqrt{n}} \times i]$ one bin to the right. 
  This is only done once every Markov chain step, and is negligible in comparison to the cost of computing a Markov chain step, implying that we can ignore this $O(1)$ cost.

After computing the new values for all the $h_j$'s that were selected, Algorithm \ref{alg:BallModified} must then sort these $h_j$s into the corresponding bins. 
 Since there are $O(n^{1.5})$ bins, this takes $O(\log(n))$ time for each selected $h_j$. 
  This is negligible in comparison to the cost of recomputing the value of each $h_j$, which is $O(n)$ (since we have to take an inner product). 
   Therefore we can ignore this cost.

\paragraph{Applying the linear transformation to $K_i$.}
Next, we argue that the cost of applying the linear transformation to put $K_i$ into isotropic position at each iteration of Algorithm \ref{alg:rounding} can be ignored. 
 This linear transformation requires computing $A \hat{\Sigma}_i^{\frac{1}{2}}$, which takes $O(mn^2)$ operations,  and computing $b -A \hat{\mu}_i$, which takes $O(mn)$ operations. 
  There are $i^\star= n\log(\frac{R}{r})$ iterations in Algorithm \ref{alg:rounding}, so applying the linear transformations contributes at most $O(mn^3 \log(\frac{R}{r}))$ arithmetic operations. 
   This is much smaller than our bound of $O(mn^{4.5} \log(\frac{R}{r}))$ on the number of operations, so we can ignore the cost of applying the linear transformations.

\subsection{Proof of main theorem for rounding} \label{sec:proofrounding}
We now consider the following procedure, where we run Algorithm \ref{alg:rounding} multiple times until it succeeds. 
 This allows us to put the polytope into isotropic position with a bound on the running time that holds with very high probability (as opposed to just holding in expectation).
\begin{algorithm}[H]
\caption{Rounding in bounded time} \label{alg:roundingBoundedTime}
\flushleft
\textbf{input:} $A \in \mathbb{R}^{m \times n}$, $b \in \mathbb{R}^m$\\
\textbf{input:} $r,R>0$ such that $r B\subseteq K \subseteq R B$, where $K := \{x \in \mathbb{R}^n : A x \leq b\}$\\
\textbf{input:} $p\in \mathbb{N}$, $\epsilon>0$\\
\textbf{input:} $\mathfrak{I}, i_{\mathrm{max}}\in \mathbb{N}$
\begin{algorithmic}[1]
\State Set $\mathrm{Success} = \mathrm{False}$
\While{$\mathrm{Success} = \mathrm{False}$}
\State  Run Algorithm \ref{alg:rounding} with $\mathrm{Modified}= \mathrm{ON}$ and with the above inputs until either it outputs $\hat{\Sigma}_{i^\star}, \hat{\mu}_{i^\star}$, $X_0$ or until it completes $\mathfrak{I}$ inequality checks.
\State If $\hat{\Sigma}_{i^\star}, \hat{\mu}_{i^\star}, X_0$ were obtained, output $\mathrm{Success} = \mathrm{True}$
\EndWhile\\
\textbf{output:} $\hat{\Sigma}_{i^\star}, \hat{\mu}_{i^\star}$, $X_0$
\end{algorithmic}
\end{algorithm}

\noindent
From now on we choose  $\mathfrak{I}, i_{\mathrm{max}} $ to satisfy
\be\label{eq:repeat}
\mathfrak{I} &=  2000 (2+p) i^\star \mathcal{I} \times m\left[16 n^{-1} \gamma +  \frac{32 \gamma}{n} \log\left(\nicefrac{n}{\gamma}\right) + \frac{6 \epsilon}{np}\right],\\
i_{\mathrm{max}} &= \left(2000 (2+p) i^\star \mathcal{I} \times m\left[16 n^{-1} \frac{20m\hat{\eta}}{\epsilon} +  \frac{640 m\hat{\eta}}{n\epsilon} \log\left(\frac{n \epsilon}{20m\hat{\eta}}\right) + \frac{\epsilon}{np}\right] \right)^2.
\ee
In particular, we have $i_{\mathrm{max}} \geq \mathfrak{I}$.

\begin{lemma}\label{lemmaRepeat}
Fix $\epsilon>0$. 
 Then with probability at least $1-2\epsilon\log(\frac{1}{\epsilon})$ Algorithm   \ref{alg:roundingBoundedTime}  outputs, after at most $\log(\frac{1}{\epsilon})$ calls to Algorithm \ref{alg:rounding}, an affine transformation $(\hat{\Sigma}_{i^\star}, \hat{\mu}_{i^\star})$ and a point $X_0$, for which $K^\dagger := \hat{\Sigma}_{i^\star}^{-\frac{1}{2}}(K- \hat{\mu}_{i^\star})$ is in 2-isotropic position. 
  Moreover we have that $X_0$ is $O(\log(\frac{1}{\epsilon}))$-warm with respect to the uniform distribution on $K^\dagger$.
\end{lemma}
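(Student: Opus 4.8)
This is a probability--amplification argument layered on top of the single--run analysis of Algorithm~\ref{alg:rounding} carried out in Lemmas~\ref{lemma:isotropic} and~\ref{lemma:warmness}. Set $L:=\log(\frac{1}{\epsilon})$. Algorithm~\ref{alg:roundingBoundedTime} runs mutually independent copies of Algorithm~\ref{alg:rounding} with $\mathrm{Modified}=\mathrm{ON}$ and the parameters $p,\mathcal I,\eta,\alpha,i_{\mathrm{max}},\mathfrak I$ fixed as above, stopping at the first copy that returns a triple $(\hat\Sigma_{i^\star},\hat\mu_{i^\star},X_0)$ rather than exhausting its budget of $\mathfrak I$ inequality checks. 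I will establish two statements about one such capped invocation: (a) it \emph{terminates within the budget} with probability at least a constant $c$ satisfying $1-c\le e^{-1}$; and (b) with probability at least $1-O(\epsilon)$ it \emph{either fails to terminate, or else returns a correct $2$-isotropic transformation together with an $O(1)$-warm $X_0$} --- that is, a terminating invocation is almost never wrong. Amplifying (a) and (b) over $L$ independent trials then yields the claim, after a routine rescaling of the failure parameters of the subroutines by an absolute constant.

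\textbf{Correctness of a terminating invocation (statement (b)).} Intersect three good events for a single invocation. First, by Lemma~\ref{thm:failure} with $\alpha$ as fixed in Section~\ref{sec:ExpectedRunningTimeRounding} (which meets that lemma's hypothesis with failure parameter $O(\epsilon)$), the trajectory produced with $\mathrm{Modified}=\mathrm{ON}$ coincides pathwise with the one produced with $\mathrm{Modified}=\mathrm{OFF}$ off an event of probability $O(\epsilon)$. Second, by Lemma~\ref{lemma:isotropic} the idealized run ($\mathrm{Modified}=\mathrm{OFF}$, $i_{\mathrm{max}}=\infty$) returns $\tilde K_{i^\star}=\hat\Sigma_{i^\star}^{-\frac12}(K-\hat\mu_{i^\star})$ in $2$-isotropic position, with every ``While'' loop running at most $2$ iterations in expectation, off an event of probability $\le\epsilon$. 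Third, by Lemmas~\ref{lemma:NonproperSteps} and~\ref{lemma:AverageConductance} (using that $\eta$ is chosen so the average local conductance is $\ge\frac12$) together with Markov's inequality --- for which the squared form of $i_{\mathrm{max}}$ in~\eqref{eq:repeat} supplies the slack --- the total number of (proper and improper) ball--walk steps over all of Algorithm~\ref{alg:rounding} is at most $i_{\mathrm{max}}$ off an event of probability $O(\epsilon)$, so on the good event the internal cutoff of Algorithm~\ref{alg:BallModified} never fires and no ``Failure'' is emitted. On the intersection $E$ of these three events, which has probability $\ge1-O(\epsilon)$, the $\mathrm{Modified}=\mathrm{ON}$ capped run is pathwise identical to the idealized run; hence if it terminates it returns a correct $2$-isotropic output, which is (b).

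\textbf{Termination, amplification, and warmness.} By the argument of Lemma~\ref{lemma:warmness} --- which bounds the total number of inequality checks performed by Algorithm~\ref{alg:rounding} with $\mathrm{Modified}=\mathrm{ON}$ --- run with the more generous budget $\mathfrak I$ of~\eqref{eq:repeat}, a single invocation performs at most $\mathfrak I$ inequality checks, hence terminates, with probability at least a constant; moreover the enlarged constants in~\eqref{eq:repeat} (e.g., the factor $2000$, roughly nine times those appearing in Lemmas~\ref{lemma:ExpectedRuntime} and~\ref{lemma:warmness}) are precisely chosen so that the Markov--inequality steps inside those proofs may be rerun with a thresholding constant large enough to push this probability strictly above $1-e^{-1}$; intersecting with the event $E$ above (and with the ``$X_0$ is $O(1)$-warm'' conclusion of Lemma~\ref{lemma:warmness}) still leaves probability at least a constant $c$ with $1-c\le e^{-1}$. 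Since the invocations run by Algorithm~\ref{alg:roundingBoundedTime} are mutually independent, the probability that none of the first $L$ terminates is at most $(1-c)^{L}\le(e^{-1})^{L}=e^{-L}\le\epsilon$. On the complement, Algorithm~\ref{alg:roundingBoundedTime} stops after at most $L$ calls; by (b) and a union bound over these $\le L$ invocations, the probability that the invocation it stops on returned a wrong output is at most $L\cdot O(\epsilon)=O(\epsilon\log(\tfrac{1}{\epsilon}))$. Hence with probability at least $1-2\epsilon\log(\frac{1}{\epsilon})$ Algorithm~\ref{alg:roundingBoundedTime} stops within $L=\log(\frac{1}{\epsilon})$ calls and outputs $(\hat\Sigma_{i^\star},\hat\mu_{i^\star},X_0)$ with $K^\dagger:=\hat\Sigma_{i^\star}^{-\frac12}(K-\hat\mu_{i^\star})$ in $2$-isotropic position; and since the selected invocation lies in the good event of Lemma~\ref{lemma:warmness} --- conditioned on an event of probability $\Omega(1)$, with at most $L$ invocations to union over --- the $O(1)$-warmness of Lemma~\ref{lemma:warmness} degrades to at most $O(\log(\frac{1}{\epsilon}))$-warmness of $X_0$ with respect to the uniform distribution on $K^\dagger$.

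\textbf{Main obstacle.} The delicate part is the ``pathwise identical'' step in the second paragraph: Lemma~\ref{lemma:isotropic} certifies correctness only for the \emph{idealized} run ($\mathrm{Modified}=\mathrm{OFF}$, $i_{\mathrm{max}}=\infty$), so one must verify that with the concrete (large, squared) value of $i_{\mathrm{max}}$ in~\eqref{eq:repeat} and the concrete $\alpha$, the actual capped $\mathrm{Modified}=\mathrm{ON}$ run never deviates from the idealized one except on an $O(\epsilon)$-event --- in particular that the ``Failure'' branch of Algorithm~\ref{alg:BallModified} is never triggered on the good event. A secondary nuisance is the constant bookkeeping certifying that the per-invocation termination probability provably exceeds $1-e^{-1}$ (so that $\log(\frac{1}{\epsilon})$ geometric trials drive the non-termination probability below $\epsilon\log(\frac{1}{\epsilon})$), and tracking how the $O(1)$-warmness of $X_0$ survives the conditioning on success.
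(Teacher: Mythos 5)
Your proposal is correct and follows essentially the same route as the paper's proof: per-invocation termination within the budget $\mathfrak{I}$ with constant probability (via Lemma \ref{lemma:warmness}), correctness of a terminating invocation via coincidence with the idealized run of Lemma \ref{lemma:isotropic}, and a union bound over the at most $\log(\frac{1}{\epsilon})$ independent trials. Your write-up is in fact more explicit than the paper's about the $\mathrm{Modified}=\mathrm{ON}/\mathrm{OFF}$ coupling and the constant bookkeeping needed to drive the per-run termination probability high enough for the geometric amplification.
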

\begin{proof}
First, we note that by Lemma \ref{lemma:warmness} we have that the ``While" loop of Algorithm \ref{alg:roundingBoundedTime} completes at least one run after $\log(\frac{1}{\epsilon})$ runs with probability at least $1- 2^{-\log(\frac{1}{\epsilon})} \geq 1- \epsilon$. 
 Moreover, by Lemmas \ref{lemma:isotropic} and \ref{lemma:warmness}, if Algorithm \ref{alg:rounding} were allowed to keep running even after it uses up its alotted number of arithmetic operations and ball walk steps, it would return an affine transformation which puts $K$ into 2-isotropic position, and an $O(1)$-warm start for the uniform distribution on this affine transformation of $K$, with probability at least $1-\epsilon$ at each run. 
  Therefore, if we do stop Algorithm \ref{alg:rounding} after its alotted number of arithmetic operations and ball walk steps, after $\log(\frac{1}{\epsilon})$ steps we obtain, with probability at least $1-\epsilon-\epsilon \log(\frac{1}{\epsilon})$, an affine transformation $(\hat{\Sigma}_{i^\star}, \hat{\mu}_{i^\star})$ for which $K^\dagger := \hat{\Sigma}_{i^\star}^{-\frac{1}{2}}(K- \hat{\mu}_{i^\star})$ is in 2-isotropic position and an $O(\log(\frac{1}{\epsilon}))$-warm start $X_0$ for the uniform distribution on $K^\dagger$.
\end{proof}
Fix $p= n \times c \log^2(\frac{1}{\epsilon}) \log^2(n)$. Recall the shorthand notation $\hat{\eta} := \frac{1}{10}\eta \sqrt{n}$, and $\gamma:=10\alpha \hat{\eta}$,  and that we have fixed the parameters $p$, $\mathcal{I}$, $\eta$, $\alpha$ of Algorithm \ref{alg:BallModified} as follows:
\begin{itemize}
\item $\mathcal{I} = c_2 n^{2} 20 \sqrt{n}\log\left(\frac{40n^2 p^2}{\epsilon}\right) \left(\log \log 20 \sqrt{n}\log\left(\frac{40n^2 p^2}{\epsilon}\right)\right) \log^3\left(\frac{np^2}{\epsilon}\right) \log \log\left(\frac{R}{r}\right)$,
\item $\eta = \frac{1}{30\sqrt{n\log(\nicefrac{n}{\epsilon})}}$,
\item $\alpha = 4\log\left(\frac{2 np  i_{\mathrm{max}}}{\epsilon} \right)$.
\end{itemize}
Also recall that we have fixed $\mathfrak{I}$ and $i_{\mathrm{max}}$ in Equation \ref{eq:repeat}, and that $i^\star$ is set to $i^\star= n\log_2(\frac{R}{r})$ in Algorithm \ref{alg:rounding}.
\begin{theorem}[\bf Version of Main Theorem (Th. \ref{thm:main_intro}) specific to our algorithm] \label{thm:main_AlgorithmSpecific}
Fix $\epsilon>0$. 
 With probability at least $1-2\epsilon\log(\frac{1}{\epsilon})$ Algorithm \ref{alg:roundingBoundedTime} outputs an affine transformation $(\hat{\Sigma}_{i^\star}, \hat{\mu}_{i^\star})$ and a point $X_0$ for which $K^\dagger := \hat{\Sigma}_{i^\star}^{-\frac{1}{2}}(K- \hat{\mu}_{i^\star})$ is in 2-isotropic position, and $X_0$ is $O(\log(\frac{1}{\epsilon}))$-warm with respect to the uniform distribution on $K^\dagger$, in  $\tilde{O}(mn^{4.5} \log^9(\frac{1}{\epsilon}) \log^9(\frac{R}{r}))$ arithmetic operations. 
\end{theorem}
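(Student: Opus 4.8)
The plan is to read the correctness statement off Lemma~\ref{lemmaRepeat} and then bound the number of arithmetic operations on (an intersection with) the same high-probability event, the dominant contribution coming from the inequality checks performed inside Algorithm~\ref{alg:BallModified}. First I would quote Lemma~\ref{lemmaRepeat} directly: with probability at least $1-2\epsilon\log(\frac1\epsilon)$, Algorithm~\ref{alg:roundingBoundedTime} halts after at most $\log(\frac1\epsilon)$ calls to Algorithm~\ref{alg:rounding} and outputs $(\hat\Sigma_{i^\star},\hat\mu_{i^\star},X_0)$ with $K^\dagger:=\hat\Sigma_{i^\star}^{-1/2}(K-\hat\mu_{i^\star})$ in $2$-isotropic position and $X_0$ that is $O(\log\frac1\epsilon)$-warm with respect to $\pi_{K^\dagger}$. (Lemma~\ref{lemmaRepeat} already folds in the $\mathrm{Modified}=\mathrm{OFF}$ correctness analysis of Lemma~\ref{lemma:isotropic} together with the coupling of the $\mathrm{Modified}=\mathrm{ON}$ and $\mathrm{Modified}=\mathrm{OFF}$ runs from Lemma~\ref{thm:failure}, which holds because $\alpha=4\log(\frac{2np\,i_{\mathrm{max}}}{\epsilon})$ is taken large enough.) So only the operation count remains.

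For the operation count, I would isolate the dominant term, the inequality checks. A single check of a constraint $(A_j,b_j)$ is just the inner product $A_jX_i$, i.e.\ $O(n)$ arithmetic operations. By Equation~\ref{eq:repeat} each call to Algorithm~\ref{alg:rounding} is cut off after $\mathfrak{I}$ checks, and $\mathfrak{I}$ is a fixed constant multiple of the bound $220(2+p)i^\star\mathcal{I}\times m[16n^{-1}\gamma+\frac{32\gamma}{n}\log(\nicefrac{n}{\gamma})+\frac{\epsilon}{np}]$ supplied by Lemma~\ref{lemma:warmness}; hence with the probability asserted there a run actually finishes inside its budget, which is also what makes the outer ``While'' loop of Algorithm~\ref{alg:roundingBoundedTime} stop after $\le\log\frac1\epsilon$ rounds with high probability. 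I would then substitute the fixed parameters: $p=\tilde{O}(n)$, $i^\star=n\log_2(\frac Rr)$, $\mathcal{I}=\tilde{O}(n^{2.5})$, and, since $\eta=\Theta(1/\sqrt{n\log(n/\epsilon)})$ makes $\hat\eta=\Theta(1/\sqrt{\log(n/\epsilon)})$ and $\alpha=\tilde{O}(1)$ so that $\gamma=10\alpha\hat\eta=\tilde{O}(1)$, the bracket is $\tilde{O}(1/n)$; together these give $\mathfrak{I}=\tilde{O}(mn^{3.5})$. Multiplying by the $\le\log\frac1\epsilon$ runs and by the $O(n)$ cost of each check yields $\tilde{O}(mn^{4.5})$ operations for all the inequality checks, and pushing the nested polylogarithmic factors from $p,\mathcal{I},\alpha,i_{\mathrm{max}}$ and the repetitions through the estimate produces the stated $\log^9(\frac1\epsilon)\log^9(\frac Rr)$.

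Next I would verify that all other costs are of lower order. Each ball walk step carries $O(n)$ overhead independent of the checks --- drawing $\xi_i\sim\mathrm{unif}(B(0,1))$, forming $X_i+\eta\xi_i$, the test $\|\cdot\|_2>\rho$, and the $O(1)$ bin-pointer move of Section~\ref{sec:NegligibleContributions} --- and by Lemma~\ref{lemma:ExpectedRuntime} (which uses Lemmas~\ref{lemma:NonproperSteps} and~\ref{lemma:AverageConductance}) the total number of proper-plus-improper steps in a full run of Algorithm~\ref{alg:rounding} is $\tilde{O}((2+p)i^\star\mathcal{I})=\tilde{O}(n^{4.5})$ with high probability, so this overhead is $\tilde{O}(n^{5.5})=\tilde{O}(n\cdot n^{4.5})\le\tilde{O}(mn^{4.5})$ because $m\ge n+1$ for a bounded full-dimensional polytope. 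The rejection tests ``$Z_k\in K$?'' cost $O(mn)$ each and occur $\tilde{O}(n^2)$ times, i.e.\ $\tilde{O}(mn^3)$; and by Section~\ref{sec:NegligibleContributions} the sorting/bin maintenance ($O(m\log n)$ once plus $O(\log n)$ per re-sorted $h_j$, dominated by the $O(n)$ recomputation) and the applications of the affine maps $A\hat\Sigma_i^{1/2}$ and $b-A\hat\mu_i$ ($O(mn^2)$ per iteration over $n\log(\frac Rr)$ iterations, i.e.\ $\tilde{O}(mn^3)$) are all $\ll mn^{4.5}$. A union bound over this $O(1)$ collection of events --- the two conclusions of Lemma~\ref{lemmaRepeat}, the inequality-check bound of Lemma~\ref{lemma:warmness}, and the Markov-inequality bound on the step count --- together with a constant rescaling of $\epsilon$ then gives the theorem.

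The step I expect to be the main obstacle is not conceptual but the bookkeeping: one must check that the closed-form choices of $\mathcal{I},\eta,\alpha,p$ above and of $\mathfrak{I},i_{\mathrm{max}}$ in Equation~\ref{eq:repeat} are mutually consistent --- in particular that $i_{\mathrm{max}}\ge\mathfrak{I}$, that the cutoff is generous enough that Algorithm~\ref{alg:BallModified} essentially never returns ``Failure'' on the good event, that the warm-start hypotheses of Lemmas~\ref{lemma:ExpectedRuntime} and~\ref{lemma:warmness} are met at every iteration of Algorithm~\ref{alg:rounding}, and that the nested logarithms genuinely collapse into exponent $9$ --- since all the probabilistic content is already contained in Lemmas~\ref{thm:failure}, \ref{assumption:main}, \ref{lemma:Frequency}, \ref{lemma:CombinedHitRunSpeedy}, \ref{lemma:isotropic}, \ref{lemma:ExpectedRuntime}, \ref{lemma:warmness} and~\ref{lemmaRepeat}.
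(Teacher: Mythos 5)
Your proposal is correct and follows essentially the same route as the paper: invoke Lemma \ref{lemmaRepeat} for the success probability, the $2$-isotropic position of $K^\dagger$, the warmness of $X_0$, and the bound of $\mathfrak{I}\log(\frac{1}{\epsilon})$ inequality checks, then multiply by the $O(n)$ cost per check and substitute the fixed parameters to obtain $\tilde{O}(mn^{4.5}\log^9(\frac{1}{\epsilon})\log^9(\frac{R}{r}))$. Your additional accounting of the lower-order costs (per-step overhead, rejection tests, sorting, affine transformations) is consistent with what the paper delegates to Section \ref{sec:NegligibleContributions}.
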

\begin{proof}
By Lemma \ref{lemmaRepeat}, with probability at least $1-2\epsilon\log(\frac{1}{\epsilon})$ Algorithm \ref{alg:roundingBoundedTime} outputs an affine transformation $(\hat{\Sigma}_{i^\star}, \hat{\mu}_{i^\star})$ which puts $K$ into 2-isotropic position, in at most $\mathfrak{I}\log(\frac{1}{\epsilon})$ inequality checks. 
 The number of arithmetic operations for each inequality check is no more that $3n$, implying that the number of arithmetic operations performed by Algorithm \ref{alg:roundingBoundedTime} in this event is at most 
 \be
 3n \times \mathfrak{I}\log(\frac{1}{\epsilon}) = 6000 (2+p) i^\star \mathcal{I} \times m\left[16 \gamma +  32 \gamma\log(\nicefrac{n}{\gamma}) + \frac{6 \epsilon}{p}\right]\log(\frac{1}{\epsilon}) = \tilde{O}(mn^{4.5} \log^9(\frac{1}{\epsilon}) \log^9(\frac{R}{r})).
 \ee
\end{proof}

\section{Proofs of corollaries for volume computation and sampling}\label{sec:CorollaryProofs}
In this section we prove Corollaries \ref{thm:intro_volume} and \ref{cor:sampling} from the introduction.  First, we prove Corollary \ref{thm:intro_volume} on volume estimation.
\begin{proof}[Proof of Corollary \ref{thm:intro_volume}]
By Lemma \ref{thm:main_AlgorithmSpecific} we have that Algorithm \ref{alg:roundingBoundedTime} obtains an affine transformation $(\hat{\Sigma}_{i^\star}, \hat{\mu}_{i^\star})$ such that $K^\dagger := \hat{\Sigma}_{i^\star}^{-\frac{1}{2}}(K - \hat{\mu}_{i^\star})$ is in 2-isotropic position, in $\tilde{O}(mn^{4.5} \mathrm{polylog}(\frac{1}{\epsilon}, \frac{R}{r}))$ arithmetic operations. 
 Thus, $K^\dagger$ contains a ball $B(\mu_{K^\dagger},\frac{1}{2})$ where $\|\mu_{K^\dagger}\|_2 \leq \frac{1}{5}$, implying that $B(0,\frac{1}{4}) \subseteq K^\dagger$. 
  Thus $4K^\dagger$ contains the unit ball $B$ and is 8-isotropic. 
   Hence, if we then apply the volume algorithm \cite{cousins2015bypassing} to $4K^\dagger$, by Theorem 1.1 of \cite{cousins2015bypassing} we can compute the volume of $4K^\dagger$ up to a factor of $1+\delta$ in $\frac{mn^4}{\delta^2}\log^6(\frac{n}{\delta})\log(\frac{1}{\epsilon})$  arithmetic operations. 
    But 
    $$\mathrm{Vol}(K) = \frac{1}{4^n}\det(\hat{\Sigma}_{i^\star}^{\frac{1}{2}})\mathrm{Vol}(4K^\dagger).$$
     Since $\det(\hat{\Sigma}_{i^\star}^{\frac{1}{2}})$ can be computed in $O(n^3)$ arithmetic operations, we can compute with probability at least $1-\epsilon$ the volume of $K$ up to a factor of $1+\delta$ in $\tilde{O}(mn^{4.5} \mathrm{polylog}(\frac{1}{\epsilon}, \frac{R}{r}) + \frac{mn^4}{\delta^2}\log^6(\frac{n}{\delta})\log(\frac{1}{\epsilon}))$ arithmetic operations.
\end{proof}

\noindent We then prove Corollary \ref{cor:sampling} on sampling:
\begin{proof}[Proof of Corollary \ref{cor:sampling}]
By Theorem \ref{thm:main_AlgorithmSpecific} we have that Algorithm \ref{alg:roundingBoundedTime}  obtains an affine transformation $(\hat{\Sigma}_{i^\star}, \hat{\mu}_{i^\star})$ and a random vector $X_0$ such that $K^\dagger := \hat{\Sigma}_{i^\star}^{-\frac{1}{2}}(K - \hat{\mu}_{i^\star})$ is in 2-isotropic position and $X_0$ is $O(1)$-warm with respect to the uniform distribution on $K^\dagger$ in $\tilde{O}(mn^{4.5} \mathrm{polylog}(\frac{1}{\epsilon}, \frac{R}{r}))$ arithmetic operations.

Since $K^\dagger$ is in 2-isotropic position it is contained in $B(\mu_{K^\dagger},4n)$ where $\|\mu_{K^\dagger}\|_2 \leq \frac{1}{5}$. 
 Moreover, by Lemma 24 in \cite{lee2017eldan}, $1-\epsilon$ of the volume of the convex body $K^\dagger$ is contained in a ball of radius $2\sqrt{n} \log(\frac{1}{\epsilon})$ centered at $\mu_{K^\dagger}$.

Hence, by Theorem 1.1 of \cite{lovasz2006hit}, if we the run the hit-and-run Markov chain sampling algorithm for  $\tilde{O}(n^3 \log(\frac{1}{\epsilon}))$ Markov chain steps on the convex body $K^\dagger \cap B(0, 4\sqrt{n} \log(\frac{1}{\epsilon}))$ with O(1)-warm start $X_0$, we obtain a point $Z$ uniformly distributed on $K^\dagger  \cap B(0, 4\sqrt{n} \log(\frac{1}{\epsilon}))$ with TV error $\epsilon$.

But since $\mathrm{Vol}(K^\dagger \cap B(0, 4\sqrt{n} \log(\frac{1}{\epsilon}))) \geq (1-\epsilon)\mathrm{Vol}(K^\dagger)$, we have that  $Z$ is also uniformly distributed on $K^\dagger$ with TV error $2\epsilon$.  To obtain a point on $K$, we compute  $\tilde{Z}= \hat{\Sigma}_{i^\star}^{\frac{1}{2}}Z + \hat{\mu}_{i^\star}$. 
 Since  $Z$ is uniformly distributed on $K^\dagger$ with TV error $2\epsilon$, $\tilde{Z}$ must be uniformly distributed on $K$ with TV error $2\epsilon$ as well.

Therefore, we obtain a point $Z$ from the uniform distribution on $K$ with TV error $2\epsilon$, in a number $\tilde{O}(mn^{4.5} \mathrm{polylog}(\frac{1}{\epsilon}, \frac{R}{r}) + n^3 \log(\frac{1}{\epsilon})) = \tilde{O}(mn^{4.5}\mathrm{polylog}(\frac{1}{\epsilon}, \frac{R}{r}))$ of arithmetic operations.
\end{proof}

\newpage

\bibliographystyle{plain}
\bibliography{ManyConstraints}

\end{document}